\def\subsectiontitle{}
\def\subsubsectiontitle{}
\def\ps@pprintTitle{%
 \let\@oddhead\@empty
 \let\@evenhead\@empty
 \def\@oddfoot{\emph{Very preliminary version}\hfill\emph{This draft: \today}}%
 \let\@evenfoot\@oddfoot}
\newsavebox\extrainfobox
\newtheorem{prop}{Proposition}
\crefname{prop}{Proposition}{Propositions}
\newtheorem{thm}{Theorem}
\crefname{thm}{Theorem}{Theorems}
\newtheorem{cor}{Corollary}[thm]
\crefname{cor}{Corollary}{Corollaries}
\newtheorem{lem}{Lemma}
\crefname{lem}{Lemma}{Lemmas}
\newtheorem{ass}{Assumption}
\crefname{ass}{Assumption}{Assumptions}
\newtheorem{axiom}{Axiom}
\crefname{axiom}{Axiom}{Axioms}
\newtheorem{defi}{Definition}
\crefname{defi}{Definition}{Definitions}
\theoremstyle{remark}
\newtheorem*{remark}{Remark}
\theoremstyle{definition}
\crefname{eg}{Example}{Examples}
\crefname{problem}{Problem}{Problems}
\newcommand{\overbar}[1]{\mkern 1.5mu\overline{\mkern-1.5mu#1\mkern-1.5mu}\mkern 1.5mu}
\let\oldfootnote\footnote
\renewcommand\footnote[1]{\oldfootnote{\hspace{.4mm}#1}}
\renewenvironment{proof}[1][\proofname] {\par\pushQED{\qed}\normalfont\topsep6\p@\@plus6\p@\relax\trivlist\item[\hskip\labelsep\bfseries#1\@addpunct{.}]\ignorespaces}{\popQED\endtrivlist\@endpefalse}
\let\oldFootnote\footnote
\newcommand\nextToken\relax
\renewcommand\footnote[1]{%
    \oldFootnote{#1}\futurelet\nextToken\isFootnote}
\newcommand\isFootnote{%
    \ifx\footnote\nextToken\textsuperscript{,}\fi}
\def\d{\mathrm{d}}
\def\E{\mathbb{E}}
\def\rep{\rightharpoonup}
\begin{document}

\title{The Indirect Cost of Information} 
\author{Weijie Zhong\thanks{Stanford University, email: \url{weijie.zhong@stanford.edu}. This paper subsumes two earlier versions of the working paper circulated with title ``Dynamic information acquisition with linear waiting cost'' and ``Indirect information measure and dynamic learning''.}}
\date{March 2020}
\maketitle

This is a preliminary draft and the work is still in progress. Please bear with numerous mistakes.

\begin{abstract} 
	We study the \emph{indirect cost of information} from sequential information cost minimization. A key sub-additivity condition, together with monotonicity equivalently characterizes the class of indirect cost functions generated from any direct information cost. Adding an extra (uniform) posterior separability condition equivalently characterizes the indirect cost generated from any direct cost favoring incremental evidences. We also provide the necessary and sufficient condition when prior independent direct cost generates posterior separable indirect cost.\\
\end{abstract}
	
\section{Introduction}

Information plays a central role in economic activities and it is often endogenously acquired by decision maker, as opposed to being exogenously endowed. Therefore, it is important to understand the trade-off between the value of information and the cost of acquiring information. The value of information is often unambiguous in a single agent decision problem with uncertainty. It is measured by the increased expected utility from choosing optimal actions measurable to the signal realizations(see \cite{blackwell1951comparison}). However, there has been less consensus on the proper form of information cost. On the one hand, the production of information is often very complicated and the corresponding cost is hard to measure. On the other hand, the technology used to produce information differs across different environments. So even if we can identify the cost of information in one setup, it is hard to provide general predictions.  \par

In this paper, we try to overcome these difficulties by studying the \emph{indirect cost of information}. We impose minimal assumptions on the \emph{direct cost of information} --- the actual cost at which an information structure is generated in a specific environment --- to obtain maximal robustness of the model. Then we obtain predictive power of the model through \textbf{optimality}: the indirect cost of an information structure is defined as the lowest total expected cost of acquiring a sequence of information structures which eventually replicate the target information structure. With this framework, we seek to answer two questions: What is the implication of sequential cost minimization without any prior knowledge about the direct cost? What restrictions are we imposing on the direct information cost when we assume the indirect cost to be (uniformly) posterior separable?\par

The first main result of the paper is a complete characterization of the class of indirect information cost functions. It is equivalently characterized by two key axioms: \cref{axiom:1} (monotonicity), Blackwell more informative information structure costs weakly more. \cref{axiom:2} (sub-additivity), if acquiring an information structure can be decomposed into two steps, then the cost of it as a whole is weakly lower than the total cost of the two steps. One implication of the main result is the optimality of Poisson type signal process in dynamic learning. If an impatient decision maker pays a convex transformation of an indirect information cost on the information acquired each day to implement a target information structure, then the optimal dynamic signal process resembles a compound Poisson process.\par

The second main result of the paper is a complete characterization of direct information cost functions which generate (uniformly) posterior separable indirect costs. An indirect information cost is (uniformly) posterior separable \emph{iff} it minimizes a direct information cost which \emph{favors incremental evidences} --- acquiring a signal process in the form of a Gaussian diffusion process yields total direct cost weakly lower than the direct cost of acquiring a single information structure which contains exactly the same amount of information. The first result of the paper reveals that sequential optimization generates sub-additivity in the indirect information cost. The second result illustrates that such sub-additivity is generically strict. When, and only when acquiring incremental evidences that shift belief only locally always costs less than other ways to sequentially acquire information, the ``super-additivity'' in the direct information cost is sufficiently strong to offset all the sub-additivity in the indirect cost generated by the optimization process.\par

Then we study the implications of our main results. In \cref{ssec:prior:independent}, we link ``prior-independence'' and posterior separability through optimization. These two properties, known to be incompatible on a single information cost, can be reconciled as a prior independent direct cost can lead to posterior separable indirect cost. We provide its necessary and sufficient condition: when there are two states and direct cost is bounded below by the standard mutual information plus an extra unweighted average of Kullback-Leibler divergence from the signal's marginal distribution to conditional distribution.

\subsection{Related Literature}
The question of how to measure the informativeness of an experiment has been extensively studied. On the one hand, people have proposed and utilized various functional forms to measure the amount of information. \cite{sims2003implications} introduced a rational inattention framework at the center of which is an Entropy based mutual information cost of information. \cite{caplin2013rational} studied the implication of mutual information cost and generalized it to the class of (uniformly) posterior separable costs. \cite{hebert2016rational} proposed a class of ``neighborhood-based'' information costs. The Kullback-Leiber divergence (\cite{kullback1951information}), or other forms of divergences have also been applied to measure the changes in uncertainty induced by information. On the other hand, a growing number of papers seek to provide axiomatic foundations for information costs. \cite{caplin2015revealed} provides an behavioral axiomatic foundation for general information acquisition representation. \cite{NBERw23652} then provides behavior axioms for (uniformly) posterior separable information cost and mutual information cost. \cite{frankel2018quantifying} and \cite{mensch2018cardinal} each provide different set of axioms on the measure of information that characterize uniform posterior separability. \cite{pomatto2018cost} characterizes the class of information costs satisfying a constant marginal cost axiom.\par

The motivation of this paper is in the same vein as the papering microfounding information cost functions. However, analysis of this paper fundamentally differs from the other axiomatic frameworks in its focus on optimality. In other words, we take sequential optimality of information gathering as a ``meta-axiom'', and study the implications of that. As a result, we can avoid strong regulations on the direct information cost (for example, allowing it to be ``prior independent'', which has to be excluded in other rational inattention based frameworks). Meanwhile, optimality also allows the indirect information cost to be sufficiently different from the direct cost so that we can accommodate and provide sharp characterizations for popular cost forms like mutual information and divergence based costs.\par

Conceptually, this paper builds on \cite{morris2017wald}, which shows that the total cost of Wald sampling using only Gaussian signals is uniformly posterior separable. The second result of the paper essentially provides the maximal generalization of it: on the one hand, the posterior separability of total cost holds even when all other sampling technologies are available, as long as each individual signal is more expensive comparing to its Gaussian replication.\footnote{This statement is actually proved to be a corollary of \cite{morris2017wald}'s main result.} One the other hand, we also show that the aforementioned condition is necessary for the total cost of optimal Wald sampling to be posterior separable.\par

\vspace{1em}

The rest of the paper is organized as follows. In \cref{sec:model}, we introduce the main model and the key axioms. \cref{sec:characterization} provides the two main characterization results. \cref{sec:implication} studies further implications of the main results.

\section{Model}
\label{sec:model}

In this section, we first introduce the direct and indirect cost of information. Then we introduce the two key axioms that will be used in next section to characterize indirect cost of information.

\subsection{The direct and indirect cost of information}

Let $X$ be a finite state space and $x\in X$ be an unknown state of the word. $\Delta(X)$ denotes the space of probability measures on $X$ ($\Delta(X)$ is a $(|X|-1)$-dimensional real simplex). $\Delta^2(X)$ denotes the space of probability measures on $\Delta(X)$, equipped with the L\'evy-Prokhorov metric.\footnote{$\Delta(X)$ is a $(|X|-1)$-dimensional real simplex. $\Delta^2(X)$ is a complete, separable and compact metric space.} Any $\pi\in \Delta^2(X)$ is called an \emph{information structure}. The standard way to define an information structure is through defining the joint distribution of the signal and the unknown state. Each such joint distribution induces a measure of posterior beliefs according to Bayes rule, which on expectation equals the prior. In this paper, we (mostly) take the belief-based approach, for its notational simplicity and to avoid defining a class of ``arbitrary'' signal alphabets. $\forall \pi\in \Delta^2(X)$, it defines both the prior belief $\E_{\pi}[\nu]$, and the induced measure of posterior beliefs.\par

An information structure in $\Delta^2(X)$ is called \emph{generic} in $X$ if the prior belief has full support on $X$. $\forall X'\subset X$, let $\mathcal{I}(X')\subset \Delta^2(X)$ denote the subset of all information structures generic in $X'$. \footnote{It is easy to verify that $I(X')$ is relatively open in $\Delta^2(X')$.} Then $\Delta^2(X)$ can be partitioned into $\left\{\mathcal{I}(X')\right\}_{X'\subset X}$. \par

Given any information structure $\pi$, the direct cost of acquiring $\pi$ is defined by:

\begin{defi}
	A \textbf{direct information cost} function $C$ maps $\Delta^2(X)$ into $\mathbb{R}^+$.
	\label{defi:direct}
\end{defi}
Unless stated otherwise, we consider direct information cost functions $C$ that are bounded on $\Delta^2(X)$, continuous on each generic subspace $\mathcal{I}(X')$ \footnote{We allow continuity to break when information structure becomes non-generic to allow possible ``prior independent'' information cost functions, whose cost might jump when the belief on a state diminishes. Since there are only finitely many $\mathcal{I}(X')$ and $C$ is bounded, this weaker continuity is sufficient to guarantee integrabiltiy of $C$ w.r.t. probability measures on $\Delta^2(X)$.} and satisfies the following axiom:

\setcounter{axiom}{-1}
\begin{axiom}\label{axiom:0}
	$\forall \mu\in \Delta(X)$: $C(\delta_{\mu})=0$; $\forall P(\pi)\in \Delta^3(X)$ s.t. $\E_{\pi}[\nu]\equiv \mu_0$ on $\text{supp} (P)$ , $C(\E_P[\pi])\le\E_{P}[C(\pi)]$.
\end{axiom}

\cref{defi:direct,axiom:0} can be interpreted as that the direct cost $C(\pi)$ is already an ``indirect cost'' of a static optimization problem, where 1) for all alphabets and signal structures inducing the same measure of posterior beliefs, the one with minimal cost is chosen, 2) a trivial information structure is replaced by ``do-nothing'' which costs zero and 3) one can take a mixed strategy and randomize between different information structures. Since such static optimization is quite simple to implement, we just assume that these properties are carried directly by a direct information cost. Let $\mathcal{C}$ denote the set of all direct information cost functions satisfying continuity and \cref{axiom:0}. Other than continuity and \cref{axiom:0}, direct information cost can be completely flexible. In particular, it can be restricted to depend on only the actual signal structure (defined by conditional distribution of signals) but not the prior belief --- capturing the idea of a physical cost.\par

Now we consider the sequential minimization of information cost. $\forall \pi\in \Delta^2(X)$, we first define the belief processes that replicate the information structure $\pi$. 

\begin{defi}\label{defi:rep}
	$\forall \pi\in \Delta^2(X)$, A $2T$-period Markov chain $\langle \mu_t \rangle$ (define on $(\Omega,\mathcal{F},\mathcal{P})$) \textbf{replicates} $\pi$ if 1) $\mu_{2T}\sim \pi$, 2) $\E[\mu_{2t+1}|\mu_{2t}]=\mu_{2t}$ and 3) $\E[\mu_{2t-1}|\mu_{2t}]=\mu_{2t}$. 
\end{defi}

The first condition means $\langle \mu_t \rangle$ eventually replicates $\pi$. The second condition means from any period $2t$ to $2t+1$, information is acquired and belief is updated according to Bayes rule. The third condition means from any period $2t-1$ to $2t$, information is ``discarded'' and belief contracts. By defining $\langle \mu_t \rangle$ as in \cref{defi:rep}, it is implicitly assumed that 1) acquiring information only measurable to belief and time is sufficient (for optimality, which will be defined and proved later) and 2) information is freely disposable. The conditions in \cref{defi:rep} are denoted by $\langle \mu_t \rangle\rep \pi$. \par

The we define the indirect cost of information:

\begin{defi}
		\label{defi:indirect}
		$C^*:\Delta^2(X)\to \mathbb{R}^+$ is an \textbf{indirect information cost} function if $\exists$ direct information cost $C:\Delta^2(X)\to \mathbb{R}^+$ s.t $\forall \pi\in \Delta^2(X)$:
	\begin{align}
		C^*(\pi)= \inf_{\langle \mu_t \rangle}\ & \E\left[ \sum_{t=0}^{T-1}C(\pi_{2t}(\mu_{2t+1}|\mu_{2t})) \right] \label{eqn:1}\\
		\text{s.t.}\ &\langle \mu_t \rangle \rep \pi\notag
	\end{align}
	where $\pi_{2t}(\mu_{2t+1}|\mu_{2t})$ denotes the conditional distribution of $\mu_{2t+1}$ on $\mu_{2t}$.
\end{defi}

\cref{eqn:1} defines a program that searches for a cost minimizing belief process that replicates the target information structure $\pi$. In the objective function, only in even periods the cost of belief change is counted because by definition in even periods information is acquired and in odd periods information is freely discarded. The optimization over $\langle \mu_t \rangle$ implicitly allows $T$ to be chosen as well. The integrability of \cref{eqn:1} is in general not guaranteed, but for $C\in \mathcal{C}$ the expression is well defined. $\forall C\in \mathcal{C}$, \cref{eqn:1} is a well defined non-negative real number, and hence a map $\phi$ can be defined as $\phi(C)=C^*$ indicating that $\phi(C)$ is the indirect information cost derived from solving \cref{eqn:1} with direct information cost $C$. \par

In the appendix, I prove \cref{lem:markov} which shows that we can consider a more complicated sequential signal structure which is not necessarily Markovian. However, the terminal belief distribution can always be replicated by a process satisfying \cref{defi:rep} and with weakly lower total cost. Therefore, \cref{eqn:1} could be thought as the maximal flexibility benchmark.\par

We can partially generalize our main results to optimization problems less flexible than \cref{eqn:1}. First, the main characterization theorem (\cref{thm:1}) can also be established without free disposal of information, in which case $\langle \mu_t \rangle$ has to be a martingale. Then this restriction exactly takes away one axiom in the characterization. Second, the characterization for direct costs which generate uniformly posterior separable indirect cost (\cref{thm:ps}) generalizes to any map $\widetilde{\phi}$ between $\phi$ and the identity map. In other words, the characterization generalizes when arbitrary extra restrictions are placed on the optimization problem \eqref{eqn:1}.

\subsection{Axioms}
In this subsection, we present the two key axioms that characterize $\phi(\mathcal{C})$. 

\begin{axiom}[Blackwell monotonicity]\label{axiom:1}
	$\forall \pi,\pi'\in \Delta^2(X)$ and $\pi\le_{MPS}\pi'$: $C(\pi)\le C(\pi')$.
\end{axiom}

\cref{axiom:1} states that information cost $C$ is consistent with the Blackwell order (\cite{blackwell1951comparison}). A more informative information structure costs weakly more to acquire. It is straight forward that if we allow free disposal of information, then \cref{axiom:1} trivially holds.

\begin{axiom}[Subadditivity]\label{axiom:2}
	$\forall \pi'(\mu,\nu) \in \Delta(\Delta(X)^2)$, with marginal distribution $\pi(\mu)$ and $\pi''(\nu)$: 
	\begin{align*}
	C(\pi'')\le C(\pi)+\E_{\pi}[C(\pi'(\cdot |\mu))]
\end{align*}
\end{axiom}

\cref{axiom:2} is a sub-additivity condition. It states that if acquiring an information structure can be decomposed into two steps, then the cost of the information structure should be weakly lower than the total expected cost of the two steps. \par

The first axiom is a quite standard one, satisfied in most existing reduced-form information cost representations. The second axiom is less standard. The most well-known information cost that satisfies \cref{axiom:2} is the Entropy based mutual information cost (\cite{sims2003implications}) and the more general uniformly posterior separable information cost (\cite{caplin2013rational}). In fact, it will be shown in \cref{lem:add} that uniform posterior separability is equivalent to a stronger \emph{additivity}. In practice, it might not be easy to verify whether an information cost function satisfies \cref{axiom:2}. In this paper we do not view the axioms as falsifiable behavior predictions as in most decision theory studies. Instead, we take them as pure mathematical axioms that emerges from the ``meta-axiom'' of optimality and study their implications.

\section{Characterization}
\label{sec:characterization}
In this section, we present the two main characterization theorems. The first theorem characterizes the class of indirect information costs derived from any direct information cost. The second theorem characterizes the class of direct information costs which generate (uniformly) posterior separable indirect information costs.

\begin{thm}\label{thm:1}
	$\phi(\mathcal{C})=\left\{ C|C\in\mathcal{C} \text{ satisfying \cref{axiom:1,axiom:2}} \right\}$ and
		$\phi(C)
		\begin{cases}
			= C&\forall C\in \phi(\mathcal{C})\\
			< C&\forall C\not\in \phi(\mathcal{C})
		\end{cases}$.\footnote{$<$ is defined as the standard strict (not strong) partial order on function space.}
\end{thm}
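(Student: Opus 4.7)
The plan is to break the claim into four ingredients: (i) $\phi(C)\le C$ for every $C\in\mathcal{C}$; (ii) $\phi(C)$ itself satisfies \cref{axiom:1,axiom:2} and lies in $\mathcal{C}$; (iii) conversely, if $C$ already satisfies \cref{axiom:1,axiom:2} then $\phi(C)\ge C$; and (iv) if $C$ violates either axiom then the inequality in (i) is strict at some $\pi$. Combining (i) with (iii) gives $\phi(C)=C$ whenever $C$ obeys both axioms, which together with (ii) identifies $\phi(\mathcal{C})$ as exactly the subset of $\mathcal{C}$ satisfying \cref{axiom:1,axiom:2}; (iv) then delivers the strict comparison for the rest.

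Ingredients (i), (ii) and (iv) are short constructive arguments. For (i) I would use the trivial replication in which the belief stays at the prior through all but the last acquisition, which directly draws $\mu_{2T-1}\sim\pi$, followed by a no-op discarding, so that \cref{axiom:0} kills every step except the last acquisition and total cost is $C(\pi)$. For (ii), \cref{axiom:1} follows by taking an $\epsilon$-optimal replication of a Blackwell-more-informative $\pi'\ge_{MPS}\pi$ and appending a trivial acquisition together with a genuine discarding down to $\pi$ (which exists precisely because $\pi$ is a mean-preserving contraction of $\pi'$); \cref{axiom:2} follows by splicing an $\epsilon$-optimal replication of the first marginal $\pi$ with conditionally $\epsilon$-optimal replications of $\pi'(\cdot\,|\,\mu)$ starting from each realized terminal $\mu$, yielding a feasible replication of the second marginal $\pi''$ of total cost at most $\phi(C)(\pi)+\mathbb{E}_\pi[\phi(C)(\pi'(\cdot\,|\,\mu))]+2\epsilon$. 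The same concatenation delivers \cref{axiom:0} for $\phi(C)$, and a nearby-target comparison of near-optimal replications delivers the piecewise continuity needed to place $\phi(C)$ in $\mathcal{C}$. For (iv), a violation of \cref{axiom:1} at $\pi\le_{MPS}\pi'$ immediately yields $\phi(C)(\pi)\le C(\pi')<C(\pi)$ by the appending argument, and a violation of \cref{axiom:2} for some joint $\pi'$ yields $\phi(C)(\pi'')\le C(\pi)+\mathbb{E}_\pi[C(\pi'(\cdot\,|\,\mu))]<C(\pi'')$ by feeding the two-step replication into \cref{eqn:1}.

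The main step is (iii), which I would handle by a telescoping backward induction down any feasible replication. Fix $\langle\mu_t\rangle\rep\pi$, let $\pi_t$ denote the marginal law of $\mu_t$, and let $c_t=\mathbb{E}_{\pi_{2t}}[C(\pi_{2t}(\mu_{2t+1}\,|\,\mu_{2t}))]$ be the $t$-th acquisition cost. The discarding condition $\mathbb{E}[\mu_{2t-1}\,|\,\mu_{2t}]=\mu_{2t}$ makes $\pi_{2t-1}$ a mean-preserving spread of $\pi_{2t}$, so \cref{axiom:1} gives $C(\pi_{2t})\le C(\pi_{2t-1})$; the acquisition induces a joint law on $(\mu_{2t},\mu_{2t+1})$ with marginals $\pi_{2t},\pi_{2t+1}$ whose conditional is $\pi_{2t}(\cdot\,|\,\mu_{2t})$, so \cref{axiom:2} gives $C(\pi_{2t+1})\le C(\pi_{2t})+c_t$. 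Chaining these inequalities from $\pi_{2T}=\pi$ down to $\pi_0=\delta_{\mu_0}$ (with $C(\pi_0)=0$ by \cref{axiom:0}) produces $C(\pi)\le\sum_{t=0}^{T-1}c_t$, i.e.\ the total cost of the chosen replication, and taking the infimum delivers $C\le\phi(C)$. The main obstacle I anticipate is bookkeeping around the fact that the infimum in \cref{eqn:1} need not be attained, which is handled by $\epsilon$-optimal replications throughout plus a final $\epsilon\to 0$ limit; the only other genuinely analytic step is verifying continuity of $\phi(C)$ on each $\mathcal{I}(X')$ in (ii), which I would reduce to a uniform estimate on near-optimal replications under perturbations of the target.
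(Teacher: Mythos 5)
Your architecture is essentially the paper's: (i) plus (iii) is the easy direction, and your telescoping induction in (iii) is exactly the argument sketched right after the theorem statement (discards are free by \cref{axiom:1}, acquisitions chain by \cref{axiom:2}, anchor at $C(\delta_{\mu_0})=0$ by \cref{axiom:0}); this part is fine. Two remarks on (iv): it is redundant once (ii) is proved, since $\phi(C)=C$ would force $C\in\phi(\mathcal{C})$, so strictness follows from $\phi(C)\le C$ alone; and your direct two-step construction for an \cref{axiom:2} violation is only feasible when the violating kernel $\pi'(\cdot|\mu)$ has mean $\mu$, because acquisition steps in \cref{defi:rep} must be martingale steps (means are preserved along any replicating process), so "feeding the two-step replication into \cref{eqn:1}" does not work for an arbitrary joint law.

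The genuine gap is in (ii), at precisely the point the paper flags as the reason "the proof is slightly more involved than the intuition": your splicing argument for \cref{axiom:2} selects, for each realized terminal belief $\mu$ in the (possibly uncountable) support of $\pi$, an $\epsilon$-optimal replication of $\pi'(\cdot|\mu)$. But \cref{defi:rep} requires a single finite horizon $2T$, and the horizons $T_\mu$ of these conditional replications need not be uniformly bounded over $\mathrm{supp}(\pi)$; nor is it clear that the $\epsilon$-optimal processes can be chosen measurably in $\mu$. The same difficulty appears in your verification of \cref{axiom:0} for $\phi(C)$ (mixtures over a continuum of information structures). The paper resolves this by first establishing continuity of $C^*$ on each generic subspace — a substantial argument, not a closing technicality — and then discretizing: it partitions $\mathrm{supp}(\pi)$ into finitely many small cells (after trimming a low-probability non-generic region), replaces the conditional kernels by their cell averages $\widetilde{\pi}_i'$, so only finitely many conditional replications with a common maximal horizon are spliced, and controls the resulting error through uniform continuity of $C^*$ (together with \cref{lem:markov} to restore the Markov form). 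Your closing sentence about "a uniform estimate on near-optimal replications under perturbations of the target" is exactly the missing content, and it is an input to the subadditivity and mixture steps rather than a separate item; without it the splicing construction does not produce a feasible element of \cref{eqn:1}.
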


 \cref{thm:1} states that the class of monotonic and subadditive information costs is exactly the class of indirect information costs. It is quite straight forward that any monotonic and subadditive information cost $C$ is an indirect information cost. If we assume this cost $C$ itself being the direct information cost, then $C$ is unimprovable in \cref{eqn:1}: Take any $\langle \mu_t \rangle\rep \pi$, 1) it is always weakly improving not to discard any information at $t=1$ by \cref{axiom:1}, and hence $\mu_1=\mu_2$. 2) it is always weakly improving combining the information structure acquired at $t=0$ and the information structure acquired at $t=2$ by \cref{axiom:2}. Then recursively apply this argument to the finite process $\langle \mu_t \rangle$ and we can conclude that it is weakly better to acquire $\pi$ directly. Therefore, $C=\phi(C)$ and this also implies the second part of the result.\par

 The opposite direction is also very intuitively. If we can replicate each of $\pi$ and $\pi'(\nu|\mu)$ with sequential processes, then take the sequence replicating $\pi$ and append each sequence replicating $\pi'$ after that should replicate $\pi''$. However, the proof is slightly more involved than the intuition because we can not guarantee that the so constructed sequence has finite horizon.\footnote{Modifying the definition of replicating process to be infinite horizon creates more problems because it is not well-defined to append a process ``after'' an infinite-horizon process.} Besides, \cref{thm:1} also claims that $\phi$ is injective on $\mathcal{C}$, namely \cref{axiom:0} and continuity are preserved through sequential optimization. \par

 \cref{thm:1} implies that if a decision maker is facing an indirect information cost, then acquiring a large chunk of information should be weakly cheaper than acquiring small pieces and combine them together. However, in many realistic decision problems, acquiring a lot of information instantaneously might not be feasible/efficient. When a decision maker decomposes an information structure into asymptotically zero-cost pieces (measured by an indirect cost function), the following lemma illustrates the cheapest way to do it.

 \begin{lem}\label{lem:pos}
	$\forall C$ satisfying \cref{axiom:0,axiom:2}, $\forall \pi\in \Delta^2(X)$ and $\mu_0=\E_{\pi}[\nu]$, define $\pi_{\lambda}=\lambda\pi+(1-\lambda)\delta_{\mu_0}$, then $C(\pi_{\lambda})=\lambda C(\pi)$.
\end{lem}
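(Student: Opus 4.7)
The plan is to establish the identity by proving $C(\pi_\lambda) \le \lambda C(\pi)$ and $C(\pi_\lambda) \ge \lambda C(\pi)$ separately. For the upper bound, I would apply the mixing clause of \cref{axiom:0} to the two-point distribution $P = \lambda\delta_\pi + (1-\lambda)\delta_{\delta_{\mu_0}} \in \Delta^3(X)$. Its support consists of two information structures both having prior $\mu_0$, so the hypothesis of the axiom is satisfied; since $\E_P[\cdot] = \pi_\lambda$ and $C(\delta_{\mu_0}) = 0$, the axiom yields $C(\pi_\lambda) \le \lambda C(\pi)$ immediately.

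For the reverse inequality I would engineer a joint distribution whose first marginal is $\pi_\lambda$ and second marginal is $\pi$, then invoke \cref{axiom:2}. The natural candidate $J(\mu,\nu)$ is: with probability $\lambda$, draw $\nu \sim \pi$ and set $\mu = \nu$; with probability $1-\lambda$, set $\mu = \mu_0$ and independently draw $\nu \sim \pi$. By inspection the $\mu$-marginal is $\pi_\lambda$, the $\nu$-marginal is $\pi$, and each conditional $J(\cdot\,|\,\mu)$ has mean $\mu$ (so it is a legitimate Bayes-plausible update). Applying \cref{axiom:2} then gives $C(\pi) \le C(\pi_\lambda) + \E_{\pi_\lambda}[C(J(\cdot\,|\,\mu))]$.

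The main obstacle is evaluating the expectation on the right when $\mu_0$ happens to have positive $\pi$-mass, say $p := \pi(\{\mu_0\}) > 0$. For $\mu \ne \mu_0$ the conditional is trivially $\delta_\mu$ and contributes nothing by \cref{axiom:0}; but at $\mu = \mu_0$ the two cases overlap and $J(\cdot\,|\,\mu_0) = \pi_q$ with $q = (1-\lambda)/(\lambda p + 1-\lambda)$, so \cref{axiom:2} only delivers $C(\pi) \le C(\pi_\lambda) + (\lambda p + 1-\lambda)\,C(\pi_q)$. To close the loop, I would feed the already-proven upper bound back into this inequality: $C(\pi_q) \le q\,C(\pi)$ implies $(\lambda p + 1-\lambda)\,C(\pi_q) \le (1-\lambda)\,C(\pi)$, hence $\lambda C(\pi) \le C(\pi_\lambda)$, and the two inequalities pin down $C(\pi_\lambda) = \lambda C(\pi)$. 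This bootstrap, where the mixing half of \cref{axiom:0} is plugged back into the decomposition inequality from \cref{axiom:2}, is the only nontrivial step; in the generic case $p = 0$ one has $q = 1$ and the second application is not needed.
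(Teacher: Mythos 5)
Your proof is correct and follows essentially the same route as the paper: the upper bound comes from the mixing clause of \cref{axiom:0} applied to $\lambda\delta_{\pi}+(1-\lambda)\delta_{\delta_{\mu_0}}$, and the lower bound from \cref{axiom:2} applied to the decomposition ``first acquire $\pi_{\lambda}$, then complete to $\pi$ conditional on $\mu_0$,'' exactly as in the paper's argument. Your bootstrap for the atom case $\pi(\{\mu_0\})>0$ is in fact more careful than the paper's own proof, which tacitly treats the conditional at $\mu_0$ as $\pi$ itself (valid only when that mass is zero) and also ends with a typo, asserting $C(\pi_{\lambda})\ge C(\pi)$ where $C(\pi_{\lambda})\ge\lambda C(\pi)$ is meant.
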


\begin{proof}
	First, \cref{axiom:0} direct implies that $C(\pi_{\lambda})\le \lambda C(\pi)$. Second, define $\pi'(\nu|\mu_0)=\pi(\nu)$ and $\pi'(\nu|\mu)=\delta_{\mu}$ if $\mu\neq \mu_0$. Then $\E_{\pi}[\pi'(\nu|\mu)]=\pi$. \cref{axiom:0,axiom:2} implies that $C(\pi)\le C(\pi_{\lambda})+(1-\lambda)C(\pi)\implies C(\pi_{\lambda})\ge C(\pi)$.
\end{proof}

The property stated in \cref{lem:pos} is the third axiom in \cite{pomatto2018cost}, in which $\pi_{\lambda}$ is called an dilution of $\pi$. $\pi$ can be (approximately) replicated using $\pi_{\lambda}$ by running $\pi_{\lambda}$ until belief jumps away from prior $\mu_0$, and the total cost of the process is exactly $\frac{1}{\lambda}C(\pi_{\lambda})$. So \cref{lem:pos} is essentially stating that an indirect information cost is additive when an information structure is decomposed into its dilutions. Since cost is subadditive in all other kinds of decompositions, this implies that among all decompositions that incur cost $\lambda C(\pi)$ per period, the dilution $\pi_{\lambda}$ incurs the lowest total cost. In the continuous-time limit when the flow cost is converging to zero, the optimal decomposition resembles a compound Poisson process. With small probability, an information signal arrives and belief jumps according to $\pi$. Otherwise, belief stays at the prior.

\subsection{(Uniform) posterior-separability}

In this subsection, we restrict our attention to indirect information costs which are (uniformly) posterior separable. We adopt the notion of posterior-separable information cost and uniformly posterior-separable information cost from \cite{caplin2013rational} and \cite{NBERw23652}. $C$ is called \emph{posterior separable} if there exists non-negative (divergence) function $D(\nu||\mu)$ s.t. $C(\pi)=\E_{\pi}[D(\nu||\E_{\pi}[\nu])]$.\footnote{Within the scope of this paper, we only consider divergence $D(\nu||\mu)$ continuously differentiable in $\mu$ and twice differentiable in $\nu$ for technical convenience.} $C$ is called \emph{uniformly posterior separable} if there exists convex (potential) function $H(\mu)$ s.t. $C(\pi)=\E_{\pi}[H(\nu)]-H(\E_{\pi}[\nu])$.

\begin{lem}\label{lem:add}
	$C\in \mathcal{C}$ is uniformly posterior separable \emph{iff} $C$ satisfies \cref{axiom:1} and \emph{additivity}.
\end{lem}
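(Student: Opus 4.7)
The plan is to prove both implications by direct construction, with the convex potential of the uniformly posterior separable representation built from the cost of the fully revealing experiment at each prior.

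For the forward direction, suppose $C(\pi) = \E_\pi[H(\nu)] - H(\E_\pi[\nu])$ with $H$ convex. Axiom~\ref{axiom:1} is immediate: if $\pi \le_{MPS}\pi'$ they share a prior $\mu_0$, and convexity of $H$ gives $\E_\pi[H]\le \E_{\pi'}[H]$. For additivity, fix $\pi'(\mu,\nu)$ with marginals $\pi$ and $\pi''$; since $\pi'(\cdot\mid\mu)$ is an information structure with prior $\mu$, we have $\E_{\pi''}[\nu]=\E_\pi[\mu]=\mu_0$, and a line of algebra gives
\begin{align*}
C(\pi) + \E_\pi\!\left[C(\pi'(\cdot\mid\mu))\right]
  &= \bigl(\E_\pi[H(\mu)] - H(\mu_0)\bigr) + \E_\pi\bigl[\E[H(\nu)\mid\mu] - H(\mu)\bigr]\\
  &= \E_{\pi''}[H(\nu)] - H(\mu_0) = C(\pi'').
\end{align*}

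For the converse, assume $C\in\mathcal{C}$ satisfies Axiom~\ref{axiom:1} and additivity. For each $\mu\in\Delta(X)$ let $\pi^*_\mu\in\Delta^2(X)$ denote the fully revealing experiment from prior $\mu$, i.e.\ the distribution $\sum_x \mu(x)\,\delta_{\delta_x}$ on posteriors; define $\tilde H(\mu) = -C(\pi^*_\mu)$. Given any $\pi$ with prior $\mu_0$, consider the two-stage experiment that first realizes $\pi$ and then from each posterior $\nu$ performs $\pi^*_\nu$. By Bayes consistency, the resulting marginal over final posteriors is exactly $\pi^*_{\mu_0}$, so this is a legitimate decomposition in the sense of Axiom~\ref{axiom:2}. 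Applying additivity to it yields
\begin{equation*}
C(\pi^*_{\mu_0}) \;=\; C(\pi) + \E_\pi\!\left[C(\pi^*_\nu)\right],
\end{equation*}
which rearranges to $C(\pi) = \E_\pi[\tilde H(\nu)] - \tilde H(\mu_0)$, the required uniformly posterior separable form.

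It remains to show $\tilde H$ is convex. But non-negativity of $C$ (built into $\mathcal{C}$ via Axiom~\ref{axiom:0}) says $\tilde H(\mu_0)\le \E_\pi[\tilde H(\nu)]$ for every mean-$\mu_0$ measure $\pi$ on $\Delta(X)$, which is exactly Jensen's inequality and hence convexity of $\tilde H$ on each stratum $\Delta(X')$; continuity of $\tilde H$ on these strata is inherited from the assumed continuity of $C$ on each $\mathcal{I}(X')$ together with the continuous dependence $\mu\mapsto \pi^*_\mu$. Axiom~\ref{axiom:1} enters as a consistency check (and is in any case redundant here, since convexity of $\tilde H$ already recovers Blackwell monotonicity).

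The main subtlety I expect is bookkeeping for non-generic priors: $\pi^*_\mu$ lives in different strata as the support of $\mu$ shrinks, so one has to verify that the decomposition above is covered by Axiom~\ref{axiom:2} rather than by a limiting argument, and that the resulting $\tilde H$ is convex on all of $\Delta(X)$, not just in the relative interior. This is handled by noting that every $\nu$ in the support of any $\pi$ with prior $\mu_0$ has support contained in that of $\mu_0$, so the relevant $\pi^*_\nu$'s all live in the same stratum as $\pi^*_{\mu_0}$; globally extending $\tilde H$ requires running the same construction prior-by-prior and then invoking Axiom~\ref{axiom:0} (diluting toward degenerate beliefs) to glue the values across strata.
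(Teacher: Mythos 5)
Your proof is correct and follows essentially the same route as the paper: the converse defines the potential as $-C$ of the fully revealing experiment and applies additivity to the two-stage decomposition ``first $\pi$, then fully reveal from each posterior,'' using $\E_{\pi}[\pi^*_{\nu}]=\pi^*_{\E_{\pi}[\nu]}$, exactly as in the paper's proof. Your added verifications (convexity of $\tilde H$ from $C\ge 0$, and the stratum bookkeeping, which in fact needs no gluing since the identity holds on all of $\Delta^2(X)$ directly) just spell out details the paper leaves implicit.
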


\begin{proof}
	Uniform posterior separability $\implies$ additivity is trivial. Now we show the converse. $\forall \mu$, let $\overbar{\pi}_{\mu}$ be the fully revealing information structure whose support is $\left\{ \delta_{x} \right\}$. Define potential function $H(\mu)=-C(\overbar{\pi}_{\mu})$. Define $\widehat{C}(\pi)=\E_{\pi}[H(\nu)]-H(\E_{\pi}[\nu])$. Obviously, $\widehat{C}$ is uniformly posterior separable. $\widehat{C}(\pi)=C(\overbar{\pi}_{\E_{\pi}[\nu]})-\E_{\pi}[C(\overbar{\pi}_{\nu})]=C(\pi)$, because of $\E_{\pi}[\overbar{\pi}_{\nu}]= \overbar{\pi}_{\E_{\pi}[\nu]}$ and additivity. Therefore, $C\equiv\widehat{C}$.
\end{proof}

Given \cref{lem:add}, it is clear that the class of indirect information costs $\phi(\mathcal{C})$ and the class of uniformly posterior separable cost functions differ by exactly a ``super-additivity'' condition. In this section, we study what conditions on direct cost functions $\mathcal{C}$ give us this extra super-additivity on the indirect cost function.\par

To state the additional conditions on $\mathcal{C}$, we impose an extra technical assumption on the direct cost $C$, which allows us to approximate the information cost function using second-order Taylor expansion locally when the information structure is approximately uninformative.

\begin{ass}\label{ass:diff}
	There exists continuous symmetric matrix valued function $B(\mu)$ s.t. $\forall \mu_0\in \Delta(X)^o$, $\forall \epsilon$, $\exists \delta$ s.t. $\forall \pi\in \Delta^2(X)$ s.t. supp$(\pi)\subset B_{\delta}(\mu_0)$, let $\mu=\E_{\pi}[\nu]$:
	\begin{align}
		\left|C(\pi)-\E_{\pi}\left[ (\nu-\mu)B(\mu)(\nu-\mu) \right]\right|\le \epsilon\cdot\E_{\pi}[\|\nu-\mu\|^2]
		\label{eqn:diff}
	\end{align}
	where $B(\mu)$ is positive semi-definite*.\footnote{\label{fn:8}Defined as the bilinear form $y^TB(\mu)y\ge0$ for any $y\cdot \mathbf{1}=0$. This is equivalent to requiring $\widetilde{B}(\mu)=[I,-\mathbf{1}]\cdot B(\mu)\cdot[I,-\mathbf{1}]^T$ to be PSD. Notice that since $\nu-\mu$ always adds up to $0$, we are essentially considering a $|X|-1$-dimensional subspace, in which the bilinear form $\widetilde{B}(\mu)$ is uniquely pinned down. $B(\mu)$ is unique only up to a $f(\mu)\mathbf{1}\cdot\mathbf{1}^T$ term. } We say $B(\mu)$ locally characterizes $C(\pi)$. 
\end{ass}

\cref{ass:diff} states that when an information structure only shifts belief locally, then the cost of the information structure is approximated by the expectation of a bilinear form, which calculates the squared generalized distance of belief oscillation from prior belief to each posterior belief. \cref{ass:diff} might seem restrictive because i) there is no zero and first order term, and ii) the second order term is linearly separable in posterior beliefs. In later discussions, we show in \cref{lem:diff} that \cref{eqn:diff} is the consequence of a seemingly much weaker (local) twice differentiability w.r.t. the conditional signal distribution.

\begin{defi}\label{defi:FIE}
	$\forall C\in\mathcal{C}$ satisfying \cref{ass:1}, let $B(\mu)$ characterizes $C(\pi)$ locally. $C$ \textbf{favors incremental evidences} if 1) $B(\mu)$ is a Hessian matrix$^*$\footnote{Like in \cref{fn:8}, $\widetilde{B}(\mu)$, which is $B(\mu)$'s projection to $\mathbb{R}^{|X|-1}$, is a Hessian matrix of function $H$'s projection to $\mathbb{R}^{|X|-1}$. } of $H:\Delta(X)\mapsto \mathbb{R}$; and 2) $\forall \pi\in \Delta^2(X)$, $$C(\pi)\ge \E_{\pi}\left[ H(\nu) \right]-H(\E_{\pi}[\nu])$$
\end{defi}

We name the condition in \cref{defi:FIE} \emph{favoring incremental evidences} because of the following connection discovered by \cite{morris2017wald}: When $|X|=2$, any information structure can be replicated by a stopped Gaussian process with flow variance $\sigma^2(\mu)$. Suppose the direct cost of acquiring one more unit time of signals is $b(\mu)\sigma^2(\mu)$, then the total cost to replication $\pi$ is exactly $\E_{\pi}[H(\nu)]-H(\mu)$, where $H''(\mu)=b(\mu)$. \footnote{\cite{morris2017wald} uses a different notation. If we define $c(\mu)=b(\mu)\sigma^2(\mu)$ and $\phi(\mu)=H(\mu)$, then the statement is exactly Theorem 1 of \cite{morris2017wald}.} A reasonable conjecture is that if information is only acquired in the form of incremental evidences, which induces belief diffusion, and $B(\mu)$ is a Hessian matrix of function $H(\mu)$, then $\underline{C}(\pi)=\E_{\pi}[H(\nu)]-H(\mu)$ is the total cost to replicate the information structure $\pi$. Then naturally, $C$ ``favors incremental evidences'' if $\underline{C}$ is below $C$.\par

The previous conjecture, which is an extension of \cite{morris2017wald}, almost implies that if $C$ favors incremental evidences then $\phi(C)$ is $\underline{C}$ itself, which is uniformly posterior separable. In \cref{thm:ps}, we will show that not only this conjecture is true, but its converse is also true. 

\begin{ass}\label{ass:2}
	$\exists m>0$ s.t. $\forall \pi\in \Delta^2(X)$, $C(\pi)\ge m\cdot \E_{\pi}\left[ \|\nu-\E_{\pi}[\nu]\|^2 \right]$.
\end{ass}

\begin{thm}\label{thm:ps}
	$\forall C\in\mathcal{C}$ satisfying \cref{axiom:1,ass:diff,ass:2}: \center{$\phi(C)$ is (uniformly) posterior separable  $\iff$ $C$ favors incremental evidences}
\end{thm}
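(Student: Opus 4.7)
Plan. The plan is to prove the two directions of the iff separately; throughout write $\underline{C}_H(\pi) := \E_\pi[H(\nu)] - H(\E_\pi[\nu])$ for the UPS cost with convex potential $H$.

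For ($\Leftarrow$), assume $C$ favors incremental evidences with potential $H$, so $B=H''$ and $C\ge \underline{C}_H$ pointwise. The lower bound $\phi(C)\ge \underline{C}_H$ is almost immediate: $\underline{C}_H$ is UPS and hence by \cref{lem:add} satisfies \cref{axiom:1} and additivity, which implies \cref{axiom:2}, so by \cref{thm:1} it is a fixed point of $\phi$. Since $\phi$ is monotone in its direct-cost argument (lower integrand yields lower infimum), $C\ge\underline{C}_H$ gives $\phi(C)\ge\phi(\underline{C}_H)=\underline{C}_H$. For the matching upper bound I build an arbitrarily fine martingale decomposition of $\pi$: embed $\pi$ as the terminal law of a continuous $\Delta(X)$-valued martingale via a multidimensional Skorokhod-type construction, then discretize into $n$ steps $\pi_{2t}$. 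By \cref{ass:diff} each step's cost is bounded by its local $B(\mu_{2t})$-quadratic form plus an $\epsilon_n\cdot\E[\|\nu-\mu\|^2]$ error; summing over $t$ and sending $n\to\infty$, the leading quadratic sum converges via an It\^o-type identity (using $B=H''$) to $\underline{C}_H(\pi)$ up to the constant dictated by the convention in \cref{ass:diff}, while the error contribution vanishes. Combined with the lower bound, $\phi(C)$ coincides with a UPS cost.

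For ($\Rightarrow$), suppose $\phi(C)=\underline{C}_{\tilde H}$ is UPS with potential $\tilde H$. (If only PS is assumed, first use that $\phi(C)$ satisfies \cref{axiom:2} by \cref{thm:1}, which combined with the PS structure forces the divergence $D(\nu\|\mu)$ to be a Bregman divergence of a single convex $\tilde H$.) Condition~2 of \cref{defi:FIE}, $C\ge\underline{C}_{\tilde H}$, is then immediate from $C\ge\phi(C)$. For condition~1---that $B$ is a Hessian---I match local expansions on $\pi$ supported in $B_\delta(\mu_0)$: \cref{ass:diff} writes $C(\pi)$ as a $B(\mu_0)$-quadratic form, while Taylor expansion writes $\phi(C)(\pi)$ as a $\tfrac12\tilde H''(\mu_0)$-quadratic form. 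Doob's maximal inequality combined with \cref{ass:2} traps every near-optimal replication of such $\pi$ in a small neighborhood of $\mu_0$ with high probability; on these paths each step's cost is well-approximated by its local $B$-quadratic form, and the law of total variance collapses the step-wise sum back to $\E[(\nu-\mu_0)^T B(\mu_0)(\nu-\mu_0)]$, so that $\phi(C)(\pi)=C(\pi)+o(\delta^2)$. Equating the two leading quadratic expressions identifies $B$ as a constant multiple of $\tilde H''$, in particular a Hessian matrix.

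Main obstacle. The principal hurdle is the upper bound in ($\Leftarrow$) and the mirror-image local analysis in ($\Rightarrow$): constructing a fine finite-step martingale replication in arbitrary $|X|$ and controlling the accumulated local approximation errors uniformly as the discretization refines. In the $|X|=2$ case this is essentially the content of \cite{morris2017wald}; the general case requires a multidimensional Skorokhod-type embedding of $\pi$ together with a careful It\^o-style summation with uniform error control, so that one can match the discrete-time finite-horizon replications admitted by \cref{defi:rep} against the continuous-time It\^o identity.
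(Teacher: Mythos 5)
Your overall strategy for necessity coincides with the paper's (show that optimization preserves the local quadratic characterization of the cost and then identify $B$ with the Hessian of the potential, with condition~2 of \cref{defi:FIE} following from $C\ge\phi(C)$, and the PS-to-UPS reduction following the same logic as \cref{lem:divergence}), but the step you delegate to ``Doob's maximal inequality combined with \cref{ass:2}'' is exactly where the real work lies, and as stated it fails. A replicating process in the sense of \cref{defi:rep} is \emph{not} a martingale: the odd (free-disposal) periods only satisfy the reversed condition $\E[\mu_{2t-1}|\mu_{2t}]=\mu_{2t}$, so Doob's inequality does not apply to $\langle\mu_t\rangle$, and a near-optimal process can in principle acquire a lot of information, wander far from $\mu_0$, and then discard it, since disposal is costless in \eqref{eqn:1}. \cref{ass:2} only bounds the total squared movement of the \emph{acquisition} increments; to control the disposal increments one needs the assumed (uniformly) posterior separable form of $\phi(C)$ itself: telescoping $H$ along the path and using near-optimality gives the paper's bound \cref{eqn:thm2:1} on the total discarded variation, and only after regrouping the acquisition and disposal increments into a genuine martingale and combining with \cref{eqn:thm2:1} does one get the excursion bound \cref{eqn:thm2:2}. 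Relatedly, your ``law of total variance collapses the step-wise sum'' is only an exact identity for a martingale with constant kernel; with disposal steps it becomes a one-sided inequality, and the paper's argument is organized precisely around getting the correct direction of that inequality on the three groups of paths (never exit, first exit, already exited). Without an argument replacing \cref{eqn:thm2:1}--\cref{eqn:thm2:2}, the claim $\phi(C)(\pi)=C(\pi)+o(\E_\pi[\|\nu-\mu_0\|^2])$ for local $\pi$ is unsupported.

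On sufficiency your route is genuinely different from the paper's and heavier than necessary. The paper never needs a multidimensional Skorokhod embedding or an It\^o identity: it proves the upper bound $\phi(C)\le\underline{C}$ first for binary-support $\pi$ by an explicit one-dimensional random walk on the chord between the two posteriors (a finite-horizon process as required by \cref{defi:rep}, with uniform control of $B$ on the compact chord), and then handles general $\pi$ by approximating with finite support, peeling it into a chain of binary splits, and invoking subadditivity of $\phi(C)$ from \cref{thm:1}, with the potential terms telescoping exactly. If you insist on your construction, note two repairs you would need: a time-discretized continuous martingale does not have conditionally locally supported increments, so \cref{ass:diff} is not applicable step by step---you must discretize in space (exit times of small balls), which is what the paper's random walk does implicitly---and you must keep the path inside a compact subset of $\Delta(X)^o$ (e.g.\ the convex hull of $\mathrm{supp}(\pi)$) so that the approximation error and the continuity modulus of $B$ are uniform; near the boundary $B$ need not be bounded. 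The lower bound $\phi(C)\ge\underline{C}$ via monotonicity of $\phi$, \cref{lem:add} and \cref{thm:1} is correct and is the same as the paper's.
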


\cref{thm:ps} provides the complete characterization of indirect information cost functions that satisfies the (uniform) posterior separability condition. Such indirect information must be derived through sequentially minimizing a direct information cost which favors learning from incremental evidences. In other words, assuming learning from incremental evidences to be cheaper adds exactly the right amount of super-additivity to the indirect information cost. Notice that \cref{thm:ps} also shows that the optimization process wipes out all the differences between posterior separability and uniform posterior separability. In other words, a posterior separable indirect information cost is uniformly posterior separable.\par

The sufficiency of \cref{defi:FIE} is a generalization of \cite{morris2017wald}, which is already explained. The necessity is the non-trivial part of the theorem. The key step to prove necessity is to show that although sequential optimization in general reduces the costs of information structures, the cost of Gaussian-diffusion-like information structures that shift belief only locally remain invariant (in the asymptotic sense). When $\phi(C)$ is (uniformly) posterior separable, then cost for these asymptotically uninformative information structures becomes sufficient to characterize the indirect cost of all other information structures. \par

The previous discussion about necessity suggests that \cref{defi:FIE} is actually much more powerful than stated in \cref{thm:ps}, because \cref{thm:ps} only utilizes the invariance property of $\phi$ but not its fine details. In fact, $\forall \phi'$ between the identity map and $\phi$, $\phi'$ also preserves the cost for asymptotically uninformative information structures by squeeze theorem. Thus a direct corollary of \cref{thm:ps} is that:
\begin{cor}\label{cor:ps}
	$\forall C\in\mathcal{C}$ satisfying \cref{axiom:1,ass:diff,ass:2}, $\forall \phi\le \phi' \le \mathrm{Id}$:
	\begin{align*}
		\phi'(C) \text{ is uniformly posterior separable }\implies \text{ $C$ favors incremental evidences}
	\end{align*}
\end{cor}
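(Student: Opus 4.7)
The plan is to use the sandwich $\phi(C)\le \phi'(C)\le \mathrm{Id}(C)=C$ together with the characterization established in \cref{thm:ps} to force $\phi'(C)$ into the shape required by \cref{defi:FIE}. Write $\phi'(C)(\pi)=\E_\pi[H(\nu)]-H(\E_\pi[\nu])$ for some convex $H$; I need to verify the two clauses of \cref{defi:FIE}, namely that $B(\mu)$ is a Hessian$^*$ of $H$ and that $C$ dominates this separable expression pointwise.

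The pointwise domination is immediate from the upper half of the sandwich: since $\phi'\le\mathrm{Id}$, we get $C(\pi)\ge\phi'(C)(\pi)=\E_\pi[H(\nu)]-H(\E_\pi[\nu])$ for every $\pi$, which is exactly clause (2) of \cref{defi:FIE}. For the Hessian identification in clause (1), I would invoke the \emph{asymptotic invariance} that the necessity direction of \cref{thm:ps} establishes: for every interior prior $\mu$ and every $\epsilon>0$ there is $\delta>0$ such that, for all $\pi$ with $\mathrm{supp}(\pi)\subset B_\delta(\mu)$,
\[ \left|\phi(C)(\pi)-\E_\pi[(\nu-\mu)^T B(\mu)(\nu-\mu)]\right|\le \epsilon\,\E_\pi[\|\nu-\mu\|^2]. \]
By \cref{ass:diff}, the same bound holds with $C$ in place of $\phi(C)$. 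Squeezing $\phi'(C)$ between $\phi(C)$ and $C$ therefore shows that $\phi'(C)$ admits exactly the same local second-order characterization by $B(\mu)$. On the other hand, Taylor expanding the uniformly posterior separable expression $\phi'(C)(\pi)=\E_\pi[H(\nu)]-H(\mu)$ around $\mu$ — with the first-order term vanishing because $\E_\pi[\nu-\mu]=0$ — represents the same local quadratic form via the Hessian of $H$. Matching the two bilinear forms on the $(|X|-1)$-dimensional subspace orthogonal to $\mathbf{1}$ identifies $B(\mu)$ as a Hessian$^*$ of $H$ (up to the $f(\mu)\mathbf{1}\mathbf{1}^T$ ambiguity noted in \cref{fn:8}), yielding clause (1).

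The main obstacle is not in this corollary per se but in the asymptotic invariance fact that I borrow from the necessity half of \cref{thm:ps}; once that lemma is in hand, the corollary reduces to a routine squeeze plus second-order Taylor argument. This is precisely what the surrounding text means by saying the argument ``only utilizes the invariance property of $\phi$ but not its fine details'': any map wedged between $\phi$ and $\mathrm{Id}$ preserves the local second-order asymptotics, so the separable representation it admits must be generated by the same $B(\mu)$ that locally characterizes $C$, and consequently $C$ favors incremental evidences.
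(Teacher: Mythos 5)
Your overall route is the paper's route: squeeze $\phi(C)\le\phi'(C)\le C$, get clause (2) of \cref{defi:FIE} immediately from $\phi'\le\mathrm{Id}$, and get clause (1) by matching the local quadratic characterization of $\phi'(C)$ (obtained by squeezing) with the second-order expansion of its posterior-separable representation. The domination step and the Taylor-matching step are fine.

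The gap is in the input you ``borrow.'' The asymptotic invariance statement you invoke --- that $\phi(C)$ itself satisfies $\left|\phi(C)(\pi)-\E_\pi[(\nu-\mu)^TB(\mu)(\nu-\mu)]\right|\le\epsilon\,\E_\pi[\|\nu-\mu\|^2]$ for locally supported $\pi$ --- is not established in the paper as a freestanding lemma. In the necessity half of \cref{thm:ps} it is derived \emph{under the hypothesis that $\phi(C)$ is uniformly posterior separable}: the bound on discarded information (\cref{eqn:thm2:1}) is obtained by telescoping the potential $H$ of the posterior-separable cost along a near-optimal replicating process, and everything downstream (the partition into paths that stay in, first exit, or have exited $B_{\delta_1}(\mu_0)$) relies on it. In \cref{cor:ps} the whole point is that this hypothesis is dropped: only $\phi'(C)$ is assumed uniformly posterior separable, and nothing tells you $\phi(C)$ is. So citing the invariance of $\phi(C)$ as a known fact is circular (or at least unavailable). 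The repair --- which is what the paper's one-line justification implicitly intends --- is to note that the necessity argument uses the posterior-separable function only through the wedge inequalities: take a process replicating $\pi$ with total direct cost at most $\phi(C)(\pi)+\epsilon\le\phi'(C)(\pi)+\epsilon$, telescope the potential $H$ of $\phi'(C)$ along it using $\phi'(C)\le C$ to reproduce \cref{eqn:thm2:1}, and then run the remainder of the necessity proof verbatim (it only uses \cref{axiom:1,ass:diff,ass:2} for $C$). This yields the local lower bound for $\phi(C)$, hence for $\phi'(C)$, and from there your squeeze-plus-Taylor identification of $B(\mu)$ with the Hessian of $H$ (up to the $f(\mu)\mathbf{1}\mathbf{1}^T$ ambiguity) and the pointwise domination $C\ge\phi'(C)$ complete the proof. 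As written, though, your proposal treats as a lemma precisely the statement whose proof has to be re-examined, which is the one nontrivial step of the corollary.
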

\cref{cor:ps} states that the condition in \cref{defi:FIE} is necessary for indirect cost to be uniformly posterior separable no matter what is the actual optimization problem. As long as the decision maker can at least passively pay the direct cost and at most do as good as in \cref{eqn:1}, posterior separable indirect cost implies direct cost favoring incremental evidences.

\subsubsection*{Differentiability of $C(\pi)$}

\cref{ass:diff} seems to put a lot restrictions on the structure of the derivative of $C(\pi)$ w.r.t. $\pi$ when $\pi$ is close to being uninformative. We now show that \cref{ass:diff} is in fact an implication of a seemingly much weaker twice differentiability.

To formally state the differentiability, we first reparametrize the space of information structures using the conditional signal distributions. Define $\widehat{C}_n(P,\mu):\Delta(n)^{|X|}\times \Delta(X)\to \mathbb{R}^+$. $P$ is a Markov transition matrix which represents an $n$-signal information structure of $X$. If $\pi$ is the corresponding (finite) distribution of posteriors, then $\widehat{C}_n(P,\mu)$ is defined to be equal to $C(\pi)$. $\widehat{C}_n$ is essentially a different parametrization of $C$ when the support of posterior distribution is finite. The differentiability assumption is imposed on $\left\{  \widehat{C}_n\right\}$:

\begin{ass}\label{ass:1}
	$\forall \mu_0\in \Delta(X)^O$, $\forall \epsilon>0$, $\exists \delta$ s.t. $\forall n,\forall p\in \Delta(n)$, $\forall \mu\in B_{\delta}(\mu_0)$, $\forall P\in \Delta(n)^{|X|}$ s.t. $\sum p_i \| \frac{P_i}{p_i}-\mathbf{1} \|^2\le\delta$ and $\max\|\frac{P_i}{p_i}-\mathbf{1}\|^2\le\delta$, $\exists !$ real function $\Lambda_n(p,\mu)$, $|X|\times n$-dimensional vector valued function $\Theta_n(q,\mu)$ and $|X|\times n$-dimensional symmetric matrix valued function $Q_n(q,\mu)$:
	\begin{align*}
		\left| \widehat{C}_n(P,\mu)-\left[ \Lambda_n(p,\mu)+\Theta_n(p,\mu)\cdot(P-p\cdot \mathbf{1}^T)+(P^T-\mathbf{1}\cdot p^T)\cdot Q_n(p,\mu)\cdot (P-p\cdot \mathbf{1}^T) \right]  \right|\le \epsilon\sum p_i\left\|\frac{P_i}{p_i}-\mathbf{1}\right\|^2
	\end{align*}
	where $\Lambda_n(p,\mu), \Theta_n(p,\mu)$ and $Q_n(p,\mu)$ are continuous in $p$ and $\mu$.\footnote{Note that although $P$ is usually interpreted as a Markov transition matrix, it is treated as a $|X|\times n$-dimensional vector in the expression.}
\end{ass}

\cref{ass:1} states that when the informativeness of an information structure $\pi$ (measured by $\sum p_i\|P_i/p_i-1\|^2$) is approximately zero, and each signal provides only incremental evidence (likelihood ratios are uniformly approaching zero), the information cost $\widehat{C}$ can be approximated quadratically up to error lower than the order of $\sum p_i\|P_i/p_i-1\|^2$. Notice that this differentiability requirement is equivalent to the standard definition, because for any fixed $n$ the norms on $\mathbb{R}^{n\times|X|}$ are all equivalent. However, when varying $n$, our condition implicitly requires that when $n$ increases the approximation error does not scale up. Other than the differentiability of each $\widehat{C}_n(\cdot,\mu)$, \cref{ass:1} does not put any regulation on how the quadratic approximation for different $n$ and $\mu$ are related. Given \cref{ass:1}, we can obtain a local approximation of $C$:

\begin{lem}\label{lem:diff}
	$\forall$ $C\in\mathcal{C}$ satisfies \cref{ass:1} there exists continuous symmetric matrix valued function $B(\mu)$ s.t. $\forall \mu_0\in \Delta(X)^o$, $\forall \epsilon$, $\exists \delta$ s.t. $\forall \pi\in \Delta^2(X)$ s.t. $\text{supp}(\pi)\subset B_\delta(\mu_0)$, let $\mu=\E_{\pi}[\nu]$,
	\begin{align*}
		\left| C(\pi)-\E_{\pi}\left[ (\nu-\mu)^TB(\mu)(\nu-\mu) \right] \right|\le \epsilon \E_{\pi}\left[\|\nu-\mu\|^2\right]
	\end{align*}
	where $B(\mu)$ satisfies $\forall \nu,\ \nu^TB(\mu)\nu\ge\mu^TB(\mu)\mu$.  
\end{lem}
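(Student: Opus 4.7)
The plan is to distill from the signal-space expansion in \cref{ass:1} a belief-space quadratic approximation with matrix $B(\mu)$. First I would kill the zeroth- and first-order terms. Evaluating the expansion at $P=p\mathbf{1}^T$ gives $\pi=\delta_\mu$ and hence $C(\delta_\mu)=0$ by \cref{axiom:0}; since the error bound vanishes there, uniqueness in \cref{ass:1} forces $\Lambda_n(p,\mu)\equiv 0$. For the linear term, non-negativity of $C$ combined with the sign-flip $\Delta\to-\Delta$ (both producing valid perturbations around the uninformative reference, with identical quadratic contributions) forces $\Theta_n(p,\mu)\cdot\Delta=0$ on every admissible direction.

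Next I would reparametrize via $\Delta_{x,s}=p_s(\nu_s-\mu)_x/\mu_x$, which converts the surviving quadratic form into $\sum_{s,s'}p_sp_{s'}(\nu_s-\mu)^T R_n^{(s,s')}(p,\mu)(\nu_{s'}-\mu)$ for appropriate blocks $R_n^{(s,s')}$. Because $C$ depends only on the posterior measure $\pi$ and not on the signal parametrization, the approximation must be invariant under signal splitting: dividing a signal $s$ with posterior $\nu_s$ and weight $p_s$ into two copies with weights $qp_s$ and $(1-q)p_s$ and the same posterior $\nu_s$ leaves $\pi$ unchanged while modifying $(n,p,Q_n)$. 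Matching coefficients identically in $q\in(0,1)$ kills the cross-signal blocks and forces each diagonal block to be a single symmetric matrix $B(\mu)$ independent of $n$, $p$, and $s$, unique up to the gauge that acts trivially on simplex deviations (spanned by $\mathbf{1}\mathbf{1}^T$ and $\mathbf{1}g^T+g\mathbf{1}^T$). Continuity of $B(\mu)$ in $\mu$ is inherited from continuity of $Q_n$. The norm equivalence between $\sum_s p_s\|P_{\cdot,s}/p_s-\mathbf{1}\|^2$ and $\E_\pi[\|\nu-\mu\|^2]$, uniform on compact subsets of $\Delta(X)^o$, rephrases \cref{ass:1}'s error bound as the claimed one for finitely supported $\pi$.

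To extend to arbitrary $\pi\in\Delta^2(X)$ supported in $B_\delta(\mu_0)$, I would weakly approximate by finitely supported $\pi_k$ whose supports remain in $B_\delta(\mu_0)$: continuity of $C$ on the generic stratum and the boundedness of the continuous integrand $(\nu-\mu)^T B(\mu)(\nu-\mu)$ transfer the estimate to the limit. For the positivity claim, apply the belief-space approximation to the symmetric two-point experiment $\tfrac{1}{2}\delta_{\nu}+\tfrac{1}{2}\delta_{2\mu-\nu}$: $C\ge 0$ yields $(\nu-\mu)^T B(\mu)(\nu-\mu)\ge 0$ for $\nu\in\Delta(X)$ close to $\mu$, i.e., the positivity in the sense of footnote 8. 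Exploiting the gauge freedom, which shifts $B\mu$ by $g+(g\cdot\mu)\mathbf{1}$ for arbitrary $g$, pick the representative with $B(\mu)\mu\in\mathrm{span}(\mathbf{1})$; then $\mu$ becomes a critical point of $\nu\mapsto\nu^T B(\mu)\nu$ on $\Delta(X)$, and the decomposition $\nu^T B(\mu)\nu-\mu^T B(\mu)\mu=(\nu-\mu)^T B(\mu)(\nu-\mu)\ge 0$ gives the stated condition.

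The main obstacle will be the signal-splitting invariance argument of the second step: verifying that the constraints obtained by matching quadratic coefficients across all refinements $(n,p)\to(n+1,p')$ admit only the per-signal block solution, that $B(\mu)$ can be extracted uniformly across all signal cardinalities, and that the gauge-fixing can be done continuously in $\mu$. The algebraic bookkeeping here is the heart of the proof, whereas the other steps are comparatively routine.
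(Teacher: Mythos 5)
Your overall architecture coincides with the paper's: kill the zeroth- and first-order terms using $C(\delta_\mu)=0$ and $C\ge 0$ (as in \cref{axiom:0}), exploit the fact that $\widehat{C}_n$ depends on $(n,p,P)$ only through the induced posterior distribution to pin down the quadratic term in \cref{ass:1}, pass to belief space via the likelihood-ratio substitution, extend from finitely supported $\pi$ to general $\pi$ by weak approximation and continuity, and get the positivity statement from $C\ge 0$ plus a normalization of $B(\mu)$. Your gauge-fixing $B(\mu)\mu\in\mathrm{span}(\mathbf{1})$ is a legitimate variant of the paper's normalization, which instead derives $B(\mu)\mu=0$ from the observation that uninformative structures $P=p'\cdot\mathbf{1}^T$ cost zero, i.e. $Q(\mu)\cdot\mathbf{1}=0$; those end steps are fine.

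The gap is in your second step, which is exactly where the content of the lemma lies. The paper does not identify the quadratic form by hand: for each pair of states it reduces the second-order term to a bilinear form $\langle\cdot,\cdot\rangle_{n,p,\mu}$ on the tangent space of $\Delta(n)$ at $p$, notes that invariance of $C$ under reparametrizations of the signal alphabet makes this family isometric under Markov congruent embeddings (precisely your signal splittings), and then invokes C\'{e}ncov's theorem to conclude proportionality to the Fisher metric $\sum_i p_{1,i}p_{2,i}/p_i$; a separate explicit construction (splitting one signal of a binary experiment into $n-1$ proportional copies) then shows the resulting kernel $Q_n(\mu)$ does not depend on $n$. Your assertion that ``matching coefficients identically in $q$ kills the cross-signal blocks and forces a common diagonal block independent of $n$, $p$, $s$'' is exactly this C\'{e}ncov-type uniqueness statement, and it is not routine bookkeeping: admissible perturbations satisfy $\sum_s (P_{s\cdot}-p_s\mathbf{1}^T)=0$, so the blocks $R_n^{(s,s')}$ are identified only up to a large signal-indexed gauge and cross-blocks cannot literally be killed; splitting within a fixed reference $(n,p)$ yields only Cauchy-type functional equations whose resolution requires the continuity of the quadratic term in $p$ and, crucially, consistency across different alphabet sizes to obtain $n$-independence. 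Unless you either carry out this identification in full (essentially reproving a version of C\'{e}ncov's theorem) or invoke it as the paper does, the existence of a single continuous kernel $B(\mu)$ independent of the signal parametrization --- the heart of \cref{lem:diff} --- remains unestablished.
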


\cref{lem:diff} states that under \cref{ass:1}, we can approximate $C(\pi)$ locally when the support of $\pi$ shrinks. Moreover, the approximating is characterized by the expectation of an inner product defined on vector space $\Delta(X)$, as is stated in \cref{ass:diff}. The cost of an information structure is essentially measured by the expected ``square-distance'' between prior and posterior beliefs. The distance is induced by the inner product defined by kernel $B(\mu)$. Note that the dimensionality of $(\Lambda_n,\Theta_n,Q_n)$ corresponding to an arbitrary function $\widehat{C}_n$ can be much larger than the dimensionality of $B(\mu)$. What \cref{lem:diff} proves is that when $\widehat{C}_n$ is an information cost function, the intrinsic invariance property of information cost (when difference $P$'s correspond to the same $\pi$) significantly reduces the degree of freedom in $(\Lambda_n,\Theta_n,Q_n)$ so that they can be represented by a much lower dimensional object $B(\mu)$. \par

\begin{remark}
	\cref{lem:diff} shows that \cref{ass:1} implies \cref{ass:diff}, but the converse is not necessarily true. The condition $\forall \mu,\nu,\ \nu^TB(\mu)\nu\ge\mu^TB(\mu)\mu$ is stronger than PSD$^*$ of $B(\mu)$. This is because the differentiability we defined in \cref{ass:1} is in fact stronger than \cref{ass:diff}. $\forall p\in \Delta(n)$, we required $\widehat{C}_n(P,\mu)$ to be differentiable for any $P\in \Delta(n)^{|X|}$ close to $p$. Meanwhile, in \cref{ass:diff}, we only consider $P$ in the subspace $T_{p,\mu}=\{P|P\cdot \mu =p\}$, since these are the only Markov transition matrices consistent with $\mu$ and $p$. In principle, we can use this stronger differentiability condition to derive \cref{ass:diff}. However, by doing this we are assuming away potentially interesting information cost functions e.g. the variance of belief movement.
\end{remark}

\section{Implications of indirect information cost}
\label{sec:implication}
\subsection{Mutual information}
\label{ssec:mutial:information}
A straight forward implication of \cref{thm:ps} is a characterization of mutual information based information cost. Let $P_{s,x}$ be the joint distribution of state $x$ and signal $x$, and $P_s$, $P_x$ be the corresponding marginal distribution. Then the mutual information between state and signal is:
\begin{align*}
	I(S;X)=&D_{KL}(P_{s,x}||P_s\otimes P_x)\\
	=&\E_{P_s}\left[ \E_{P_{x|s}}[\log(P_{x|s})] \right]-\E_{P_x}[\log(P_{x})]\\
	=&\E_{P_s}\left[ \sum \nu_s(x)\log(\nu_s(x)) \right]-\sum \mu(x)\log(\mu(x))
\end{align*}
where $\nu_s(x)$ is the posterior belief of $x$ conditional on signal $x$ and $\mu(x)$ is the prior belief of $x$. The last equality represents $I(S;X)$ as a uniformly posterior separable cost function. It is easy to verify that the normalized local characterization of cost function $I(S;X)$ is the Fisher information matrix:
\begin{align*}
	B(\mu)=\text{diag}(\mu)^{-1}-\mathbf{1}\cdot\mathbf{1}^{T}
\end{align*}
Therefore, \cref{thm:ps} implies that $\phi(C)$ is mutual information if and only if $C$ is locally characterized by the Fisher information matrix and $\forall \pi\in \mathcal{I}(X)$, $C(\pi)$ is weakly higher than the mutual information of $\pi$.

\subsection{Prior independent direct cost}
\label{ssec:prior:independent}
In this subsection, we study the implication of restricting our attention to ``prior independent'' direct information cost. In particular, we provide a complete characterization for such direct information costs which generate (uniformly) posterior separable indirect cost. \par

It seems a reasonable restriction that the direct cost of generating information should depend on some physical process/device that does not change with the prior belief. This ``prior independence'' should be captured by direct information cost functions that depend on and only on the conditional distribution of signals on each state. In other words, a necessary condition for prior independent direct cost is that $\widehat{C}_n(P,\mu)\equiv \widehat{C}_n(P)$. Notice that by \cref{lem:diff}, $\left\{ \widehat{C}_n(P,\mu) \right\}$ is sufficient for determining the cost function's local behavior and the corresponding criteria in \cref{defi:FIE}. So it is wlog to define prior independence only using $\left\{ \widehat{C}_n \right\}$ with finite signal space within the scope of this paper.

\begin{defi}\label{defi:prior:independent}
	Direct cost $C\in\mathcal{C}$ with corresponding $\left\{ \widehat{C}_n(P,\mu) \right\}$ is \textbf{prior independent} if there exists $\left\{ \widehat{C}_n(P) \right\}_{n\in\mathbb{N}}$ s.t. $\forall n\in\mathbb{N},\mu\in \Delta(X),P\in \Delta(n)^{X}$, $\widehat{C}_n(P,\mu)\equiv \widehat{C}_n(P)$ .
\end{defi}

An immediate implication of prior independent direct cost $C$ is that the local representation $B(\mu)$ takes a specific form. By the proof of \cref{lem:diff}, $B(\mu)=\text{diag}(\mu)^{-1}Q(\mu) \text{diag}(\mu)^{-1}$, where $Q(\mu)$ is a sub-matrix of $Q_n(p,\mu)$, hence is $\mu$ independent when $C$ is prior independent. Therefore, in this subsection, we consider local representation $B(\mu)$ in the form of:
\begin{align*}
	B(\mu)=\text{diag}(\mu)^{-1}\cdot Q\cdot\text{diag}(\mu)^{-1}
\end{align*}
where $Q$ is PSD and satisfies $Q\cdot\mathbf{1}=\mathbf{0}$.

\subsubsection{Binary states}

When the state is binary, i.e. $|X|=2$, we can explicitly write down the criteria in \cref{defi:FIE}. The condition  $Q\cdot \mathbf{1}=\mathbf{0}$ implies $Q_{11}+Q_{12}=0$, $Q_{12}+Q_{22}=0$. In other words:
\begin{align*}
	&Q=
	\begin{bmatrix}
		\alpha&-\alpha\\
		-\alpha&\alpha
	\end{bmatrix}\\
	\implies &B(\mu_1,\mu_2)=
	\alpha\times \begin{bmatrix}
		\frac{1}{\mu_1^2}&-\frac{1}{\mu_1\mu_2}\\
		-\frac{1}{\mu_1\mu_2}&\frac{1}{\mu_2^2}
	\end{bmatrix}
\end{align*}

We parametrize $(\mu_1,\mu_2)=(\mu,1-\mu)$, let $\widetilde{B}(\mu)$ define the 2-differential form in the space $[0,1]$. Then:
\begin{align*}
	\widetilde{B}(\mu)=&\left( \frac{\d (\mu_1,\mu_2) }{\d \mu} \right)B(\mu_1,\mu_2)\left( \frac{\d (\mu_1,\mu_2)^T }{\d \mu} \right)\\
	=&[1\ -1]
	\begin{bmatrix}
		\frac{\alpha}{\mu^2}&\frac{-\alpha}{\mu(1-\mu)}\\
		\frac{-\alpha}{\mu(1-\mu)}&\frac{\alpha}{(1-\mu)^2}
		\end{bmatrix}	\begin{bmatrix} 1\\-1\end{bmatrix}\\
		=&\frac{\alpha}{\mu^2(1-\mu)^2}
\end{align*}
It is easy to verify that $\widetilde{B}(\mu)$ is the second derivative of:
\begin{align}
	H(\mu)=\alpha\left( 2(\mu\log(\mu)+(1-\mu)\log(1-\mu))-\log(\mu)-\log(1-\mu) \right) \label{eqn:H:binary}
\end{align}
where $H$ is essentially unique module a linear function. $\forall P\in \Delta(n)^{X}$ and $\mu\in \Delta(X)^o$, let $P_s^x$ denote the $(s,x)$ element in the matrix $P$.  Define distribution $P_{s,x}(s,x)=\mu(x)\cdot P_s^x$, $P_s(s)=\sum_x \mu(x)\cdot P_s^x$. Then the condition in \cref{defi:FIE} can be written as $\widehat{C}_n(P)\ge \underline{\widehat{C}}_n(P,\mu)$, where 
\begin{align}
	\underline{\widehat{C}}_n(P,\mu)=2\alpha\left( D_{KL}(P_{s,x}||P_s\otimes \mu) +\frac{1}{2}\sum_x D_{KL}(P_s||P^x)\right)\label{eqn:ps:binary}
\end{align}
\cref{eqn:ps:binary} defines the lower bound for any information structure's direct cost such that the indirect cost of information is uniformly posterior separable. When \cref{eqn:ps:binary} is satisfied for all information structures, the uniformly posterior separable indirect information cost $\phi(C)$ is exactly defined by $\phi(C)(\pi)=\E_{\pi}[H(\nu)]-H(\E_{\pi}[\nu])$. \cref{eqn:ps:binary} includes two terms, the first term is exactly the \emph{mutual information} between the state and the signal. The second term is the unweighted average Kullback-Leibler divergence from the signal's marginal distribution to the conditional distribution. 

\subsubsection{More states}
When $|X|=K>2$, we still parametrize $\Delta(X)$ using the first $K-1$ entries of a belief distribution: $\mu=(\mu_1,\mu_2,\cdots,\mu_{K-1},1-\sum_{k=1}^{K-1} \mu_j)$. Let $\widetilde{B}(\mu_1,\cdots,\mu_{K-1})$ define the 2-differential form in the space $\mathbb{R}^{K-1}$ of $B(\mu)$. Then:
\begin{align*}
	\widetilde{B}(\mu_1,\cdots,\mu_{K-1})=&
	\begin{bmatrix}
		I&\mathbf{-1}
	\end{bmatrix}
	\cdot B(\mu_1,\cdots,\mu_{K-1},1-\Sigma \mu_{k})\cdot
	\begin{bmatrix}
		I\\
		\mathbf{-1}^T
	\end{bmatrix}\\
	=&\left[ \frac{Q_{ij}}{\mu_i\mu_j}-\frac{Q_{iK}}{\mu_i(1-\Sigma \mu_k)}-\frac{Q_{jK}}{\mu_j(1-\Sigma \mu_k)}+\frac{Q_{KK}}{(1-\Sigma \mu_k)^2} \right]
\end{align*}
where $I$ is the $(K-1)\times (K-1)$ identity matrix, $\mathbf{1}$ is $K-1$-dimensional column vector. The second line represents the matrix using its $ij^{th}$ element. \cref{thm:ps} states that suppose $\phi(C)$ is (uniformly) posterior separable, $\widetilde{B}$ must be a Hessian matrix of a function $H$ on $\mathbb{R}^{K-1}$. This implies that $ij^{th}$ element of $\widetilde{B}$ is the cross derivative of a $H$ w.r.t. $\mu_i$ and $\mu_j$. Clearly, $\widetilde{B}$ is $C^{\infty}$ smooth on $\Delta(X)^o$. Therefore, suppose $\widetilde{B}=\mathbb{H}H$, then $H$ is also $C^{\infty}$ smooth. Then we can calculate the higher order cross derivative of $H$. Pick any $i\neq j$:
\begin{align*}
	&\frac{\partial^3}{\partial \mu_i^2\partial\mu_j}H(\mu_1,\cdots,\mu_{K-1})=\frac{\partial}{\partial\mu_i}\widetilde{B}(\mu_1,\cdots,\mu_{K-1})_{ij}=\frac{\partial}{\partial\mu_j}\widetilde{B}(\mu_1,\cdots,\mu_{K-1})_{ii}\\
	\iff&\frac{\partial}{\partial\mu_i}\left(\frac{Q_{ij}}{\mu_i\mu_j}-\frac{Q_{iK}}{\mu_i(1-\Sigma \mu_k)}-\frac{Q_{jK}}{\mu_j(1-\Sigma \mu_k)}+\frac{Q_{KK}}{(1-\Sigma \mu_k)^2}\right)= \frac{\partial}{\partial\mu_j}\left(\frac{Q_{ii}}{\mu_i^2}-2\frac{Q_{iK}}{\mu_i(1-\Sigma \mu_k)}+\frac{Q_{KK}}{(1-\Sigma \mu_k)^2}\right)\\
		\iff&-\frac{Q_{ij}}{\mu_i^2\mu_j}-\frac{Q_{jK}}{(1-\Sigma \mu_k)^2\mu_j}+\frac{Q_{iK}}{\mu_i^2(1-\Sigma \mu_k)}=-\frac{Q_{iK}}{\mu_i(1-\Sigma\mu_k)^2}\\
			\iff&Q_{ij}\mu_{K}^2+Q_{jK}\mu_i^2-Q_{iK}(\mu_i+\mu_K)\mu_j=0
\end{align*}
Given the degree of freedom in $\left\{ \mu_k \right\}$, the equality holds only when $Q_{ij}=Q_{iK}=Q_{jK}=0$. This implies $Q$ is a diagonal matrix, which is not permitted by the condition $Q\cdot \mathbf{1}=0$. Therefore there does not exist a non-trivial $Q$ s.t. the corresponding $B(\mu)$ is a Hessian matrix.\par

We sum up the discussion in this subsection by the following proposition:
\begin{prop}\label{prop:prior:independent}
	$\forall$ prior independent $C\in \mathcal{C}$ satisfying \cref{axiom:1,ass:1,ass:2}, $\phi(C)$ is (uniformly) posterior separable \emph{iff} $X$ is binary, and $C(\pi)\ge \E_{\pi}[H(\nu)]-H(\E_{\pi}[\nu])$, where $H$ is defined as in \cref{eqn:H:binary}.
\end{prop}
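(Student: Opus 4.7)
The plan is to combine Theorem \ref{thm:ps}, which characterizes when $\phi(C)$ is uniformly posterior separable, with the structural restriction prior independence imposes on the local kernel $B(\mu)$, and then determine when that kernel can be a Hessian.

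Step 1 (reduction via Theorem \ref{thm:ps}). By Theorem \ref{thm:ps}, $\phi(C)$ is uniformly posterior separable iff $C$ favors incremental evidences, i.e.\ (i) the local quadratic kernel $B(\mu)$ from Lemma \ref{lem:diff} is the Hessian$^*$ of some potential $H:\Delta(X)\to\mathbb{R}$, and (ii) $C(\pi)\ge \E_{\pi}[H(\nu)]-H(\E_{\pi}[\nu])$ pointwise. Thus the proposition reduces to determining when (i) can hold under prior independence; once (i) holds, $H$ is pinned down (up to an affine term) and (ii) becomes the asserted lower bound.

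Step 2 (shape of $B$ under prior independence). When $\widehat C_n(P,\mu)\equiv\widehat C_n(P)$, the quadratic coefficient $Q_n(p,\mu)$ appearing in Assumption \ref{ass:1} is $\mu$-independent, and tracing through the derivation of $B(\mu)$ in Lemma \ref{lem:diff} yields $B(\mu)=\mathrm{diag}(\mu)^{-1}\,Q\,\mathrm{diag}(\mu)^{-1}$ for a single symmetric PSD$^*$ matrix $Q$ with $Q\mathbf{1}=\mathbf{0}$, independent of $\mu$. Assumption \ref{ass:2} rules out $Q=0$. So the problem is reduced to asking whether a single constant matrix $Q$ can generate a Hessian through this formula.

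Step 3 (binary case). For $|X|=2$ the constraints $Q=Q^T$, $Q\mathbf{1}=0$, and PSD$^*$ together force $Q=\alpha(e_1-e_2)(e_1-e_2)^T$ for some $\alpha>0$. Parametrizing $\mu=(\mu,1-\mu)$ and pushing forward gives the scalar two-form $\widetilde B(\mu)=\alpha/(\mu^2(1-\mu)^2)$. Antidifferentiating twice recovers $H$ exactly as in \eqref{eqn:H:binary} (uniquely modulo an affine term), so condition (i) holds automatically. Condition (ii) is then exactly the stated inequality $C(\pi)\ge \E_{\pi}[H(\nu)]-H(\E_{\pi}[\nu])$, and Theorem \ref{thm:ps} provides both directions.

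Step 4 (three or more states). The goal is to show (i) fails for every nonzero $Q$, so by Step 2 no prior independent $C$ satisfying the standing assumptions can yield a uniformly posterior separable $\phi(C)$ when $|X|>2$. Since $\widetilde B$ is $C^\infty$ on $\Delta(X)^o$, if $\widetilde B=\mathbb H H$ then $H$ is $C^\infty$ and its third-order partials commute. Imposing $\partial_{\mu_i}\widetilde B_{ij}=\partial_{\mu_j}\widetilde B_{ii}$ for a generic pair $i\neq j<K$ and simplifying yields the identity
\begin{equation*}
Q_{ij}\mu_K^2+Q_{jK}\mu_i^2-Q_{iK}(\mu_i+\mu_K)\mu_j=0,
\end{equation*}
required to hold on the open simplex. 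The remaining coordinates $\{\mu_\ell\}_{\ell\notin\{i,j,K\}}$ leave $(\mu_i,\mu_j,\mu_K)$ varying over an open subset of $\mathbb{R}^3$, so matching coefficients of the monomials $\mu_K^2$, $\mu_i^2$, $\mu_i\mu_j$, and $\mu_j\mu_K$ forces $Q_{ij}=Q_{iK}=Q_{jK}=0$ for every $i\neq j$. Hence $Q$ is diagonal, which together with $Q\mathbf{1}=0$ gives $Q=0$, contradicting Step 2. This establishes the ``only if'' direction in the proposition's statement.

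The main obstacle I anticipate is the coefficient-matching argument in Step 4: care is needed to verify that $(\mu_i,\mu_j,\mu_K)$ really can be treated as algebraically independent on an open subset of the simplex (using the free remaining coordinates to absorb the constraint $\sum_\ell\mu_\ell=1$), and that one has covered enough cross-derivative identities to pin down all off-diagonal entries of $Q$, not only those sharing an index with the distinguished coordinate $K$. Once this is checked, the proposition follows by combining Steps 1--4.
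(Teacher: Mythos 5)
Your proposal is correct and follows essentially the same route as the paper: reduce via \cref{thm:ps} to the question of when $B(\mu)=\mathrm{diag}(\mu)^{-1}Q\,\mathrm{diag}(\mu)^{-1}$ with constant $Q$, $Q\mathbf{1}=\mathbf{0}$, can be a Hessian, solve the binary case explicitly to recover \cref{eqn:H:binary}, and for $|X|>2$ use the commutation of third derivatives to force $Q_{ij}=Q_{iK}=Q_{jK}=0$ and hence $Q=0$, contradicting \cref{ass:2}. Your anticipated obstacle in Step 4 is not a real one: even when $|X|=3$ (no spare coordinates), substituting $\mu_K=1-\sum_{k<K}\mu_k$ turns the identity into a polynomial identity on an open subset of $\mathbb{R}^{K-1}$ whose coefficients must all vanish, which again yields $Q_{ij}=Q_{iK}=Q_{jK}=0$, exactly as in the paper's computation.
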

\cref{prop:prior:independent} completely characterizes direct information costs in the prior independent class that can generate (uniformly) posterior separable indirect cost. The main finding is that generally prior independence can not generate posterior separability --- for the reason that the cost from acquiring incremental evidences to replicate an information structure is generally not path-independent. In other words, the indirect cost generated from prior independent direct cost almost always satisfies strict sub-additivity. Only when the underlying state space is binary, there exists a generic set of direct information costs which generate posterior separable indirect cost. This set of direct costs are bounded below by the standard mutual information measure plus an extra unweighted average KL divergence from signal's marginal distribution to conditional distribution.

\subsection{Indirect information cost in dynamic learning problem}
In this subsection, we study the optimal learning dynamics in a dynamic learning problem where the flow cost of information is captured by an ``indirect cost''. The dynamic learning problem captures the idea that a decision maker purchases information every period (but with a per period budget or incurs increasing marginal cost) and decides when to stop learning and make a decision. The producer of information runs some cost minimization algorithm at background and hence the consumer pays an indirect cost of information effectively.\par

The time horizon $t=0,1,\dots,\infty$ is discrete. The utility associated with action-state pair $(a,x)$ is $u(a,x)$. Where $x\in X$ and $a\in A$ are finite. Equivalently, we use $U(\mu)=\max_{a}\E_{\mu}[u(a,x)]$ to represent the expected decision utility. The decision maker (DM) pays a constant flow cost $m$ for delaying decision by one period. We assume that the DM can choose an arbitrary belief martingale $\langle \mu_t\rangle$ together with a stopping time $\tau$ measurable to $\langle \mu_t\rangle$. The dynamic optimization of the DM is:
\begin{align}
	V(\mu)=\sup_{\langle \mu_t\rangle,\tau}\E\left[ U(\mu_{\tau})-m\tau-\sum_{t=0}^{\tau}f\left( C(\pi_t(\mu_{t+1}|\mu_t)) \right) \right] \label{eqn:dynamic}
\end{align}
where $C(\cdot)\in\phi(\mathcal{C})$ and $f:\mathbb{R}^+\to\overbar{\mathbb{R}}^+$ is increasing and convex.\footnote{$f$ maps to extended real values. So $f$ can also capture a hard cap.}

\begin{prop}\label{prop:dynamic}
	$\forall \mu\in \Delta(X)$:
	\begin{align}
		V(\mu)=\max\left\{ U(\mu),\sup_{\lambda,C(\pi)\ge \lambda}\left( \E_{\pi}[U(\nu)]-\left( \frac{m}{\lambda}+\frac{f(\lambda)}{\lambda} \right)C(\pi) \right) \right\}\label{eqn:dynamic:static}
	\end{align}
	where the superemum is taken over $\lambda$ and $\pi\in\Delta^2(X)$ jointly.\end{prop}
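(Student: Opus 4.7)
The plan is to establish the two inequalities separately. The $\ge$ direction is constructive via a dilution--renewal strategy; the $\le$ direction chains iterated subadditivity of $C$ with two applications of Jensen's inequality (one pathwise, one in expectation).

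For achievability, given any admissible $(\pi,\lambda)$ with $\E_\pi[\nu]=\mu$ and $0<\lambda\le C(\pi)$, I set $\lambda'=\lambda/C(\pi)\in(0,1]$ and consider the renewal strategy that at every non-terminal period uses the diluted information structure $\pi_{\lambda'}=\lambda'\pi+(1-\lambda')\delta_{\mu}$, stopping at the first period in which the belief jumps. By \cref{lem:pos}, the per-period information cost is $C(\pi_{\lambda'})=\lambda$, so the per-period total cost is $m+f(\lambda)$. The stopping time is geometric with success probability $\lambda'$, so $\E[\tau]=1/\lambda'=C(\pi)/\lambda$, and conditional on stopping $\mu_\tau\sim\pi$. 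Plugging in yields expected payoff $\E_\pi[U(\nu)]-(m/\lambda+f(\lambda)/\lambda)C(\pi)$. Combined with the immediate-stop option (value $U(\mu)$), this gives $V(\mu)\ge\text{RHS}$.

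For the upper bound, fix any feasible $(\langle\mu_t\rangle,\tau)$ with finite expected total cost. Let $\pi$ be the marginal distribution of $\mu_\tau$, $c_t=C(\pi_t(\mu_{t+1}|\mu_t))$, $S=\sum c_t$, and $T$ the number of cost-incurring periods. Three inequalities chain together. (i) Iterated subadditivity of $C$ along the belief tree gives $C(\pi)\le\E[S]$. (ii) Pathwise Jensen on the convex $f$ gives $\sum f(c_t)\ge Tf(S/T)$. (iii) The perspective map $(T,S)\mapsto Tf(S/T)$ is jointly convex on $\mathbb{R}_{++}\times\mathbb{R}_+$, so Jensen in expectation gives $\E[Tf(S/T)]\ge\E[T]f(\E[S]/\E[T])$. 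Setting $\bar c=\E[S]/\E[T]$, the expected total cost is at least $\E[T](m+f(\bar c))\ge(C(\pi)/\bar c)(m+f(\bar c))$, so the expected value is at most $\E_\pi[U(\nu)]-(m/\bar c+f(\bar c)/\bar c)C(\pi)$. When $\bar c\le C(\pi)$ the pair $(\pi,\bar c)$ is admissible in the sup; when $\bar c>C(\pi)$, monotonicity of $f$ together with $\E[T]\ge 1$ shows the total cost exceeds $m+f(C(\pi))$, the integrand at the boundary $\lambda=C(\pi)$. Either way $V(\mu)\le\text{RHS}$.

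The main obstacle is ingredient (i): \cref{axiom:2} is stated as a single two-step decomposition, so I need to extend it to arbitrary finite tree decompositions with a random stopping time. The approach is a straightforward induction on tree depth --- at each internal node apply \cref{axiom:2} to the root experiment and the conditional posterior distribution, then invoke the inductive hypothesis on each subtree --- delivering $C(\pi)\le\E[S]$. This mirrors the telescoping argument behind the $C=\phi(C)$ direction of \cref{thm:1}. Once (i) is in hand the perspective-convexity underpinning the expectation-Jensen step is a standard fact, and the remaining arithmetic is routine.
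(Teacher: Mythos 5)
Your proposal is correct and takes essentially the same route as the paper: the lower bound is exactly the dilution/renewal construction of \cref{prop:dynamic:poisson} (relying on \cref{lem:pos}), and the upper bound is the paper's chain of iterated subadditivity of $C$ plus convexity of $f$, with your pathwise-Jensen-plus-perspective-convexity split simply making explicit the single Jensen-type display the paper uses. Your handling of the boundary case $\bar{c}>C(\pi)$ via monotonicity of $f$ and $\E[T]\ge 1$ is a slightly more careful version of what the paper relegates to a footnote, but it is the same argument.
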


	\cref{prop:dynamic} establishes that solving the optimal value in \cref{eqn:dynamic} is equivalent to a static problem. In the static problem, the DM pays a fixed marginal cost $\left( \frac{m}{\lambda}+\frac{f(\lambda)}{\lambda} \right)$ on each unit of information cost $C(\pi)$. Generically, the constraint $C(\pi)\ge\lambda$ does not bind and optimal $\lambda$ is simply chosen to maximize $\frac{m}{\lambda}+\frac{f(\lambda)}{\lambda}$. \footnote{$\lambda$ captures the optimal per period information cost. The constraint only binds when the optimal $C(\pi)$ is so low that it is optimal to acquire it in ``less than one period''. This case is treated separately to deal with the integer problem.} Once $\lambda$ is fixed, we can solve for optimal $\pi$. Note that due to \cref{axiom:1}, it is without loss of optimality to consider direct signals, namely signal structures directly specifying the conditional distribution of actions. Since $A$ is finite, the optimization problem reduces to a finite dimensional problem. Since $C\in\mathcal{C}$ is continuous, solution to \cref{eqn:dynamic:static} exists. \par
	Once the static problem \cref{eqn:dynamic:static} is solved, we can construct $(\langle \mu_t\rangle,\tau)$ based on its solution to solve \cref{eqn:dynamic}.
	\begin{prop}\label{prop:dynamic:poisson}
		$\forall \mu\in \Delta(X)$, $\forall \pi\in \Delta^2(X)$ s.t. $\E_{\pi}[\nu]=\mu$ and $\forall \lambda^*<C(\pi)$, define $(\langle\mu_t\rangle,\tau)$ as:
		\begin{enumerate}
			\item $
				\begin{dcases}
					\text{Prob}(\tau=t|\tau\ge t)=\frac{\lambda^*}{C(\pi)}\\
					\text{Prob}(\tau>t|\tau\ge t)=1-\frac{\lambda^*}{C(\pi)}
			\end{dcases}$
		\item $\mu_{t+1}
			\begin{dcases}
				=\mu_0&\text{if }\tau>t\\
				\sim \pi&\text{if }\tau=t
			\end{dcases}
			$
		\end{enumerate}
		Then: $\E_{\pi}[U(\nu)]-\left( \frac{m}{\lambda^*}+\frac{f(\lambda^*)}{\lambda^*} \right)C(\pi)=\E\left[ U(\mu_{\tau})-m\tau-\sum_{t=0}^{\tau}f\left( C(\pi_t(\mu_{t+1}|\mu_t)) \right) \right]$.
	\end{prop}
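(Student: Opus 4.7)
The plan is a direct calculation: evaluate each of the three components of the objective in \cref{eqn:dynamic} under the constructed pair $(\langle \mu_t\rangle,\tau)$ and add them up.

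The central observation is that, conditional on $\mu_t=\mu_0$, the one-step Bayesian signal $\pi_t(\mu_{t+1}|\mu_t)$ is exactly the dilution $\pi_p:=p\pi+(1-p)\delta_{\mu_0}$ with $p:=\lambda^*/C(\pi)\in(0,1]$. Since $\E_\pi[\nu]=\mu_0$, $\pi_p$ is martingale-consistent, and by \cref{lem:pos} its direct cost is
\[
C(\pi_p)=p\,C(\pi)=\lambda^*,
\]
so every cost-incurring period contributes the same flow cost $f(\lambda^*)$.

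Next, I would use the geometric structure of $\tau$ (success probability $p$ in each round) to obtain $\E[\tau]=C(\pi)/\lambda^*$, and via the tower property conditioning on $\{\tau=t\}$, $\E[U(\mu_\tau)]=\E_\pi[U(\nu)]$, since on the stopping event the terminal belief is drawn from $\pi$. The total expected information cost is then $f(\lambda^*)\cdot C(\pi)/\lambda^*$. Adding the three contributions yields
\[
\E_\pi[U(\nu)]-m\,\frac{C(\pi)}{\lambda^*}-f(\lambda^*)\,\frac{C(\pi)}{\lambda^*}=\E_\pi[U(\nu)]-\left(\frac{m}{\lambda^*}+\frac{f(\lambda^*)}{\lambda^*}\right)C(\pi),
\]
which is exactly the RHS of the proposition.

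The main obstacle is really bookkeeping rather than any substantive argument. I need to verify that the per-period transition is indeed $\pi_p$ (i.e., that the law of $\mu_{t+1}$ given $\mu_t=\mu_0$, unconditional on $\tau$, is $\pi_p$ rather than $\pi$ or $\delta_{\mu_0}$) so that \cref{lem:pos} applies uniformly across all rounds; and I need to align the indexing of $\tau$ in the sum $\sum_{t=0}^\tau$ of \cref{eqn:dynamic} with the geometric distribution so that both the delay term and the cumulative information-cost term collapse in expectation to $C(\pi)/\lambda^*$ periods. Once these accounting steps are pinned down, the remainder is arithmetic, and no additional structural properties of $C$ beyond \cref{lem:pos} are needed.
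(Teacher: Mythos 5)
Your proposal is correct and follows essentially the same route as the paper's own argument: identify the one-step transition (conditional on not having stopped) as the dilution $\lambda^*/C(\pi)\cdot\pi+(1-\lambda^*/C(\pi))\delta_{\mu_0}$, invoke \cref{lem:pos} to get constant flow cost $\lambda^*$, use the geometric stopping time with mean $C(\pi)/\lambda^*$, and note that $\mu_\tau\sim\pi$. The off-by-one bookkeeping between $m\tau$ and the $(\tau+1)$-term cost sum that you flag is likewise left implicit in the paper, so nothing in your approach diverges from it.
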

	\cref{prop:dynamic:poisson} shows that \cref{eqn:dynamic} can be solved using a very simple learning strategy. At each period, conditional on not yet stopped, the belief either jumps to posterior according to $\pi$  with probability $\frac{\lambda^*}{C(\pi)}$, or stays at prior $\mu_0$ with remaining probability $1-\frac{\lambda^*}{C(\pi)}$. And learning is stopped once the jump happens. It is easy to verify that such a strategy incurs flow cost $C(\pi_t(\mu_{t+1}|\mu_t))=\frac{\lambda^*}{C(\pi)}\cdot C(\pi)= \lambda^* $ (by \cref{lem:pos}). The expected stopping time is $\frac{C(\pi)}{\lambda^*}$, hence expected delay cost is $\frac{m}{\lambda^*}C(\pi)$ and expected total information cost is $\frac{f(\lambda^*)}{\lambda^*}C(\pi)$. The distribution of $\mu_{\tau}$ is by definition $\pi$ itself. Therefore \cref{prop:dynamic:poisson} is proved. \par
	\cref{prop:dynamic:poisson} directly proves that $V(\mu)$ is weakly larger than the static optimization problem \cref{eqn:dynamic:static}. Intuitively, the static problem captures the expected utility of a specific way to acquire information: ``dilute'' the terminal belief $\pi$ into every period evenly by conducting $\pi$ with small probability like a Poisson process. Now we show that \cref{axiom:2} implies that this is the optimal way to acquire information.\par
	Take any $\left( \langle\mu_t\rangle,\tau \right)$, let $\pi$ be the distribution of $\mu_{\tau}$. Then sub-additivity implies $C(\pi)\le\E\left[\sum_{t=0}^{\tau} C(\pi_{t}(\mu_{t+1}|\mu_t)) \right]$. Then:
	\begin{align*}
		\frac{C(\pi)}{\E[\tau]}\le&\frac{\E\left[\sum_{t=0}^{\tau} C(\pi_{t}(\mu_{t+1}|\mu_t)) \right]}{\E[\tau]}\\
		\implies f\left( \frac{C(\pi)}{\E[\tau]} \right)\le&\frac{\E\left[\sum_{t=0}^{\tau} f(C(\pi_{t}(\mu_{t+1}|\mu_t))) \right]}{\E[\tau]}\\
		\implies \E[\tau] f\left( \frac{C(\pi)}{\E[\tau]} \right)\le &\E\left[\sum_{t=0}^{\tau} f(C(\pi_{t}(\mu_{t+1}|\mu_t))) \right]
	\end{align*}
	The second inequality uses convexity of $f$. In words, sub-additivity implies that acquiring small pieces of information separately is more expensive than acquiring all information at once. However, convex $f$ gives the DM incentive to smooth the information across periods, and the optimal way to smooth is to distribute cost evenly across periods. Therefore, 
	\begin{align*}
		\E_{\pi}\left[U(\nu)-\E[\tau]m-\E[\tau]f\left( \frac{C(\pi)}{\E[\tau]} \right)\right]
	\end{align*}
	is an upper bound for $V(\mu)$. Then \cref{eqn:dynamic:static} follows by relabeling $\frac{C(\pi)}{\E{\tau}}$ with $\lambda$. Moreover, each inequality in the previous analysis can be strengthened to strict inequality if we consider strict convexity, monotonicity and sub-additivity:
	\begin{prop}\label{prop:dynamic:solution}
		$\forall $ solution of \cref{eqn:dynamic}:
		\begin{enumerate}
			\item If $C(\cdot)$ is strictly monotonic, then the support of $\mu_{\tau}$ is unique.
			\item If $f(\cdot)$ is strictly convex, then $C(\pi_t(\mu_{t+1}|\mu_t))$ is constant.
			\item If $C(\cdot)$ is strictly sub-additive, then $(\langle\mu_t\rangle,\tau)$ is unique.
		\end{enumerate}
	\end{prop}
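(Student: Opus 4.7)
The plan is to trace the chain of (in)equalities set up just before the proposition and to identify, for each hypothesis, which link becomes strict unless the claimed structural property holds. Fix any optimum $(\langle\mu_t\rangle,\tau)$ of \cref{eqn:dynamic} and let $\pi$ denote the distribution of $\mu_\tau$. The derivation that reduces \cref{eqn:dynamic} to the static bound \cref{eqn:dynamic:static} passes through
\begin{align*}
C(\pi) &\le \E\Bigl[\sum_{t=0}^{\tau} C(\pi_t(\mu_{t+1}|\mu_t))\Bigr],\\
f\Bigl(\tfrac{\E[\sum_t C(\pi_t)]}{\E[\tau]}\Bigr) &\le \tfrac{\E[\sum_t f(C(\pi_t))]}{\E[\tau]},
\end{align*}
and optimality of the strategy --- together with attainability of the static upper bound by the Poisson construction in \cref{prop:dynamic:poisson} --- forces equality in both lines as well as optimality of $\pi$ in \cref{eqn:dynamic:static} at the interior optimum $\lambda^\ast=C(\pi)/\E[\tau]$.

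For part 2, strict convexity of $f$ turns the second inequality into a strict one unless the arguments $C(\pi_t(\mu_{t+1}|\mu_t))$ are almost-surely constant across $(t,\omega)$ with $t\le\tau(\omega)$; this is the standard equality case of Jensen applied to the empirical measure on pre-$\tau$ time slots, and the common value must equal $\lambda^\ast$ by equality in the first inequality.

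For part 3, iteratively apply subadditivity to consecutive splices $(\pi_t,\pi_{t+1})$ to derive the first displayed inequality. Strict subadditivity makes every two-step combination strictly cost-reducing unless that splice is the trivial decomposition identified in \cref{lem:pos}, namely one in which the per-period experiment is a dilution of the continuation structure. Combined with part 2 (constant flow cost along the path), this forces each $\pi_t$ to be the $\lambda^\ast/C(\pi)$-dilution of the terminal $\pi$, and the stopping time $\tau$ to be the corresponding geometric waiting time. This is exactly the construction of \cref{prop:dynamic:poisson}, so once $\pi$ is fixed every coordinate of $(\langle\mu_t\rangle,\tau)$ is determined.

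For part 1, strict Blackwell monotonicity enters at the level of the static problem. Suppose two optima $\pi_1,\pi_2$ of \cref{eqn:dynamic:static} had distinct supports. The set of static optima is convex because \cref{axiom:0} makes $C$ convex in $\pi$ while $\E_\pi[U(\nu)]$ is linear, so the barycenter $\bar\pi=\tfrac12\pi_1+\tfrac12\pi_2$ is also optimal and has support $\mathrm{supp}(\pi_1)\cup\mathrm{supp}(\pi_2)$. Using piecewise linearity of $U$ (finiteness of $A$) one can merge any two posteriors in the support of $\bar\pi$ that lie in the same optimal-action cell, producing a garbling $\tilde\pi<_{MPS}\bar\pi$ with $\E_{\tilde\pi}[U(\nu)]=\E_{\bar\pi}[U(\nu)]$; strict monotonicity then gives $C(\tilde\pi)<C(\bar\pi)$, contradicting optimality. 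Hence $\mathrm{supp}(\pi_1)=\mathrm{supp}(\pi_2)$, which transfers to the terminal $\mu_\tau$ of any dynamic optimum. The main obstacle is to carry out the in-cell merging while preserving the barycenter $\mu$: this reduces to a finite-dimensional linear-algebra exercise once the piecewise-linear action partition is fixed, but care is needed at boundary points where the active action set is non-unique.
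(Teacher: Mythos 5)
Your treatment of parts 2 and 3 is in line with the paper, which disposes of them in one line (the weak inequalities in the proof of \cref{prop:dynamic} become strict once the claimed property fails); your Jensen equality-case argument for part 2 is exactly that. One caveat on part 3: you invoke ``constant flow cost from part 2,'' but part 3 does not assume strict convexity of $f$, so that conclusion is not available there. The dilution rigidity has to come from strict sub-additivity alone (strictness of $C(\pi)\le\E[\sum_t C(\pi_t(\mu_{t+1}|\mu_t))]$ for every non-dilution splice, with \cref{lem:pos} identifying the exceptional decompositions), and you should also not presume the optimum is the interior Poisson strategy of \cref{prop:dynamic:poisson}: when the constraint $C(\pi)\ge\lambda$ binds the optimal strategy acquires $\pi$ in one period.

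On part 1 you take a genuinely different route. The paper re-parametrizes with direct action-recommendation signals, mixes the two optimal matrices $P=\lambda P_1+(1-\lambda)P_2$, views $P$ as the garbling \eqref{eqn:mtx} of a doubled-alphabet structure $P'$, and applies strict monotonicity contrapositively together with the Blackwell-equivalence \cref{lem:BWequiv} to force the per-action posteriors of $P_1$ and $P_2$ to coincide. Your in-cell merging is precisely that garbling, and applying strict monotonicity directly to a strict mean-preserving contraction lets you bypass \cref{lem:BWequiv} entirely, which is a real simplification. However, as stated the argument does not deliver ``same support.'' The merge requires the mixed optimum $\bar\pi$ to carry two distinct positive-mass posteriors in a common optimal-action cell. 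If the two optima use different action cells --- say $\mathrm{supp}(\pi_1)=\{\nu_a,\nu_b\}$ and $\mathrm{supp}(\pi_2)=\{\nu_a,\nu_c\}$ with $\nu_a,\nu_b,\nu_c$ interior to three distinct cells --- then $\bar\pi$ has at most one posterior per cell, no merge is available, and no contradiction is produced even though the supports differ. So your argument establishes only that every optimum has at most one posterior per cell and that two optima cannot disagree within a cell both use; ruling out optima supported on different cells is exactly the missing step (the paper's matrix argument at least reduces it to the degenerate zero-row case of its replication analysis). Separately, the convexity-of-the-optimum-set step must be run on the joint problem over $(\lambda,\pi)$: two optima may come with different $\lambda$'s and the constraint $C(\pi)\ge\lambda$ has to survive the mixing; this is repairable since $\lambda$ enters the objective separably and, for fixed $\lambda$, \cref{axiom:0} makes the objective concave in $\pi$, but it is not automatic as written.
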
\par
	The analysis in this section complements the analysis in \cite{zhong2017dynamic} and \cite{hebert2016rational}, which explores optimal dynamic learning when the flow cost is \emph{additive} with and without discounting, respectively. The two papers illustrates that additive information cost makes all types of information equally optimal without discounting and discounting implies the optimality of Poisson type information. \cref{prop:dynamic:poisson} illustrates that even without discounting, sub-additivity of information cost also implies the optimality of Poisson type information.

	\section{Summary}
	In this paper, we study the indirect cost of information, defined as the lowest total direct cost of sequentially acquire information to replication an information structure. We show that the class of indirect cost functions can be characterized as all direct cost functions satisfying monotonicity and sub-additivity. The class of (uniform) posterior separable indirect cost functions is generated from direct cost functions favoring incremental evidences. We also provide necessary and sufficient condition when prior independent direct cost could generate (uniform) posterior separable indirect cost. In an application, we study the implication when indirect cost is used in a dynamic learning problem and show that Poisson-like learning strategy is optimal.

\newpage
%\printbibliography
\bibliographystyle{apalike}
\bibliography{report}

\newpage
\appendix
\section{Proofs in \cref{sec:characterization}}

\subsection{Key lemmas}

Below we consider a general class of sequential information acquisition process. For a given time horizon $2T$ and prior belief $\mu_0$. The information acquisition process is denoted by complete and separable signal spaces $\left\{ S_t \right\}_{t=1}^{2T}$, where each $S_{2t}$ nests $S_{2t-2}$ and $S_{2t-1}$, and a regular probability measure $f$ on $X\times \prod S_t$. $f$ is defined by:\footnote{``Conditional distributions'' are formally defined using the Raydon-Nikodym derivatives, and ``regularity conditions'' are the usual absolute continuity conditions for their existence.}
\begin{align*}
	f(x,s_1,\cdots,s_{2T})=\mu_0(x)\prod_{t=0}^{T-1} f(s_{2t+1}|s_{2t},x)f(s_{2t+2}|s_{2t+1},s_{2t})
\end{align*}
By definition, $s_{2t+1}$ ``adds'' information about $x$ to $s_{2t}$. Then $s_{2t+2}$ ``discards'' information from $s_{2t+1}$. Since $S_{2t}$'s are assumed to be nested, $s_{2t}$ can potentially record the entire path of $s_0,\cdots,s_{2t-1}$, and hence perfect memory is also admissible. Then the total cost of the process $f$ is:
\begin{align*}
	\sum_{t=0}^{T-1}\E_{s_{2t}} \left[ C( \nu(s_{2t+1}|s_{2t}) ) \right]
\end{align*}

\begin{lem}\label{lem:markov}
	$\forall C\in\mathcal{C}$ satisfying \cref{axiom:0}, $\forall$ $2T$-period signal process $\langle s_t \rangle$ defined as before, there exists $2T$-period $\langle \mu_t \rangle$ satisfying \cref{defi:rep} s.t. the posterior induced by $s_{2t}$ is distributed as $\mu_{2T}$ and:
	\begin{align*}
		\sum_{t=0}^{T-1}\E_{\mu_{2t}}\left[ C(\pi_t(\mu_{2t+1}|\mu_{2t})) \right]\le\sum_{t=0}^{T-1}\E_{s_{2t}} \left[ C( \nu(s_{2t+1}|s_{2t}) ) \right]
	\end{align*}
\end{lem}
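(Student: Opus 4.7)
The plan is to construct a Markov belief chain $\langle \mu_t \rangle$ from the conditional expectation process $\mu^*_t := \E[x \mid \mathcal{F}_t]$, where $\mathcal{F}_t = \sigma(s_0, s_1, \ldots, s_t)$, by aggregating signal-level transition kernels into belief-level Markov kernels. The key preliminary observation is that the given density factorization forces $\mu^*_{2t} = \mu^*_{2t-1}$ almost surely: integrating future signals out of the joint density shows that both $P(x \mid s_0, \ldots, s_{2t})$ and $P(x \mid s_0, \ldots, s_{2t-1})$ are proportional in $x$ to $\mu_0(x) \prod_{\tau=0}^{t-1} f(s_{2\tau+1} \mid s_{2\tau}, x)$, because the discarding conditionals $f(s_{2\tau+2} \mid s_{2\tau+1}, s_{2\tau})$ do not involve $x$ and factor out.

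With this in hand, I would define the Markov chain as follows: take $\mu_0$ to be the prior, let each acquisition transition be the regular conditional kernel $K_t(\cdot \mid \mu) := \mathcal{L}(\mu^*_{2t+1} \mid \mu^*_{2t} = \mu)$, and let each discarding step be deterministic, $\mu_{2t+2} = \mu_{2t+1}$. The chain is Markov by construction. Condition (2) of \cref{defi:rep} is the tower-property consequence of $\mu^*$ being a martingale, $\E[\mu^*_{2t+1} \mid \sigma(\mu^*_{2t})] = \E[\E[\mu^*_{2t+1} \mid \mathcal{F}_{2t}] \mid \sigma(\mu^*_{2t})] = \mu^*_{2t}$, and condition (3) is trivial since $\mu_{2t} = \mu_{2t-1}$. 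A one-step induction on marginals shows $\mu_t \sim \mu^*_t$ for every $t$, so in particular $\mu_{2T}$ matches the distribution of the terminal posterior $\mu^*_{2T}$, which is the posterior induced by $s_{2T}$.

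For the cost inequality, fix an acquisition step $t$ and a value $v$ in the range of $\mu^*_{2t}$. The family $\{\nu(\cdot \mid s_{2t}) : \mu^*_{2t}(s_{2t}) = v\}$ consists of information structures sharing the common prior $v$, and its mixture against the conditional law of $s_{2t}$ given $\mu^*_{2t} = v$ is precisely $K_t(\cdot \mid v)$ by iterated expectations. A single application of \cref{axiom:0}, extended from finite to measurable mixtures by standard approximation, yields $C(K_t(\cdot \mid v)) \leq \E[C(\nu(\cdot \mid s_{2t})) \mid \mu^*_{2t} = v]$; integrating over $v$ against the marginal of $\mu^*_{2t}$ and summing over $t$ produces the stated inequality.

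The main obstacle is the measure-theoretic scaffolding rather than any conceptual subtlety. On general Polish signal spaces one needs disintegration to define $K_t$ and $\nu(\cdot \mid s_{2t})$ as genuine Markov kernels, verify joint measurability so that the aggregation identity is well-defined, and confirm integrability using boundedness of $C$ together with its piecewise continuity on the generic subspaces $\mathcal{I}(X')$. Once these are set up, the entire economic content of the lemma collapses to a single use of sub-additivity, which absorbs exactly the cost savings from replacing history-dependent randomization by belief-dependent randomization at each acquisition step.
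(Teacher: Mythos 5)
There is a genuine gap, and it comes from conditioning on the wrong object. Your preliminary observation is correct for the \emph{full-history} posterior: the factorization does force $P(x\mid s_0,\dots,s_{2t})=P(x\mid s_0,\dots,s_{2t-1})$ a.s. But the beliefs the lemma works with are those induced by the \emph{current} signal alone: the paper's $\mu_{2t}$ is the conditional of $x$ on $s_{2t}$, and $\mu_{2t+1}$ is the conditional on $(s_{2t},s_{2t+1})$. Because $f(s_{2t+2}\mid s_{2t+1},s_{2t})$ can be a genuine garbling, $P(x\mid s_{2t+2})$ is in general a strict contraction of $P(x\mid s_{2t},s_{2t+1})$; the discarding steps are not vacuous at this level, which is the whole point of condition (3) in \cref{defi:rep}. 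By setting $\mu_{2t+2}=\mu_{2t+1}$ deterministically, your chain replicates the distribution of the full-history posterior, not "the posterior induced by $s_{2T}$" as the lemma requires. A concrete failure: let $s_1$ fully reveal $x$ and let $s_2$ be pure noise independent of everything; the target terminal distribution is the point mass at the prior, while your $\mu_{2T}$ is the fully revealing distribution. (One could try to repair this by making the final odd-to-even kernel a genuine contraction implementing the garbling down to $P(x\mid s_{2T})$, but that is exactly the coupling-and-reverse-martingale verification your construction omits, and it is the nontrivial part of the paper's proof.)

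The cost comparison has the same defect. The right-hand side of the lemma involves $C(\nu(s_{2t+1}\mid s_{2t}))$, an information structure whose prior is $P(x\mid s_{2t})$ and whose posteriors are $P(x\mid s_{2t},s_{2t+1})$. When you group histories by the value $v$ of $\mu^*_{2t}=\E[x\mid\mathcal F_{2t}]$, the structures $\nu(\cdot\mid s_{2t})$ within a group do \emph{not} share the common prior $v$ (their priors are the $s_{2t}$-based beliefs, which vary within the group and generally differ from $v$), so \cref{axiom:0} cannot be applied to that mixture; moreover, the mixture of those structures is not your kernel $K_t(\cdot\mid v)$, whose prior and posteriors are full-history beliefs. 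The paper instead groups by $\mu_{2t}=P(x\mid s_{2t})$ itself, so that each $\nu(\cdot\mid s_{2t})$ in the group has prior exactly $\mu_{2t}$, defines the Markov kernels as the conditionals of the pairwise laws $m(\mu_{2t},\mu_{2t+1})$ and $m(\mu_{2t+1},\mu_{2t+2})$ (including nontrivial contraction kernels), verifies both the forward martingale and the reverse-martingale property (3), and only then applies the mixture part of \cref{axiom:0}. Your argument would be correct in the special case of perfect memory (where $s_{2t}$ records the whole history), but that case trivializes the lemma and does not cover the applications in the proof of \cref{thm:1}, where disposal steps are essential.
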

\begin{proof}
	Given any process $\langle s_t \rangle$, it induces a joint probability measure $m\left( \mu_0,\mu_1,\cdots,\mu_{2T} \right)$, where each $\mu_{2t}$ is the conditional measure of $x$ on $s_{2t}$ and each $\mu_{2t+1}$ is the conditional measure of $x$ on $\left( s_{2t},s_{2t+1} \right)$. Now we convert this measure to get a process $\langle \widehat{\mu}_t \rangle$ satisfying \cref{defi:rep}. Define joint probability measure $\widehat{m}$ by:
	\begin{align*}
		\begin{dcases}
		m(\mu_{2t},\mu_{2t+1})=m(\mu_{2t})\widehat{m}(\mu_{2t+1}|\mu_{2t})\\
		m(\mu_{2t+1},\mu_{2t+2})=m(\mu_{2t+1})\widehat{m}(\mu_{2t+2}|\mu_{2t+1})
	\end{dcases}
	\end{align*}
	then $\widehat{m}(\mu_0,\mu_1,\cdots,\mu_{2t})=\prod \widehat{m}(\mu_{2t+1}|\mu_{2t})\widehat{m}(\mu_{2t+2}|\mu_{2t+1})$. It is easy to verify by induction that $\widehat{m}(\mu_{2t})=m(\mu_{2t})$. Now we verify that the process $\langle \widehat{\mu}_t \rangle$ defined according to $\widehat{m}$ satisfies the conditions in \cref{defi:rep}. \par
	First, by definition $\langle \widehat{m}_t \rangle$ is Markov. Second, we verify the martingale property. By definition:
	\begin{align*}
		\E_{\widehat{m}}[\mu_{2t+1}|\mu_{2t}]\cdot \widehat{m}(\mu_{2t})=& \int\widehat{m}(\mu_{2t+1},\mu_{2t})\mu_{2t+1}\d \mu_{2t+1}\\
		=&\int m(\nu,\mu_{2t})\nu\d \nu\\
		=&\int \left( \int_{(s_{2t},s_{2t+1})\to(\mu_{2t},\nu)}f(s_{2t},s_{2t+1})\frac{\mu_{2t}(\cdot)f(s_{2t+1}|s_{2t},\cdot)}{f(s_{2t},s_{2t+1})} \d s_{2t},s_{2t+1}\right)\d \nu\\
		=&\int_{s_{2t}\to \mu_{2t}}\int \mu_{2t}(\cdot) f(s_{2t+1}|s_{2t},\cdot)\d s_{2t+1}\d s_{2t}\\
		=&\mu_{2t}\cdot m(\mu_{2t})= \mu_{2t}\cdot \widehat{m}(\mu_{2t})
	\end{align*}
 Notation $s\to \mu$ means $\mu$ is the posterior belief induced by signal $s$. The second equality is by definition of $\widehat{m}$. The third equality is by the Bayes rule that determines $\mu_{2t+1}$. The forth equality is by 1) $s_{2t}$ determines $\mu_{2t}$ 2) Fubini theorem. The last equality is straight forward. 
	\small
	\begin{align*}
		&\E_{\widehat{m}}[\mu_{2t+1}|\mu_{2t+2}]\cdot \widehat{m}(\mu_{2t+2})\\
		=&\int \widehat{m}(\mu_{2t+1},\mu_{2t+2})\mu_{2t+1}\d \mu_{2t+1}\\
		=&\int m(\mu_{2t},\mu_{2t+1})\cdot \frac{m(\mu_{2t+1},\mu_{2t+2})}{m(\mu_{2t+1})}\mu_{2t+1}\d \mu_{2t},\mu_{2t+1}\\
		=&\int m(\nu,\mu_{2t+2}) \nu \d \nu\\
	=&\int\left(\int \left( \int_{(s_{2t},s_{2t+1},s_{2t+2})\to(\mu,\nu,\mu_{2t+2})} f(s_{2t},s_{2t+1},s_{2t+2})\cdot \frac{\mu(\cdot)f(s_{2t+1}|s_{2t},\cdot)}{f(s_{2t},s_{2t+1})} \d s_{2t},s_{2t+1},s_{2t+2} \right)\d \mu \right)\d \nu\\
	=&\int\left(\int \left( \int_{(s_{2t},s_{2t+1},s_{2t+2})\to(\mu,\nu,\mu_{2t+2})} f(s_{2t},s_{2t+1},s_{2t+2})\cdot \mu_{2t+2} \d s_{2t},s_{2t+1},s_{2t+2} \right)\d \mu \right)\d \nu\\
	=&\mu_{2t+2}\cdot \widehat{m}(\mu_{2t+2})
	\end{align*}
	\normalsize
	The second equality is by definition of $\widehat{m}$. The forth equality is by the Bayes rule that determines $\mu_{2t+1}$. The fifth equality is by the Bayes rule that determines $\mu_{2t+2}$. The last equality is straight forward. Moreover, by definition $\widehat{m}$ always has the same marginal distribution as $m$. So the distributions of induced belief at period $2T$ are the same. Therefore, $\widehat{m}$ defines a process $\langle \widehat{\mu}_t \rangle$ satisfying \cref{defi:rep}.\par

	Now we show that the cost of $\langle \widehat{\mu}_t \rangle$ is weakly lower than that of $\langle s_t \rangle$:
	\begin{align*}
		\E_{s_{2t}}\left[ C(\nu(s_{2t+1}|s_{2t})) \right]=& \E_{\mu_{2t}}\left[ \E_{s_{2t}}\left[C( \nu(s_{2t+1}|s_{2t}) ) |\mu_{2t} \right] \right]\\
		\ge&\E_{\mu_{2t}}\left[ C\left( \E_{s_{2t}}[\nu(s_{2t+1}|s_{2t})|\mu_{2t}] \right) \right]\\
		=&\E_{\mu_{2t}}\left[ C(\widehat{m}(\cdot|\mu_{2t})) \right]
	\end{align*}
	The inequality is by the second property of \cref{axiom:0}.
\end{proof}

\subsection{Proof of \cref{thm:1}}
\begin{proof} 
	First, we verify the continuity of $C^*(\pi)$. By Prokhorov theorem, $\Delta^2(X)$ is a compact and separable metric space equipped with the L\'{e}vy-Prokhorov metric (henceforth, \emph{l-p} metric). Wlog, we consider the open set of generic information structures in $\Delta^2(X)$. The analysis applies to any generic subspace. $\forall \pi$, since the set of generic information structures is open, there exists an interior closed ball $B_{\delta_0}(\pi)$. Then, Heine-Cantor theorem implies that $C(\pi)$ is uniformly continuous on $B_{\delta_0}(\pi)$. $\forall \epsilon>0$, $\forall \pi \in \Delta^2(x)$, let $\delta<\delta_0$ be the uniform continuity parameter of $C$ w.r.t. $\epsilon'<\epsilon$.\par
	\emph{Upper semi-continuity}: $\forall \pi$ and $\pi'\in B_{\delta}(\pi)$. Let $\mu_0=\E_{\pi}[\nu]$ and $\mu'_0=\E_{\pi'}[\nu]$. Then $\|\mu_0-\mu'_0\|< 2\delta$.\footnote{$\pi'-\pi$ can be written as $m^+-m^-$ where both are positive measure and bounded by $\delta$ by the definition of $l-p$ metric. Then $\|\mu_0-\mu_0'\|=\|\int \nu (\d m^+- \d m^-)\|\le 2\delta$. } Since $B_{\delta_0}(\pi)$ is interior, $\delta$ can be picked small enough s.t. $\forall \pi'$, there exists $\nu$ s.t. $\mu_0'=\alpha \nu+(1-\alpha)\mu_0$ and $\alpha<\epsilon'$.\footnote{A formal proof should be added.} Now pick $\langle \mu_t \rangle\rep \pi$ and with totally cost lower than $C^*(\pi)+\epsilon$, construct a sequential learning strategy for $\pi'$: 1) acquire some information and get posterior $\mu_0$ and $\nu$, 2) conditional on $\mu_0$, follow $\langle \mu_0 \rangle$. 3) contract terminal belief by mixing $\mu_{2T}$ and $\nu$ with probability $1-\alpha$ and $\alpha$ and get $\mu_{2T+1}$. By construction, $\mu_{2T+1}$ shifts $\pi$ by $(\mu_0'-\mu_0)$. Let $\pi''$ be the distribution of $\mu_{2T+1}$. Then $\pi''$ has the same mean as $\pi'$ and $d_{l-p}(\pi',\pi'')\le d_{l-p}(\pi,\pi')+d_{l-p}(\pi,\pi'')<3\delta$. Denote $\pi'-\pi''=m^+-m^-$ where both are positive measures, bounded by $3\delta$ and satisfy $\E_{m^+}[\nu]=\E_{m^-}[\nu]$. 4) contract $m^-$ and then acquire posterior according to $m^+$. Then we replicated $\pi'$ through 1)-4). Count the total cost: step 1) acquires information structure within $B_{2\delta+\epsilon'}(\delta_{\mu_0})$; step 2) incurs cost weakly less than $C^*(\pi)+\epsilon$; step 3 incurs zero cost and step 4) acquires some information with less than $3\delta$ probability. By \cref{lem:markov}, this process can always be modified to satisfy \cref{defi:rep}, replicate $\pi$ and has weakly lower cost. Therefore, if we choose $\epsilon'$ sufficiently small, the total cost is bounded above by $C^*(\pi)+3\epsilon$, hence $\varlimsup_{\pi'\to\pi}C^*(\pi')\le C^*(\pi)$. \par
	\emph{lower semi-continuity}: $\delta$ can be picked sufficiently small that $\forall \mu_0$, there also exists $\nu'$ and $\alpha'$ s.t. $\mu_0=\alpha\nu'+(1-\alpha)\mu_0' $. Then previous argument also shows that $C^*(\pi)\le \varliminf_{\pi'\to \pi} C^*(\pi')$. \par
	Therefore, since $C^*(\pi)$ is both upper semi-continuous and lower semi-continuous, $C^*(\pi)$ is continuous at any generic $\pi$. Since $0\le C^*(\pi)\le C(\pi)$, $C^*(\pi)$ is bounded.\par
	
	Second, we show that $\forall C^*\in \phi(\mathcal{C})$, $C^*$ satisfies \cref{axiom:0,axiom:1,axiom:2}. \par
	\cref{axiom:0}: Since $0\le C^*(\pi)\le C(\pi)$, it is trivial that $C^*(\delta_{\mu})\equiv 0$. $\forall \mu_0\in \Delta(X)^o$, consider $A=\left\{ \pi\in \Delta^2(X)|\E_{\pi}[\nu]=\mu_0 \right\}$. $A$ is a compact and separable subset of generic information structures. $\forall \epsilon$, there exists a finite $\epsilon$-net of $A$. Now $\forall P\in \Delta(A)$, discretizing $P$ on the $\epsilon$-net gives finite distribution $\widehat{P}$ $\epsilon$-close to $P$ (under $l-p$ metric). Therefore, there exists finite distributions $\widehat{P}$ converging to $P$.  Now given $\widehat{P}$, $\forall \epsilon$, there exists a uniform upper-bound $T$ for all $\pi$ in $\text{supp} (\widehat{P})$ such that $\langle \mu_t \rangle_{t=0}^{2T}$ replicates $\pi$ and the total cost is lower than $C^*(\pi)+\epsilon$. This implies $C^*(\E_{\widehat{P}})\le \E_{\widehat{P}}[C^*(\pi)]+\epsilon$. By continuity of $C^*$ (shown in the first part), $\E_{\widehat{P}}[C^*(\pi)]\to \E_{P}[C^*(\pi)]$ and $C^*(\E_{\widehat{P}}[\pi])\to C^*(\E_{P}[\pi])$. To sum up, $C^*(\E_P[\pi])\le\E_{P}[C^*(\pi)]$.
	\par
	\cref{axiom:1}: $\forall \pi,\pi'\in \Delta^2(X)$ and $\pi\le_{BW}\pi'$, by definition, there exists $\pi''(\nu|\mu)$ s.t. $\pi'(\nu)=\E[\pi(\mu)\pi''(\nu|\mu)]$ and $\E[\pi''(\nu|\mu)]=\mu$. From joint distribution $\pi(\mu)\pi''(\nu|\mu)$, we can obtain marginal distribution $\widehat{\pi}(\mu|\nu)$. Now $\forall$ 2T-period $ \langle \mu_t \rangle$ replicating $\pi'$, define $2T$-period $\langle \widehat{\mu}_t \rangle$ replicating $\pi$: $\widehat{\mu}_t=\mu_t$ when $t<2T$ and $\widehat{\mu}_{2T}|\widehat{\mu}_{2T-1}\sim \E\left[ \widehat{\pi}(\widehat{\mu}_{2T}|\mu_{2T})|\widehat{\mu}_{2T-1} \right]$. It is easy to verify that $\langle \widehat{\mu}_t \rangle$ satisfies the conditions in \cref{defi:rep} and hence $\langle \widehat{\mu}_t \rangle$ replicates $\pi$. Noticing that $\sum C(\pi_t(\mu_{2t+1}|\mu_{2t}))=\sum C(\pi_t(\widehat{\mu}_{2t+1}|\widehat{\mu}_{2t}))$ and therefore $C^*(\pi)\le C^*(\pi') $. \cref{axiom:1} is verified.\par
	\cref{axiom:2}: $\forall \pi(\mu),\pi'(\nu|\mu)$ and $\pi''(\nu)=\E_{\pi}[\pi'(\nu|\mu)]$. $\forall \epsilon>0$. Pick any $\delta>0$ and take the closures of $\delta$-interior points of all $\Delta(X')$'s, denote it by $D^{\delta}$. Then open set $\Delta(X)\setminus D^{\delta}$ is shrinking to an empty set and hence there exists $\delta$ s.t. $\pi(D^{\delta})>1-\epsilon$. Now we construct a sequence of information structures that replicates $\pi''$. \par
	First, let $\mu_0=\E_{\pi}[\nu]$. Let $\mu_{0}'=\E_{\pi}[\nu|\nu\in D^{\delta}]$ and $\mu_{0}''=\E_{\pi}[\nu|\nu\not\in D^{\delta}]$. Then $\mu_0=\pi(D^{\delta})\mu_0'+\pi(\Delta(X)\setminus D^{\delta})\mu_0''$. Define information structure $\pi_0^{\delta}$ with support $\left\{ \mu_0',\mu_0'' \right\}$ and the corresponding probabilities. Define information structures $\pi'^{\delta}_0(\nu)=\pi(\nu|\nu\in \Delta(X)\setminus D^{\delta})$. Now partition $D^{\delta}$ to finite Borel subsets each of diameter $\eta<\delta$, denote the partition by $\left\{ D_i^{\delta,\eta} \right\}$. Define $\widetilde{\pi}^{\delta,\eta}$ with support $\left\{ \nu_i=\E_{\pi}[\nu|\nu\in D_i^{\delta,\eta}] \right\}$ and distribution $\widetilde{\pi}^{\delta,\eta}(\nu_i)=\pi(D_i^{\delta,\eta})$. Define $\widetilde{\pi}_i'{}^{\delta,\eta}=\E_{\pi}\left[ \pi'(\nu|\mu)|\mu\in D^{\delta,\eta}_i \right]$. Now consider the following sequential information structure: First acquire $\pi_0^{\delta}$, then acquire $\pi_0'^{\delta}$ and then $\pi'$ conditional on $\mu_{0}''$, acquire $\widetilde{\pi}^{\delta,\eta}$ conditional on $\mu_0'$. Then following $\widetilde{\pi}^{\delta,\eta}$, acquire $\widetilde{\pi}_i'{}^{\delta,\eta}$ conditional on $\nu_i$. New we verify that the sequential information structure replicates $\pi''$: $\forall$ Borel set $U\subset \Delta(X)$,
	\begin{align*}
		\text{Prob}(U)=&\text{Prob}(U|\mu_0')\pi_0^{\delta}(\mu_0')+\text{Prob}(U|\mu_0'')\pi_0^{\delta}(\mu_0'')\\
		=&\sum_{i}\text{Prob}(U|\nu_i,\mu_0')\widetilde{\pi}^{\delta,\eta}(\nu_i)\pi_0^{\delta}(\mu_0')+\E_{\pi_0'{}^{\delta}}[\pi'(U|\mu)]\pi_0^{\delta}(\mu_0'')\\
		=&\sum_i\widetilde{\pi}_i'{}^{\delta,\eta}(U)\widetilde{\pi}^{\delta,\eta}(\nu_i)\pi_0^{\delta}(\mu_0')+\E_{\pi}\left[ \pi'(U|\mu)|\mu\in \Delta(X)\setminus D^{\delta} \right]\pi(\Delta(X)\setminus D^{\delta})\\
		=&\sum_i\E_{\pi}\left[ \pi'(U|\mu)|\mu\in D_i^{\delta,\eta} \right]\pi(D_i^{\delta,\eta})\pi(D^{\delta})+\E_{\pi}\left[ \pi'(U|\mu)|\mu\in \Delta(X)\setminus D^{\delta} \right]\pi(\Delta(X)\setminus D^{\delta})\\
		=&\E_{\pi}[\pi'(\nu|\mu)]=\pi''(U)
	\end{align*}
	By definition, when $\delta,\eta\to 0$, $\pi_0^{\delta}\xrightarrow{w-*} \delta_{\mu_0}$, $\widetilde{\pi}^{\delta,\eta}\xrightarrow{w-*} \pi$. By continuity of $C$ and $C^*$, $C(\pi_0^{\delta})\to 0$ and $C^*(\widetilde{\pi}^{\delta,\eta})\to C^*(\pi)$. Now we calculate the cost of $\widetilde{\pi}'_i{}^{\delta,\eta}$. $\forall i$, $\forall \mu\in D_i^{\delta,\eta}$, by definition of $\delta$ and $\eta$, $\|\mu-\nu_i\|\le \eta$ and there exists $\mu'\in \Delta(X)$ s.t. $\|\mu-\mu'\|\ge \delta$ and $\nu_i$ is a linear combination of $\mu,\mu'$. Define information structure $\pi'_{i,\mu}(\cdot)=\frac{\|\mu'-\nu_i\|}{\|\mu'-\mu\|} \pi'(\cdot|\mu)+\frac{\|\nu_i-\mu\|}{\|\mu'-\mu\|}\delta_{\mu'}$. Then $d_{l-p}(\pi'(\cdot|\mu),\pi'_{i,\mu}(\cdot))\le \frac{\eta}{\eta+\delta}$. Now consider information structure $\widehat{\pi}_i'{}^{\delta,\eta}=\E_{\pi}[\pi'_{i,\mu}(\nu)|\mu\in D_{i}^{\delta,\eta}]$, then $d_{l-p}(\widetilde{\pi}'_i{}^{\delta,\eta},\widehat{\pi}'_i{}^{\delta,\eta})\le \frac{\eta}{\eta+\delta}$ (because conditional on each $\mu\in D_i^{\delta,\eta}$, the measure on any Borel set differs by at most $\frac{\eta}{\eta+\delta}$). Since $C^*$ is continuous on $D^{\delta}$, and hence uniformly continuous by Heine-Cantor, there exists $\eta$ s.t. $\frac{\eta}{\eta+\delta}$ is the uniform continuity parameter w.r.t. $\epsilon$ for $C^*$. Therefore:
	\begin{align*}
		\E_{\pi}\left[ C^*(\pi'(\nu|\mu)) |\mu\in D_{i}^{\delta,\eta} \right]\ge&\E_{\pi}\left[ C^*(\pi'_{i,\mu}(\nu))|\mu\in D_i^{\delta,\eta} \right]-\epsilon\\
		\ge&C^*\left( \E_{\pi}\left[ \pi'_{i,\mu}(\nu) |\mu\in D_i^{\delta,\eta} \right] \right)-\epsilon\\
		\ge&C^*\left( \widetilde{\pi}'_i{^{\delta,\eta}} \right)-2\epsilon
	\end{align*}
	The two $\epsilon$ each comes from the distance between $\pi',\pi'_{i,\mu}$ and $\widetilde{\pi}'_i{}^{\delta,\eta},\widehat{\pi}'_i{}^{\delta,\eta}$. The second inequality is implied by \cref{axiom:0}, which is verified before.\par

	Now we construct a belief process replicating $\pi''$ and satisfy \cref{defi:rep}. $\forall \widetilde{\pi}'_i{}^{\delta,\eta}$, there exists a $2T_i$ process $\langle \mu_t^i \rangle\rep \widetilde{\pi}'_i{}^{\delta,\eta}$ such that $\sum C(\pi_t(\mu_{2t+1}^i|\mu_{2t}^i))\le C^*(\widetilde{\pi}'_i{}^{\delta,\eta})+\epsilon$. There also exists a $2T_0$ process $\langle \mu_t^0 \rangle\rep \widetilde{\pi}^{\delta,\eta}$ s.t. $\sum C(\pi_t(\mu_{2t+1}^{0}|\mu_{2t}^0))\le C^*(\widetilde{\pi}^{\delta,\eta})+\epsilon$. Let $T=\max\left\{ T_i \right\}+T_0+1$.\par
	First, let $\mu_{1}\sim \pi_0^{\delta}$, $\mu_2=\mu_1$. If $\mu_2=\mu_0''$, $\mu_3\sim \E_{\pi_0'{}^{\delta}}\left[ \pi'(\nu|\mu) \right]$ and all $\mu_t\equiv \mu_3$ for $t>3$. If $\mu_2=\mu_0'$, $\mu_{t+2}\equiv \mu^0_t$, and $\mu_{t+T_0+2}\equiv \mu_t^i$ conditional on $\mu_{T_0+2}=\nu_i$. The total direct cost of this process is:
	\begin{align*}
		&C(\pi_0^{\delta})+\pi_0^{\delta}(\mu_0'')C\left( \E_{\pi'_0{}^{\delta}}[\pi'(\nu|\mu)] \right)\\
		&+\pi_0^{\delta}(\mu_0')\left( \sum_{t=0}^{2T_0}C(\pi_t(\mu^0_{2t+1}|\mu^0_{2t}))+\sum_i\widetilde{\pi}^{\delta,\eta}(\nu_i)\left( \sum_{t=0}^{2T_i}C(\pi_t(\mu_{2t+1}^i|\mu_{2t}^i))\right) \right)\\
		\le&	C(\pi_0^{\delta})+(1-\pi(D^{\delta}))C\left( \E_{\pi'_0{}^{\delta}}[\pi'(\nu|\mu)] \right)\\
			 &+\pi(D^{\delta})\left( C^*(\widetilde{\pi}^{\delta,\eta})+\epsilon+\sum_i\pi(D_i^{\delta,\eta}|D^{\delta})C^*(\widetilde{\pi}'_i{}^{\delta,\eta})+\epsilon \right)\\
		\le&C(\pi_0^{\delta})+(1-\pi(D^{\delta}))C\left( \E_{\pi'_0{}^{\delta}}[\pi'(\nu|\mu)] \right)\\
			 &+\pi(D^{\delta})\left( C^*(\widetilde{\pi}^{\delta,\eta})+ \sum_i\pi(D_i^{\delta,\eta}|D^{\delta})\E_{\pi}[C^*(\pi'(\nu|\mu))|\mu\in D_i^{\delta,\eta}] +4\epsilon\right)\\
		\to&C^*(\pi)+\E_{\pi}[C^*(\pi'(\nu|\mu))]\text{ when }\delta\to0\&\frac{\eta}{\delta}\to 0
	\end{align*}
	By \cref{lem:markov}, the process $\langle \mu_t \rangle$ can always be transformed to one satisfying \cref{defi:rep} with lower total direct cost. This suggests that $C^*(\pi'')\le C^*(\pi)+\E_{\pi}[C^*(\pi'(\nu|\mu))]$.

\end{proof}

\subsection{Proof of \cref{lem:diff}}
\begin{proof}
	We first characterize the local quadratic approximation of $\widehat{C}_n(P,\mu)$ for a given alphabet with size $n$.\par
	First of all, it is easy to see that $C(\delta_{\mu})\equiv 0\ \implies\ \Lambda_n(p,\mu)\equiv 0$, and $C(\cdot)\ge0$ $\implies$ $\Theta_n(p,\mu)\equiv 0$. Now we study the quadratic term. $\forall \mu$, $\forall p\in \Delta(n)^o$. Define an approximately uninformative signal structure $P=p\cdot \mathbf{1}^T+\alpha_1\cdot p_1\cdot e_{x_1}^T+\alpha_2\cdot p_2\cdot e_{x_2}^T$, where $\mathbf{1}$ is the $|X|$-dimensional column vector with all $1$'s, $p_1,p_2$ are $n$-dimensional column vectors that add up to $0$, $e_{x_1},e_{x_2}$ are $|X|$-dimensional column vectors with all $0$'s except $x_i$-th row being $1$.\par
	$\forall \mu_0$, given $\epsilon,\delta$ chosen as in \cref{ass:1}. Fix $p,p_1,p_2$, when $\alpha_1,\alpha_2$ are sufficiently small, $P$ satisfies $\sum p_i\|P_i/p_i-\mathbf{1}\|^2\le \delta$ and $\max\|P_i/p_i-\mathbf{1}\|^2\le \delta$. Therefore, \cref{ass:1} implies $\forall \mu\in B_{\delta}(\mu_0)$:
	\begin{align*}
		\widehat{C}_n(P,\mu)=&\left( \alpha_1\cdot p_1\cdot e_{x_1}^T+\alpha_2\cdot p_2\cdot e_{x_2}^T \right)^T\cdot Q_n(p,\mu)\cdot\left( \alpha_1\cdot p_1\cdot e_{x_1}^T+\alpha_2\cdot p_2\cdot e_{x_2}^T \right)+O\left(\epsilon \sum p_i\|P_i/p_i-\mathbf{1}\|^2 \right)\\
		=&\alpha_1^2p_1^TQ_n^{x_1,x_1}(p,\mu)p_1+\alpha_2^2p_2^TQ_n^{x_2,x_2}(p,\mu)p_2+2x_1x_2p_1^TQ_n^{x_1,x_2}(p,\mu)p_2\\
		 &+O\left( \epsilon\cdot \left( 2\alpha_1^2 \sum \frac{p_{1,i}^2}{p_i} +2\alpha_2^2\sum \frac{p_{2,i}^2}{\pi_i}\right) \right)
	\end{align*}
	This is to say, consider $\widehat{C}_n(P,\mu)$ as a function $G(p_1,p_2,\alpha_1,\alpha_2,n,p,\mu)$, $G$ is twice differentiable in $(\alpha_1,\alpha_2)$ and:
	\begin{align*}
		\frac{\partial^2G(p_1,p_2,\alpha_1,\alpha_2,n,p,\mu)}{\partial\alpha_1\partial\alpha_2}=p_1^TQ_n^{x_1,x_2}(p,\mu)p_2
	\end{align*}
	Define inner product on $\mathbb{R}^{n}$ as $\langle p_1,p_2 \rangle_{n,p,\mu}=\frac{\partial^2G(p_1,p_2,\alpha_1,\alpha_2,n,p,\mu)}{\partial\alpha_1\partial\alpha_2}\big|_{\alpha_1,\alpha_2=0}$. Since $p$ is defined in $\Delta(n)$ with property $\sum p_i\equiv 1$, the space of vectors in $\mathbb{R}^n$ adding up to $0$ defines exactly the tangent space $M_{p}$. Take any invertible Markov matrix $\Omega$ on $\mathbb{R}^{n\times n}$. $\forall U,V\in M_p$:
	\begin{align*}
		\langle \Omega U,\Omega V \rangle_{n,\Omega p,\mu}=&\frac{\partial^2 G(\Omega U,\Omega V,\alpha_1,\alpha_2,n,\Omega p,\mu)}{\partial\alpha_2\partial\alpha_2}\big|_{\alpha_1,\alpha_2=0}\\
		=&\frac{\partial^2}{\partial\alpha_1\partial\alpha_2} \widehat{C}_n\left( \Omega p+\alpha_1\Omega Ue^T_{x_1}+\alpha_2\Omega Ve^T_{x_2},\mu \right)\big|_{\alpha_1,\alpha_2=0}\\
		=&\frac{\partial^2}{\partial\alpha_1\partial\alpha_2} \widehat{C}_n\left( \Omega (p+\alpha_1Ue^T_{x_1}+\alpha_2Ve^T_{x_2}),\mu \right)\big|_{\alpha_1,\alpha_2=0}\\
		=&\frac{\partial^2}{\partial\alpha_1\partial\alpha_2} \widehat{C}_n\left( p+\alpha_1Ue^T_{x_1}+\alpha_2Ve^T_{x_2},\mu \right)\big|_{\alpha_1,\alpha_2=0}\\
		=&\langle U,V \rangle_{n,p,\mu}
	\end{align*}
	The forth equality is from the fact that multiplying the signal structure by invertible Markov matrix $\Omega$ does not change the distribution of induced posterior beliefs and $\widehat{C}_n(P,\mu)$ is defined by $C(\pi)$, which only depends on the distribution of posterior beliefs. Therefore, inner product $\langle \cdot,\cdot \rangle_{n,p,\mu}$ defines a continuous Riemannian metric on $\Delta(n)^o$, and is isometric in congruent embedding by Markov mapping. By C\'{e}ncov theorem, $\langle \cdot,\cdot \rangle_{n,p,\mu}$ is proportional to a Fisher information metric, i.e. exists $q$ s.t.
	\begin{align*}
		\langle p_1,p_2 \rangle_{n,p,\mu}=q\cdot\sum_i \frac{p_{1,i}p_{2,i}}{p_i}
	\end{align*}
	Since the inner product is defined fixing $\mu,n,x_1,x_2$, the constant $q$ depends on all these parameters. Let $q_n^{x_1,x_2}(\mu)$ denote the map from the parameters to $q$. We have representation:
	\begin{align*}
		&p_1^TQ_n^{x_1,x_2}(p,\mu)p_2=\sum_i \frac{p_{1,i}p_{2,i}}{p_i}q^{x_1,x_2}_n(\mu)\\
		\implies&P_1^TQ_n(p,\mu)P_2=\sum_i \frac{1}{p_i}P_{1,i}\cdot Q_n(\mu)\cdot P_{2,i}^T\\
		\implies&\widehat{C}_n(P,\mu)=\sum_i \frac{1}{p_i} \left( P_i^T-p_i\cdot \mathbf{1}^T \right)Q_n(\mu)\left( P_i-p_i\cdot \mathbf{1} \right)+O(\epsilon\sum p_i\|P_i/p_i-\mathbf{1}\|^2)
	\end{align*}
	where $Q_n(\mu)$ is defined as the matrix $\left\{ q_{n}^{x_1,x_2}(\mu) \right\}$ in $\mathbb{R}^{|X|^2}$, $P$ satisfies \cref{ass:1}. $P_i$ is the column vector representing the conditional distribution of signal $i$ on each state. Since $Q_n(p,\mu)$ is positive semi-definite and continuous in $\mu$, $Q_n(\mu)$ is also positive semi-definite and continuous in $\mu$. \par
	Now we show that $C_n(\mu)$ does not depend on $n$. $\forall P\in \mathbb{R}^{2|X|}$ and $p\in \Delta(2)$, $P$'s rows add up to $0$, consider the cost of $P^1=p\cdot \mathbf{1}^T+\eta P$. $\widehat{C}_2(P^1,\mu)\approx \eta^2 \sum \left( P_i/\sqrt{p_i} \right)^TQ_2(\mu)(P_i/\sqrt{p_i}) $. Now for any $n>2$, consider the signal structure where $P^n_1=P^1_1$ and $P^n_i=\frac{1}{n-1}P^1_2$. The cost is 
	\begin{align*}
		\widehat{C}_n(P^n,\mu)\approx& \eta^2 \left( (P_1/\sqrt{p_1})^TQ_n(\mu)(P_1/\sqrt{p_1})+(n-1)\left( \frac{P_2/(n-1)}{\sqrt{p_2/(n-1)}} \right)^TQ_n(\mu)\left( \frac{P_2/(n-1)}{\sqrt{p_2/(n-1)}} \right) \right)\\
		=&\eta^2\sum (P_i/\sqrt{p_i})^TQ_n(\mu)(P_i/\sqrt{p_i})
	\end{align*}
	$P^n$ induces exactly the same distribution of posterior beliefs as $P^1$, therefore $\widehat{C}_2(P^1,\mu)\equiv\widehat{C}_n(P^n,\mu)$. This implies $Q_n(\mu)\equiv Q_2(\mu)$. For simplicity, we denote them by $Q(\mu)$.\par
	To summarize, $\forall \mu_0\in \Delta(X)^o$, given $\epsilon,\delta$ chosen as in \cref{ass:1}, $\forall p,n,\mu\in B_{\delta}(\mu_0)$, $\forall P$ satisfy \cref{ass:1}:
	\begin{align*}
		\left| \widehat{C}_n(P,\mu)-\sum_i \frac{1}{p_i} \left( P_i^T-\mathbf{1}^T\cdot p_i \right)\cdot Q(\mu) \left( P_i-p_i\cdot \mathbf{1} \right) \right|\le \epsilon \sum p_i\|P_i/p_i-\mathbf{1}\|^2
	\end{align*}
	where $Q(\mu)$ is continuous.\footnote{Notice that if we define $P=p'\cdot \mathbf{1}^T$, then $\widehat{C}_n(P,\mu)\equiv 0$ because this information structure is uninformative. Therefore, we can derive a necessary condition on $Q(\mu)$: $Q(\mu)\cdot\mathbf{1}\equiv 0$.\label{fn:C1=0}} Now consider the corresponding distribution of posterior $\pi$ induced by $P$. Pick a $\delta'>0$. $\forall \pi$ with finite support $\left\{ \mu_i \right\}$ and satisfying $\text{supp}(\pi)\subset B_{\delta'(\mu_0)}$. Let $\mu=\E_{\pi}[\mu_i]$. Construct information structure defined as $P_{i,x}=\pi(\mu_i)\mu_i(x)/\mu(x)$. Let $p_i=\pi(\mu_i)$. Then the likelihood ratio $\frac{P_i}{p_i}=\frac{\mu_i}{\mu}$. $\delta'$ can be picked sufficiently small that for all such $\pi$ $\max\|\mu_i/\mu-\mathbf{1}\|^2\le \delta$. Therefore, the previous result implies:
	\begin{align*}
		&\left| C(\pi)-\sum_i \frac{1}{p_i} \left( P_i^T-\mathbf{1}^T\cdot p_i \right)\cdot Q(\mu) \left( P_i-p_i\cdot \mathbf{1} \right) \right|\le \epsilon \sum p_i\|P_i/p_i-\mathbf{1}\|^2\\
		\iff&\left|C(\pi)-\sum_i p_i \left( \frac{\mu_i}{\mu}-1 \right)^T Q(\mu)\left( \frac{\mu_i}{\mu}-1 \right)  \right|\le \epsilon \sum p_i\|\mu_i/\mu-1\|^2\\
			\implies &\left| C(\pi)-\sum_i p_i (\mu_i-\mu)^T\text{diag}(\mu)^{-1}Q(\mu)\text{diag}(\mu)(\mu_i-\mu) \right|\le \epsilon\sum p_i\|\mu_i-\mu\|^2\cdot\max\left\{ \mu(x)^{-2} \right\}
	\end{align*}
	Define $B(\mu)=\text{diag}(\mu)^{-1}Q(\mu)\text{diag}(\mu)$, $\epsilon'=\epsilon\cdot\max\left\{ \mu(x)^{-2} \right\}$ then:
	\begin{align*}
		\left|C(\pi)-\int (\nu-\mu)^TB(\mu)(\nu-\mu)\d\pi(\nu) \right|\le \epsilon'\int \|\nu-\mu\|^2\d \pi(\nu)
	\end{align*}
	Notice that $\forall \mu_0\in \Delta(X)^o$, we can always pick $\delta'$ small enough s.t. $\max\left\{ \mu(x)^{-2} \right\}$ is bounded. Therefore, $\epsilon'$ can always be chosen arbitrarily small.\par
	The last step is to extend the expression to general $\pi\in \Delta(B_{\delta'}(\mu_0))$. This is easy to prove by observing that both $(\nu-\mu)^TB(\mu)(\nu-\mu)$ and $\|\nu-\mu\|^2$ are continuous functions on $\Delta^2(X)$. Therefore, both sides of the inequality converges  when $\pi$ weakly converges. So any general $\pi$ can be approximated by $\pi$ with finite support.\par
	By \cref{fn:C1=0}, $Q(\mu)\cdot \mathbf{1}\equiv 0$ $\iff$ $B(\mu)\cdot \mu\equiv 0$. Which means not all positive semi-definite matrix $B(\mu)$ can be generated by some $Q(\mu)$. However $\forall$ PSD matrix $B(\mu)$, we can always normalize it to $\widetilde{B}(\mu)=B(\mu)-(\mu^TB(\mu)\mu)\mathbf{1}\cdot\mathbf{1}^T$, without affecting the value of any $(\nu-\mu)^T\widetilde{B}(\mu)(\nu-\mu)$ and $\widetilde{B}(\mu)\cdot \mu=0$.\footnote{Proof is simple by observing that $\mu^T\widetilde{B}(\mu)\mu=0$. And any other vector $\mu'$ satisfies $\mu'\widetilde{B}(\mu)\mu'=(\mu'-\mu)^TB(\mu)(\mu'-\mu)\ge0$. So $\widetilde{B}(\mu)$ is also PSD and $\mu$ is its eigenvector with eigenvalue $0$.} 
\end{proof}

\subsection{Proof of \cref{thm:ps}}

\begin{lem}\label{lem:divergence}
	Suppose $C(\pi)=\int D(\nu||\mu)\d \pi(\nu)\subset \phi(\mathcal{C})$. $D(\nu||\mu)$ is continuously differentiable in $\mu$ and twice differentiable in $\nu$, then $C(\pi)$ is uniformly posterior separable.
\end{lem}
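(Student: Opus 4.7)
The plan is to exploit subadditivity---which $C$ inherits from $C\in\phi(\mathcal{C})$ via \cref{thm:1}---together with the smoothness of $D$ to derive a first-order PDE forcing $D$ into Bregman form, after which uniform posterior separability is immediate. Applying \cref{axiom:2} to a two-step decomposition where the first step $\pi$ has mean $\mu_0$ and the conditional second step $\pi'(\cdot|\mu)$ has mean $\mu$, and invoking posterior separability of $C$, the inequality $C(\pi'')\le C(\pi)+\E_\pi[C(\pi'(\cdot|\mu))]$ rearranges to
\begin{align*}
\int\int\bigl[D(\nu||\mu_0)-D(\mu||\mu_0)-D(\nu||\mu)\bigr]\,d\pi'(\nu|\mu)\,d\pi(\mu)\le 0
\end{align*}
for every valid decomposition.

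The key specialization is a two-atom first step $\pi=\alpha\delta_\mu+(1-\alpha)\delta_{\mu''}$ with prescribed mean $\mu_0=\mu+\epsilon\xi$ (so $\mu''=\mu+\epsilon\xi/(1-\alpha)$), an arbitrary second step $\pi'(\cdot|\mu)$ of mean $\mu$, and the trivial $\pi'(\cdot|\mu'')=\delta_{\mu''}$. The diagonal identities $D(\mu||\mu)\equiv 0$ and $\partial_\nu D(\nu||\mu)|_{\nu=\mu}=0$ (the latter from $\nu=\mu$ being a minimum of $\nu\mapsto D(\nu||\mu)$) combine, via differentiation of $D(\mu||\mu)\equiv 0$ along the diagonal, to give $\partial_\mu D(\mu||\mu)=0$, and hence $D(\mu||\mu+\epsilon\xi)=o(\epsilon)$ by $C^1$-regularity in $\mu$. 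Dividing the resulting subadditivity inequality by $\epsilon$ and sending $\epsilon\to 0$ produces $\E_{\pi'(\cdot|\mu)}[\langle\partial_\mu D(\nu||\mu),\xi\rangle]\le 0$; negating $\xi$ upgrades this to the sharp identity $\E_{\pi'(\cdot|\mu)}[\partial_\mu D(\nu||\mu)]=0$ for every $\pi'(\cdot|\mu)$ of mean $\mu$. A continuous function whose integral vanishes under every mean-$\mu$ probability measure must be linear in $\nu-\mu$ (Cauchy-type argument using two- and three-point decompositions), so $\partial_\mu D(\nu||\mu)=A(\mu)(\nu-\mu)$ for some continuous matrix-valued $A$.

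To identify $A$, I integrate the PDE along the segment from $\nu$ to $\mu$ using $D(\nu||\nu)=0$, which gives the explicit representation $D(\nu||\mu)=-\int_0^1 s\,(\mu-\nu)^T A(\nu+s(\mu-\nu))(\mu-\nu)\,ds$. Comparing the $\epsilon\to 0$ expansion of this representation at $\nu=\mu+\epsilon\eta$ (leading coefficient $-\tfrac{1}{2}\eta^T A(\mu)\eta$) with the direct Taylor expansion from twice differentiability in $\nu$ (leading coefficient $\tfrac{1}{2}\eta^T H(\mu)\eta$, where $H(\mu):=\partial_\nu^2 D(\nu||\mu)|_{\nu=\mu}$) pins down $A(\mu)=-H(\mu)$. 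Furthermore, since the $1$-form $\sum_j A_{jk}(\mu)(\nu-\mu)_k\,d\mu_j$ is the exact differential of $\mu\mapsto D(\nu||\mu)$ for each fixed $\nu$, it is automatically closed; unpacking closedness separately in the $(\nu-\mu)$-coefficient and the constant term gives both the symmetry $A_{ij}=A_{ji}$ and the integrability condition $\partial_i A_{jk}=\partial_j A_{ik}$, which is exactly the classical necessary and sufficient condition for $H=-A$ to be the Hessian of some scalar potential $\Phi$.

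Finally, the Bregman candidate $\tilde D(\nu||\mu):=\Phi(\nu)-\Phi(\mu)-\langle\nabla\Phi(\mu),\nu-\mu\rangle$ satisfies the same PDE $\partial_\mu\tilde D(\nu||\mu)=-H(\mu)(\nu-\mu)$ and boundary condition $\tilde D(\mu||\mu)=0$, so $D\equiv\tilde D$ on the convex (hence connected) domain $\Delta(X)$; non-negativity of $D$ then forces $\Phi$ convex, and $C(\pi)=\E_\pi[\Phi(\nu)]-\Phi(\E_\pi[\nu])$ is uniformly posterior separable as claimed. The main obstacle will be the regularity bookkeeping---justifying the Taylor expansions, the Fubini-type interchange of limit and $\E_{\pi'(\cdot|\mu)}$, and the path integration of the PDE under only $C^1$-in-$\mu$ and $C^2$-in-$\nu$ smoothness---and ensuring that the perturbed supports $\mu_0,\mu',\mu''$ all remain inside $\Delta(X)$, which is routine for interior priors and sufficiently small $\epsilon,\alpha,p$ but deserves attention.
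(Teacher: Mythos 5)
Your proposal follows essentially the same route as the paper's proof: the same two-step decomposition fed into \cref{axiom:2}, the same first-order expansion around the diagonal using $D(\mu\|\mu)=0$ and the vanishing of the $\mu$-gradient of $D$ on the diagonal, the same conclusion that $\E_{\pi'(\cdot|\mu)}[\partial_\mu D(\nu\|\mu)]=0$ for every mean-$\mu$ distribution and hence $\partial_\mu D(\nu\|\mu)=A(\mu)(\nu-\mu)$ with $A$ continuous, and the same final step of integrating this identity back up. Where you diverge is the endgame, and that is also where you exceed the stated regularity: to ``unpack closedness'' into $A_{ij}=A_{ji}$ and $\partial_i A_{jk}=\partial_j A_{ik}$ you must differentiate $A$ (equivalently, use equality of mixed second $\mu$-derivatives of $D$), but the hypothesis only gives $D$ continuously differentiable in $\mu$, so $A$ is merely continuous and the classical integrability criterion for $-A$ to be a Hessian field is not available as written; this is more than bookkeeping, since the closedness computation is the load-bearing step of your potential construction.

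The gap is fixable without new ideas, and the fix is in effect the paper's endgame: because $\partial_\mu D(\nu\|\mu)$ is affine in $\nu$, the fundamental theorem of calculus in $\mu$ (which only needs $C^1$ in $\mu$) shows that $D(\nu\|\mu)-D(\nu\|\mu^*)$ is affine in $\nu$ for any fixed reference prior $\mu^*$; hence $D(\nu\|\mu)=H(\nu)+L(\mu,\nu)$ with $H(\nu):=D(\nu\|\mu^*)$ and $L$ affine in $\nu$. Then $C(\pi)=\E_\pi[H(\nu)]+L(\mu,\mu)$ because the affine part integrates against $\pi$ at its mean $\mu$, and $C(\delta_\mu)=0$ (\cref{axiom:0}) pins down $L(\mu,\mu)=-H(\mu)$, giving $C(\pi)=\E_\pi[H(\nu)]-H(\E_\pi[\nu])$ directly, with convexity of $H$ following from $C\ge 0$; no potential for $A$, no identification $A=-\partial^2_\nu D|_{\nu=\mu}$, and no second derivatives in $\mu$ are ever needed. (The paper's own write-up is also cavalier at the analogous point---it differentiates the path integral through $A$---but its conclusion only requires the affine-difference observation, which survives at $C^1$ regularity; your closedness step does not.) If you replace the potential-construction paragraph with this decomposition, the rest of your argument goes through and coincides with the paper's.
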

\begin{proof}
	By the assumption of the lemma, $C(\pi)$ is an indirect information cost function. $\forall \pi$ with binary support $\left\{ \mu',\mu'' \right\}$, $\forall \pi'(\nu|\mu)$ s.t. $\E_{\pi'(\cdot|\mu')}[\nu]=\mu'$ and $\pi'(\cdot|\mu'')=\delta_{\mu''}(\cdot)$. Define $\pi''=\E_{\pi}[\pi'(\nu|\mu)]$. Then 
	\begin{align*}
		&C(\pi'')=\pi(\mu'')D(\mu''||\mu)+\pi(\mu')\int D(\nu||\mu)\d \pi'(\nu|\mu')\\
		\le&C(\pi)+\E_{\pi}[C(\pi')]=\pi(\mu')D(\mu'||\mu)+\pi(\mu'')D(\mu''||\mu)+\pi(\mu')\int D(\nu||\mu')\d \pi'(\nu|\mu')\\
		\implies& 0\le D(\mu'||\mu)+\int \left( D(\nu||\mu')-D(\nu||\mu) \right)\d \pi'(\nu|\mu')
	\end{align*}
	The inequality is by \cref{axiom:2}. Notice that $D(\nu||\mu)\ge 0$ implies $\text{J}_{\mu}D(\nu||\mu)|_{\mu=\nu}=0$. Therefore, $\forall \mu,\mu'$ close to each other:
	\begin{align*}
		0\le o(|\mu-\mu'|)+\int \text{J}_{\mu}D(\nu||\mu)|_{\mu=\mu'}\cdot(\mu-\mu')+o(|\mu-\mu'|)\d\pi'(\nu|\mu')
	\end{align*}
	Since the inequality holds for all $\mu,\mu'$ close to each other, this implies $\int J_{\mu}D(\nu||\mu)|_{\mu=\mu'}\d \pi'(\nu|\mu')=0$. Notice that a continuous function on $\Delta(X)$ averages to $0$ for any distribution with prior $\mu'$, this implies the continuous function's graph's convex hull is itself. Therefore the function must be both convex and concave, hence linear in $\nu$. Therefore, there exists $|X|^2$ dimensional matrix valued function $A(\mu')$ s.t.
	\begin{align*}
		&\text{J}_{\mu}D(\nu||\mu)|_{\mu=\mu'}=A(\mu') \cdot (\nu-\mu')\\
		\implies&D(\nu||\mu)=\int_{\nu}^{\mu}A(\mu')(\nu-\mu')\d \mu'
	\end{align*}
	Now consider the Hessian matrix of $D(\nu||\mu)$ w.r.t $\nu$: 
	\begin{align*}
		\text{H}_{\nu}D(\nu||\mu)=&\text{H}_{\nu}\left(\int_{\nu}^{\mu}A(\mu')(\nu-\mu')\d \mu'\right)\\
		=&-A(\nu)
	\end{align*}
	Notice that $A(\nu)$ does not depend on $\mu$. Therefore, $D(\nu||\mu)$ can be represented as $H(\nu)+L(\mu,\nu)$. Where $H$ has Hessian $A$ and $L$ is linear in $\nu$. Now $C(\pi)=\int D(\nu||\mu)\d\pi(\nu)=\int H(\nu)+L(\mu,\nu)\d \pi(\nu)=\E_{\pi}[H(\nu)]+L(\mu,\mu)$. \cref{axiom:0} implies $C(\delta_{\mu})=H(\mu)+L(\mu,\nu)=0\implies L(\mu,\mu)\equiv -H(\mu)$. Therefore, $C(\mu)=\E_{\pi}[H(\mu)]-H(\E_{\pi}[\mu])$.
\end{proof}

Now we prove \cref{thm:ps}

\begin{proof}
	\textbf{Necessity}: We first prove that uniformly posterior separable $\phi(C)$ implies $C$ favoring incremental evidences. Let $C^*=\phi(C)$ be defined as $C^*(\pi)=\E_{\pi}[H(\nu)]-H(\E_{\pi}[\nu])$. Before the main proof, we show two important auxiliary results.\par
	$\forall \pi\in \mathcal{I}(X)$, by definition of \cref{eqn:1}, $\forall \epsilon>0$ there exists $\langle \mu_t \rangle\rep \pi$, s.t. $C^*(\pi)\ge \E\left[ \sum C(\pi_{2t}(\mu_{2t+1}|\mu_{2t})) \right]-\epsilon$. Now consider the sequence $\langle \mu_t \rangle$. The total cost of $\langle \mu_t \rangle$ evaluated using cost function $C^*$ is:
	\begin{align*}
		\E\left[ \sum C^*(\pi_{2t}(\mu_{2t+1}|\mu_{2t})) \right]=&\E\left[ \sum \E_{\pi_{2t}}\left[ H(\mu_{2t+1})-H(\mu_{2t})|\mu_{2t} \right] \right]\\
		=&\E\left[ \sum \E_{\pi_{t}}\left[ H(\mu_{t+1})-H(\mu_{t})|\mu_{t} \right] \right]-\E\left[ \sum \E_{\pi_{2t+1}}\left[ H(\mu_{2t+2})-H(\mu_{2t+1})|\mu_{2t+1} \right] \right]\\
		=&\E\left[ \sum\left( H(\mu_{t+1})-H(\mu_t) \right) \right]-\E\left[ \sum \E_{\pi_{2t+1}}\left[ H(\mu_{2t+2})-H(\mu_{2t+1})|\mu_{2t+1} \right] \right]\\
		=&\E\left[ H(\mu_{2T})-H(\mu_0) \right]]-\E\left[ \sum \E_{\pi_{2t+1}}\left[ H(\mu_{2t+2})-H(\mu_{2t+1})|\mu_{2t+1} \right] \right]\\
		\ge& \E\left[ \sum C(\pi_{2t}(\mu_{2t+1}|\mu_{2t})) \right]-\epsilon-\E\left[ \sum \E_{\pi_{2t+1}}\left[ H(\mu_{2t+2})-H(\mu_{2t+1})|\mu_{2t+1} \right] \right]
	\end{align*}
	By definition $C^*\le C$, therefore this implies:
	\begin{align*}
		\E\left[ \sum(H(\mu_{2t+1})-H(\mu_{2t+2})) \right]\le \epsilon
	\end{align*}
	By \cref{ass:2}, $ C^*(\text{Dist}(\mu_{2t+1}|\mu_{2t+2})) \ge C(\text{Dist}(\mu_{2t+1}|\mu_{2t+2}))\ge m\E\left[ \|\mu_{2t+1}-\mu_{2t+2}\|^2 \right] $, where Dist$(\mu_{2t+1}|\mu_{2t+1})$ denotes the conditional distribution of $\mu_{2t+1}$ on $\mu_{2t+2}$. Combining the two inequalities:
	\begin{align}
		\E\left[ \sum \|\mu_{2t+2}-\mu_{2t+1}\|^2 \right]\le \frac{\epsilon}{m}\label{eqn:thm2:1}
	\end{align}
	\cref{eqn:thm2:1} provides an upper bound for the amount of ``discarded'' information. Its total variance must be bounded above by $\frac{\epsilon}{m}$, which can be arbitrarily small when we choose $\epsilon$ small. The key intuition here is that by \cref{ass:2}, all information is costly, including those being discarded in the end. $C^*$ being posterior separable means those costs on discarded information are avoidable---if we replace each information structure in the sequence with an replicating process. Therefore, total amount of discarded cost is bounded above by the approximation error of the replicating processes. \par
	In the last result, we bound the total size of discarded information. Now, we show that for any replicating process that approximates $C^*$, the probability of a path leaving a small neighbourhood around $\mu_0$ is bounded. $\forall \pi\in \mathcal{I}(X)$, let $\mu_0=\E_{\pi}[\nu]$. Suppose the diameter of supp$(\pi)$ is less than $\delta_0$. Pick an arbitrary $\epsilon>0$ and consider $\langle \mu_{t} \rangle\rep \pi$ and $C^*(\pi)\ge \E\left[ \sum C(\pi_{2t}(\mu_{2t+1}|\mu_2t)) \right]-\epsilon$. The previous analysis implies \cref{eqn:thm2:1}: $\E\left[ \sum \|\mu_{2t+2}-\mu_{2t+1}\|^2 \right]\le \frac{\epsilon}{m}$. Now take any path of $\langle \mu_t \rangle$, denoted by $\mu_t[\omega]$, such that $\mu_{t_0}[\omega]$ first leaves $B_{\delta_1}(\mu_0)$ at period $t$ (here we choose $\delta_1>\delta_0$). In other words, $\forall t<t_0,\  \mu_t[\omega]\in B_{\delta_1}(\mu_0)$ and $\mu_{t_0}[\omega]\not\in B_{\delta_1}(\mu_0)$. Collect all paths $\mu_t[\omega']$ s.t. $\mu_{t}[\omega']=\mu_t[\omega]$ when $t\le t_0$. Let $\Omega_0$ denote the set of events corresponding to theses paths.\par
	Now we construct a process $\langle \widehat{\mu}_t \rangle$ in $\mathbb{R}^{|X|}$, satisfying $\sum_x \widehat{\mu}_t(x)\equiv 1$. The process is defined on event space $\Omega_0$ (with corresponding sigma algebra $\mathcal{F}_0$ and probability measure $P_0$ restricted to $\Omega_0$). For notational simplicity, I label $\langle \widehat{\mu}_t \rangle$ using $t$ from $t_0$ to $2T$. If $t_0$ is even, let $T_0=T+\frac{t_0}{2}$:
	\begin{align*}
		\begin{dcases}
			\widehat{\mu}_{t_0+s+1}[\omega]-\widehat{\mu}_{t_0+s}[\omega]=\mu_{t_0+2s+1}[\omega]-\mu_{t_0+2s}[\omega]\\
			\widehat{\mu}_{T_0+s+1}[\omega]-\widehat{\mu}_{T_0+s}[\omega]=\mu_{t_0+2s+2}[\omega]-\mu_{t_0+2s+1}[\omega]
		\end{dcases}
	\end{align*}
	where $s$ is from $0$ to $T-\frac{t_0}{2}-1$. If $t_0$ is odd, let $T_0=T+\frac{t_0-1}{2}$:
	\begin{align*}
		\begin{dcases}
			\widehat{\mu}_{t_0+s+1}[\omega]-\widehat{\mu}_{t_0+s}[\omega]=\mu_{t_0+2s}[\omega]-\mu_{t_0+2s-1}[\omega]\\
			\widehat{\mu}_{T_0+s+1}[\omega]-\widehat{\mu}_{T_0+s}[\omega]=\mu_{t_0+2s+1}[\omega]-\mu_{t_0+2s}[\omega]
		\end{dcases}
	\end{align*}
	where $s$ is from $0$ to $T-\frac{t_0-1}{2}$. $\langle \widehat{\mu}_t \rangle$ essentially reorders the belief changes of $\langle \mu_{t} \rangle$ by grouping all even periods (acquisition periods) together and then all odd periods (disposal periods) together. Now we verify that $\langle \widehat{\mu}_t \rangle$ is a martingale up to period $T_0$. We only show the case with even $t_0$ and the case with odd $t_0$ follows:
	\begin{align*}
		&\E[\widehat{\mu}_{t_0+s+1}-\widehat{\mu}_{t_0+s}|\widehat{\mu}_{t_0},\cdots,\widehat{\mu}_{t_0+s}]\\
		=&\int \widehat{\mu}_{t_0+s+1}[\omega]-\widehat{\mu}_{t_0+s}[\omega] \d P_0(\omega|(\widehat{\mu}_{t_0},\ldots,\widehat{\mu}_{t_0+s})[\omega]=\widehat{\mu}_{t_0},\cdots,\widehat{\mu}_{t_0+s})\\
		=&\int  (\mu_{t_0+2s+1}[\omega]-\mu_{t_0+2s}[\omega]) \d P_0(\omega|(\widehat{\mu}_{t_0},\ldots,\widehat{\mu}_{t_0+s})[\omega]=\widehat{\mu}_{t_0},\cdots,\widehat{\mu}_{t_0+s})\\
	=&\int  (\mu_{t_0+2s+1}[\omega]-\mu_{t_0+2s}[\omega]) \d P_0(\omega|\mu_{t_0+2s'+1}[\omega]-\mu_{t_0+2s'}[\omega]=\widehat{\mu}_{t_0+s'+1}-\widehat{\mu}_{t_0+s'})\\
	=&\int \left( \int \mu_{t_0+2s+1}[\omega]-\mu_{t_0+2s}[\omega]\d P_0(\omega)\big|(\mu_{t_0},\cdots,\mu_{t_0+2s})[\omega]= \mu_{t_0},\cdots,\mu_{t_0+2s} \right) \\
	 &\d P_0((\mu_{t_0},\cdots,\mu_{t_0+2s})[\omega]= \mu_{t_0},\cdots,\mu_{t_0+2s}\big|\mu_{t_0+2s'+1}[\omega]-\mu_{t_0+2s'}[\omega]=\widehat{\mu}_{t_0+s'+1}-\widehat{\mu}_{t_0+s'} )\\
	=&0
	\end{align*}
	The last equality is by the Markov property of $\langle \mu_t \rangle$ and martingale property of $\langle \mu_{t} \rangle$ at even $t$'s. Therefore, $\langle \widehat{\mu}_t \rangle_{t=t_0,\dots,T_0}$ is a martingale process.\par
	
	Since the previous analysis is done in the event space $\Omega_0$ where $\mu_t$ first crosses $B_{\delta_1}(\mu_0)$ at $\mu_{t_0}$. $\Omega$ can actually been partitioned into $\Omega_0^{\alpha}$'s plus $\Omega_1$, where each $\alpha$ indexes a path first crossing $B_{\delta_1}(\mu_0)$,\footnote{$\Omega_0^{\alpha}$'s are clearly disjoint since a path can only first cross $B_{\delta_1}(\mu_0)$ once.} and $\Omega_1$ contains events when the path never crosses $B_{\delta_1}(\mu_0)$. Let $t_0(\alpha)$ denote the first crossing time of each $\alpha$.
	
	Now we calculate the total amount of information discarded in event space $\Omega_0$. Again we only show the case for $t_0$ even.
	\begin{align*}
		\E\left[\sum_{t=0}^T \|\mu_{2t+2}-\mu_{2t+1}\|^2 \right]=&\E\left[ \left\| \sum_{t=0}^{T}(\mu_{2t+2}-\mu_{2t+1}) \right\|^2 \right] \\
		\ge&\E\left[ \left\| \sum_{t=0}^{T}(\mu_{2t+2}-\mu_{2t+1}) \right\|^2 \Bigg| \cup \Omega_0^{\alpha} \right]\cdot P(\omega\in \cup\Omega_0^{\alpha})\\
		=&\E\left[ \left\|\widehat{\mu}_{2T}-\widehat{\mu}_{T_0}\right\|^2  \Bigg| \cup \Omega_0^{\alpha} \right]\cdot P(\omega\in \cup\Omega_0^{\alpha})\\
		\ge&\E\left[\E\left[ \frac{1}{3}\left( \left\|\widehat{\mu}_{T_0}-\widehat{\mu}_{t_0}\right\|^2+\left\|\widehat{\mu}_{t_0}-\mu_{0}\right\|^2 +\left\|\widehat{\mu}_{2T}-\mu_{0}\right\|^2\right) \Bigg| \Omega_0^{\alpha} \right]\Bigg|\cup\Omega_0^{\alpha}\right]\cdot P(\omega\in \cup\Omega_0^{\alpha})\\
		\ge&\left( \frac{1}{3}\E\left[\E\left[ \left\|\widehat{\mu}_{T_0}-\widehat{\mu}_{t_0}\right\|^2\Big|\Omega_0^{\alpha} \right]\Big| \cup \Omega^{\alpha}_0\right] +\frac{1}{3}\delta_1^2+\frac{1}{3}\delta_0^2 \right)\cdot P\left(\omega\in \cup\Omega_0^{\alpha}\right)\\
		=&\left( \frac{1}{3}\E\left[\E\left[ \sum_{2t\ge t_0(\alpha)}\left\|\mu_{2t+1}-\mu_{2t}\right\|^2\Big|\Omega_0^{\alpha} \right]\Big| \cup \Omega^{\alpha}_0\right] +\frac{1}{3}\delta_1^2+\frac{1}{3}\delta_0^2 \right)\cdot P\left(\omega\in \cup\Omega_0^{\alpha}\right)
	\end{align*}
	The first equality is by $\E\left[ \mu_{2t+1}|\mu_{2t+2} \right]=\mu_{2t+2}$ from \cref{defi:rep}. The first inequality is from the non-negativity of norm. The second equality is by definition of $\langle \widehat{\mu}_t \rangle$. The second inequality is from Cauchy-Schwarz inequality. The last inequality is from $\mu_{t_0}\not\in B_{\delta_1}(\mu_0)$ and definition of $\delta_0$. The last equality is by definition of $\langle \widehat{\mu}_t \rangle$.	Combining the result with \cref{eqn:thm2:1}, we obtain:
\begin{align}
	\left(\E\left[\left\| \sum_{2t\ge t_0(\alpha)}(\mu_{2t+1}-\mu_{2t}) \right\|^2 \Bigg| \cup \Omega_0^{\alpha}\right] +\delta_1^2+\delta_0^2 \right)\cdot P(\omega\in\cup\Omega_0^{\alpha})\le \frac{3\epsilon}{m}
	\label{eqn:thm2:2}
\end{align}\par

Now we state the main proof for \cref{thm:ps}. $\forall \mu_0\in \Delta(X)^o,\ \forall \eta$, let $\delta$ be the parameter pinned down in \cref{ass:diff}. Pick $\delta_0<\delta_1<\delta$. Consider arbitrary $\pi\in \mathcal{I}(X)$ s.t. supp$(\pi)\subset B_{\delta_0}(\mu_0)$ and $\E_{\pi}[\nu]=\mu_0$. Since $B(\mu)$ is continuous, let $\xi$ be $\sup \|B(\mu)-B(\mu_0)\|$ when $\mu\in B_{\delta_1}(\mu_0)$. Now we show that $C^*(\pi)$ is differentiable at $\mu_0$ and:
\begin{align}
\left|C^*(\pi)-\int (\nu-\mu_0)^TB(\mu_0)(\nu-\mu_0)\d \pi(\nu)\right|\le \eta\int \|\nu-\mu_0\|^2 \d \pi(\nu)
	\label{eqn:thm2:3}
\end{align}
where $B(\mu)$ locally characterize $C(\pi)$. Consider any $\langle \mu_t \rangle\rep \pi$ and $\E[\sum C(\mu_{2t+1}|\mu_{2t})]\le C^*(\pi)+\epsilon$ ($\epsilon$ is for now a free parameter that we will pin down later in the proof). The total cost of $\langle \mu_t \rangle$ can be written as:
\begin{align}
	\E[\sum_t C(\mu_{2t+1}|\mu_{2t})]=&\E\left[ \sum_t C(\mu_{2t+1}|\mu_{2t})\mathbf{1}_{\forall t'\le 2t,  \mu_{t'}[\omega]\in B_{\delta_1}(\mu_0)} \right]   \tag{I}\label{eqn:thm2:I}\\
	+&\E\left[ \sum_t C(\mu_{2t+1}|\mu_{2t})\mathbf{1}_{\forall t'< 2t,\mu_{t'}[\omega]\in B_{\delta_1}(\mu_0)\& \mu_{2t}\not\in B_{\delta_1}(\mu_0)} \right]   \tag{II}\label{eqn:thm2:II}\\
	+&\E\left[ \sum_t C(\mu_{2t+1}|\mu_{2t})\mathbf{1}_{\exists t'< 2t,\mu_{t'}[\omega]\not\in B_{\delta_1}(\mu_0)}\right]   \tag{III}\label{eqn:thm2:III}
\end{align}
In words, we partition paths of $\langle \mu_t\rangle$ into three groups. \eqref{eqn:thm2:I} includes paths that never leaves $B_{\delta_1}(\mu_0)$ until $2t$ for each $t$. \eqref{eqn:thm2:II} includes paths that first leaves $B_{\delta_2}(\mu_0)$ at $2t$ for each $t$. \eqref{eqn:thm2:III} includes paths that have left $B_{\delta_1}(\mu_0)$ before $2t$ for each $t$. We bound the total cost of the three groups separately.\par
We first study \cref{eqn:thm2:I}:
\begin{align}
	(\ref{eqn:thm2:I})=&\E\left[ \sum_t C(\mu_{2t+1}|\mu_{2t})\mathbf{1}_{\forall t'\le 2t,  \mu_{t'}[\omega]\in B_{\delta_1}(\mu_0)\& \text{supp}(\mu_{2t+1}|\mu_{2t})\in B_{\delta_1}(\mu_0)} \right]\tag{I-a}\label{eqn:thm2:Ia}\\
	+&\E\left[ \sum_t C(\mu_{2t+1}|\mu_{2t})\mathbf{1}_{\forall t'\le 2t,  \mu_{t'}[\omega]\in B_{\delta_1}(\mu_0)\& \text{supp}(\mu_{2t+1}|\mu_{2t})\not\in B_{\delta_1}(\mu_0)} \right]\tag{I-b}\label{eqn:thm2:Ib}
\end{align}
We further partition group \eqref{eqn:thm2:I} into two sub-groups. In group \eqref{eqn:thm2:Ia}, the paths never leaves $B_{\delta_1}(\mu_0)$ until $2t+1$ and in group \eqref{eqn:thm2:Ib} the paths first leaves $B_{\delta_1}(\mu_0)$ in period $2t+1$.
\begin{align*}
	(\ref{eqn:thm2:Ia})=&\E\Bigg[\sum_t \Bigg(\int (\mu_{2t+1}-\mu_{2t})^TB(\mu_0)(\mu_{2t+1}-\mu_{2t})+(\mu_{2t+1}-\mu_{2t})^T(B(\mu_{2t})-B(\mu_0))(\mu_{2t+1}-\mu_{2t})\\
											&+C(\mu_{2t+1}|\mu_{2t})-(\mu_{2t+1}-\mu_{2t})^TB(\mu_{2t})(\mu_{2t+1}-\mu_{2t})\d\pi_{2t}(\mu_{2t+1}|\mu_{2t})\Bigg)\\
											&\times\mathbf{1}_{\forall t'\le 2t,  \mu_{t'}[\omega]\in B_{\delta_1}(\mu_0)\& \text{supp}(\mu_{2t+1}|\mu_{2t})\in B_{\delta_1}(\mu_0)} \Bigg]\\
			\ge&\E\Bigg[\sum_t \Bigg(\int (\mu_{2t+1}-\mu_{2t})^TB(\mu_0)(\mu_{2t+1}-\mu_{2t})\d \pi_{2t}(\mu_{2t+1}|\mu_{2t})-(\eta+\xi)\int \|\mu_{2t+1}-\mu_{2t}\|^2\d \pi_{2t}(\mu_{2t+1}|\mu_{2t})\Bigg)\\
					&\times\mathbf{1}_{\forall t'\le 2t,  \mu_{t'}[\omega]\in B_{\delta_1}(\mu_0)\& \text{supp}(\mu_{2t+1}|\mu_{2t})\in B_{\delta_1}(\mu_0)} \Bigg]
\end{align*}
The inequality is implied by the differentiability condition of $C$ and definition of continuity parameter $\xi$.\par
To calculate \cref{eqn:thm2:Ib}, we modify the distribution of $\mu_{2t+1}$ for any event that satisfies the restriction, namely $\mu_{2t}\in B_{\delta_1}(\mu_0)$ but there exists $\mu_{2t+1}\not\in B_{\delta_1}(\mu_0)$. For any such $\mu_{2t}$, let $\pi_{2t}$ still be the distribution of $\mu_{2t+1}$. Now let $\mu'=\E_{\pi_{2t}}[\nu|\nu\not\in B_{\delta_1}(\mu_0)]$. Let $\pi''(\nu|\mu')=\pi_{2t}(\nu|\nu\not\in B_{\delta_1}(\mu_0))$ and $\pi''(\nu|\mu)=\delta_{\nu}$ otherwise. Let $\pi'(\nu)=\mathbf{1}_{\nu\in B_{\delta_1}(\mu_0)}\pi_{2t}(\nu)+\pi_{2t}(\Delta(X)\setminus B_{\delta_1}(\mu_0))\delta_{\mu'}$. It is easy to verify that $\pi_{2t}(\nu)=\E_{\pi'}[\pi''(\nu|\mu)]$. Suppose $\mu'\not\in B_{\delta}(\mu_0)$, let $\mu''$ be a linear combination of $\mu_{2t}$ and $\mu'$ that is on the boundary of $B_{\delta}(\mu_0)$. Then we construct $\widetilde{\pi}'$ by shifting $\mu'$ to $\mu''$. Let $\mu'_1=\E_{\pi_{2t}}[\nu|\nu\in B_{\delta_1}(\mu_0)]$. Define:
\begin{align*}
	\widetilde{\pi}'(\nu)=\mathbf{1}_{\nu\in B_{\delta_1}(\mu_0)}\pi_{2t}(\nu)\cdot \frac{\|\mu'-\mu'_1\|}{\|\mu'-\mu_{2t}\|}\cdot \frac{\|\mu''-\mu_{2t}\|}{\|\mu''-\mu'_1\|}+\frac{\|\mu'_1-\mu_{2t}\|}{\|\mu''-\mu'_1\|}\delta_{\mu''}(\nu)
\end{align*}
By definition, $\E_{\widetilde{\pi}'}[\nu]=\mu_{2t}$ and $\widetilde{\pi}'\le_{BW} \pi'$. When $\mu'\in B_{\delta}(\mu_0)$, let $\widetilde{\pi}'=\pi'$. Now we calculate the cost of $\pi_{2t}$:
\begin{align*}
	C(\pi_{2t})\ge& C(\pi') \ge C(\widetilde{\pi}')\\
	\ge& \int (\nu-\mu_{2t})^TB(\mu_{0})(\nu-\mu_{2t})\d \widetilde{\pi}'(\nu)-(\eta+\xi)\int\|\nu-\mu_{2t}\|^2\d \widetilde{\pi}'(\nu)\\
	\ge& \int (\nu-\mu_{2t})^TB(\mu_{0})(\nu-\mu_{2t})\d \widetilde{\pi}'(\nu)-(\eta+\xi)\int\|\nu-\mu_{2t}\|^2\d \pi_{2t}(\nu)\\
	=&\int (\nu-\mu_{2t})^TB(\mu_{0})(\nu-\mu_{2t})\d \pi_{2t}(\nu)+\int (\nu-\mu_{2t})^TB(\mu_{0})(\nu-\mu_{2t})\d (\widetilde{\pi}'-\pi_{2t})(\nu)\\
	 &-(\eta+\xi)\int\|\nu-\mu_{2t}\|^2\d \pi_{2t}(\nu)\\
	=&\int (\nu-\mu_{2t})^TB(\mu_{0})(\nu-\mu_{2t})\d \pi_{2t}(\nu)-(\eta+\xi)\int\|\nu-\mu_{2t}\|^2\d \pi_{2t}(\nu)\\
	 &+\int_{\nu\in B_{\delta}(\mu_0)} (\nu-\mu_{2t})^TB(\mu_{0})(\nu-\mu_{2t})\d (\widetilde{\pi}'-\pi_{2t})(\nu)\\
	 &+\int_{\nu\not\in B_{\delta}(\mu_0)} (\nu-\mu_{2t})^TB(\mu_{0})(\nu-\mu_{2t})\d (\widetilde{\pi}'-\pi_{2t})(\nu)\\
	\ge&\int (\nu-\mu_{2t})^TB(\mu_{0})(\nu-\mu_{2t})\d \pi_{2t}(\nu)-(\eta+\xi)\int\|\nu-\mu_{2t}\|^2\d \pi_{2t}(\nu)\\
		 &-\left( 1-\frac{\delta-\delta_1}{\delta+\delta_1} \right)(\nu-\mu_{2t})^TB(\mu_{0})(\nu-\mu_{2t})\d \pi_{2t}(\nu)\\
		 &-\pi_{2t}(\Delta(X)\setminus B_{\delta_1}(\mu_0))\left(\int (\nu-\mu_{2t})B(\mu_0)(\nu-\mu_{2t})\d \pi''(\nu|\mu')+ (\mu'-\mu_{2t})^TB(\mu_0)(\mu'-\mu_{2t})\right)
\end{align*}
The first two inequalities are by $\pi_{2t}\ge_{BW}\pi'\ge_{BW}\widetilde{\pi}'$ and \cref{axiom:1}. The third inequality is by \cref{ass:1}. The forth inequality: $\frac{\|\mu'-\mu'_1\|}{\|\mu'-\mu_{2t}\|}<1$; $\frac{\|\mu''-\mu_{2t}\|}{\|\mu''-\mu'_1\|}\ge \frac{\delta-\delta_1}{\delta+\delta_1}$ bounds the second line, the third line is calculated by ignoring the weakly positive term provided by $\widetilde{\pi}'$. Finally, since $B(\mu_0)$ is fixed, there exists $M=\sup_{\mu,\nu\in \Delta(X)} (\nu-\mu)^TB(\mu_0)(\nu-\mu)<\infty$.  To sum up:
\begin{align}
	C(\pi_{2t})\ge& \int (\nu-\mu_{2t})^TB(\mu_{0})(\nu-\mu_{2t})\d \pi_{2t}(\nu)-\left(\eta+\xi+\frac{2\delta_1}{\delta+\delta_1}\right)\int\|\nu-\mu_{2t}\|^2\d \pi_{2t}(\nu)	\label{eqn:thm2:first}\\
	&-\pi_{2t}(\Delta(X)\setminus B_{\delta_1}(\mu_0))M   \notag
\end{align}
Plug \cref{eqn:thm2:first} into \cref{eqn:thm2:Ib}, we get:
\begin{align*}
	(\ref{eqn:thm2:Ib})\ge&\E\Bigg[\sum_t \Bigg(\int (\mu_{2t+1}-\mu_{2t})^TB(\mu_0)(\mu_{2t+1}-\mu_{2t})\d \pi_{2t}(\mu_{2t+1}|\mu_{2t})\\
										 &-\left(\eta+\xi+\frac{2\delta_1}{\delta+\delta_1}\right)\int \|\mu_{2t+1}-\mu_{2t}\|^2\d \pi_{2t}(\mu_{2t+1}|\mu_{2t})+\pi_{2t}\left( \Delta(X)\setminus B_{\delta_1}(\mu_0) |\mu_{2t}\right)\cdot M\Bigg)\\
					&\times\mathbf{1}_{\forall t'\le 2t,  \mu_{t'}[\omega]\in B_{\delta_1}(\mu_0)\& \text{supp}(\mu_{2t+1}|\mu_{2t})\not\in B_{\delta_1}(\mu_0)} \Bigg] \\
					=&\E\Bigg[\sum_t \Bigg(\int (\mu_{2t+1}-\mu_{2t})^TB(\mu_0)(\mu_{2t+1}-\mu_{2t})\d \pi_{2t}(\mu_{2t+1}|\mu_{2t})\\
										 &-\left(\eta+\xi+\frac{2\delta_1}{\delta+\delta_1}\right)\int \|\mu_{2t+1}-\mu_{2t}\|^2\d \pi_{2t}(\mu_{2t+1}|\mu_{2t})\Bigg)\\
					&\times\mathbf{1}_{\forall t'\le 2t,  \mu_{t'}[\omega]\in B_{\delta_1}(\mu_0)\& \text{supp}(\mu_{2t+1}|\mu_{2t})\not\in B_{\delta_1}(\mu_0)} \Bigg]\\
					&-P(\cup \Omega_0^{\alpha})\cdot M\\
							\ge&\E\Bigg[\sum_t \Bigg(\int (\mu_{2t+1}-\mu_{2t})^TB(\mu_0)(\mu_{2t+1}-\mu_{2t})\d \pi_{2t}(\mu_{2t+1}|\mu_{2t})\\
										 &-\left(\eta+\xi+\frac{2\delta_1}{\delta+\delta_1}\right)\int \|\mu_{2t+1}-\mu_{2t}\|^2\d \pi_{2t}(\mu_{2t+1}|\mu_{2t})\Bigg)\\
										 &\times\mathbf{1}_{\forall t'\le 2t,  \mu_{t'}[\omega]\in B_{\delta_1}(\mu_0)\& \text{supp}(\mu_{2t+1}|\mu_{2t})\not\in B_{\delta_1}(\mu_0)} \Bigg]]\\
					&-\frac{3\epsilon}{(\delta_1^2+\delta_0^2)m}
\end{align*}
The equality is by rewriting the event space at which some path first crosses $B_{\delta_1}(\mu_0)$, The last inequality is implied by \cref{eqn:thm2:first}.

Now we study \cref{eqn:thm2:II,eqn:thm2:III}:
\begin{align*}
	&(\ref{eqn:thm2:II})+(\ref{eqn:thm2:III})\\
	\ge&\E\left[\sum_t \left(\int (\mu_{2t+1}-\mu_{2t})^TB(\mu_0)(\mu_{2t+1}-\mu_{2t})\d \pi_{2t}(\mu_{2t+1}|\mu_{2t})\right)\mathbf{1}_{\exists t'\le 2t, \mu_{t'}[\omega]\not\in B_{\delta_1}(\mu_0)}\right]\\
	 &-\E\left[\sum_t \left(\int (\mu_{2t+1}-\mu_{2t})^TB(\mu_0)(\mu_{2t+1}-\mu_{2t})\d \pi_{2t}(\mu_{2t+1}|\mu_{2t})\right)\mathbf{1}_{\exists t'\le 2t, \mu_{t'}[\omega]\not\in B_{\delta_1}(\mu_0)}\right]\\
	\ge&\E\left[\sum_t \left(\int (\mu_{2t+1}-\mu_{2t})^TB(\mu_0)(\mu_{2t+1}-\mu_{2t})\d \pi_{2t}(\mu_{2t+1}|\mu_{2t})\right)\mathbf{1}_{\exists t'\le 2t, \mu_{t'}[\omega]\not\in B_{\delta_1}(\mu_0)}\right]\\
		 &-\E\left[ \sum_t\left( \sum_{s\ge t}\int(\mu_{2s+1}-\mu_{2t})^TB(\mu_0)(\mu_{2s+1}-\mu_{2t})\d \pi_{2t}(\mu_{2s+1}|\mu_{2s}) \right)\mathbf{1}_{\forall t'<2t, \mu_{t'}[\omega]\in B_{\delta_1}(\mu_0)\& \mu_{2t}[\omega]\not\in B_{\delta_1}(\mu_0)} \right]\\
	=&\E\left[\sum_t \left(\int (\mu_{2t+1}-\mu_{2t})^TB(\mu_0)(\mu_{2t+1}-\mu_{2t})\d \pi_{2t}(\mu_{2t+1}|\mu_{2t})\right)\mathbf{1}_{\exists t'\le 2t, \mu_{t'}[\omega]\not\in B_{\delta_1}(\mu_0)}\right]\\
	 &-\E\left[  \sum_{2s\ge t_0(\alpha)}\int(\mu_{2s+1}-\mu_{2t})^TB(\mu_0)(\mu_{2s+1}-\mu_{2t})\d \pi_{2t}(\mu_{2s+1}|\mu_{2s})\Big|\cup \Omega_0^{\alpha} \right]P(\cup\Omega_0^{\alpha})\\
	\ge&\E\left[\sum_t \left(\int (\mu_{2t+1}-\mu_{2t})^TB(\mu_0)(\mu_{2t+1}-\mu_{2t})\d \pi_{2t}(\mu_{2t+1}|\mu_{2t})\right)\mathbf{1}_{\exists t'\le 2t, \mu_{t'}[\omega]\not\in B_{\delta_1}(\mu_0)}\right]\\
		 &-\|B(\mu_0)\|\E\left[  \sum_{2s\ge t_0(\alpha)}\int\|\mu_{2s+1}-\mu_{2s}\|^2\d \pi_{2t}(\mu_{2s+1}|\mu_{2s})\Big|\cup \Omega_0^{\alpha} \right]P(\cup\Omega_0^{\alpha})\\
	\ge&\E\left[\sum_t \left(\int (\mu_{2t+1}-\mu_{2t})^TB(\mu_0)(\mu_{2t+1}-\mu_{2t})\d \pi_{2t}(\mu_{2t+1}|\mu_{2t})\right)\mathbf{1}_{\exists t'\le 2t, \mu_{t'}[\omega]\not\in B_{\delta_1}(\mu_0)}\right]\\
		 &-\|B(\mu_0)\|\frac{3\epsilon}{m}
\end{align*}
The first inequality is by definition of $M$. The second equality is by definition any path which ever crosses $B_{\delta_1}(\mu_{0})$ must have first crossed it at some history. The equality is rewriting the second term in the language of $\Omega_0^{\alpha}$ and $t_0(\alpha)$. The last inequality is implied by \cref{eqn:thm2:2}.\par

Now combine \cref{eqn:thm2:I,eqn:thm2:II,eqn:thm2:III} together and we get:
\begin{align*}
	\E[\sum_t C(\mu_{2t+1}|\mu_{2t})]\ge&\E\left[ \sum_t \int(\mu_{2t+1}-\mu_{2t})^TB(\mu_0)(\mu_{2t+1}-\mu_{2t})\d \mu_{2t}(\mu_{2t+1}|\mu_{2t}) \right]\\
																			&-\left( \eta+\xi+\frac{2\delta_1}{\delta+\delta_1} \right)\E\left[ \sum_t \int \|\mu_{2t+1}-\mu_{2t}\|^2\d\pi_{2t}(\mu_{2t+1}|\mu_{2t}) \right]\\
																			&-\left( \frac{1}{\delta_1^2+\delta_0^2}+\|B(\mu_0)\| \right)\frac{3\epsilon}{m}\\
	\ge& \int (\nu-\mu_0)^TB(\mu_0)(\nu-\mu_0)\d \pi(\nu)\\
		 &-\left( \eta+\xi+\frac{2\delta_1}{\delta+\delta_1} \right)\E\left[ \sum_t \int \|\mu_{2t+1}-\mu_{2t}\|^2\d\pi_{2t}(\mu_{2t+1}|\mu_{2t}) \right]\\
																			&-\left( \frac{1}{\delta_1^2+\delta_0^2}+\|B(\mu_0)\| \right)\frac{3\epsilon}{m}\\
	\ge& \int (\nu-\mu_0)^TB(\mu_0)(\nu-\mu_0)\d \pi(\nu)\\
	\ge& \int (\nu-\mu_0)^TB(\mu_0)(\nu-\mu_0)\d \pi(\nu)\\
		 &-\left( \eta+\xi+\frac{2\delta_1}{\delta+\delta_1} \right)\left( \int\|\nu-\mu_0\|^2 \d \pi(\nu)+\frac{\epsilon}{m} \right) \\
																			&-\left( \frac{1}{\delta_1^2+\delta_0^2}+\|B(\mu_0)\| \right)\frac{3\epsilon}{m}
\end{align*}
Fix all other parameters and let $\epsilon\to0$ ($\epsilon$ is the approximation error defined in \cref{eqn:1}), then this implies:
\begin{align*}
	C^*(\pi)\ge \int (\nu-\mu_0)^TB(\mu_0)(\nu-\mu_0)\d \pi(\nu) -\left( \eta+\xi+\frac{2\delta_1}{\delta+\delta_1} \right)\left( \int\|\nu-\mu_0\|^2 \d \pi(\nu) \right)
\end{align*}\par

Therefore, $\forall \epsilon>0$, let $\delta$ be its corresponding parameter define in \cref{ass:1}, we can pick $\delta_0,\delta_1$ small enough such that $\eta+\xi+\frac{2\delta_1}{\delta+\delta_1}<\epsilon$.\footnote{Here we recycled symbol $\epsilon$, now it is used to denote the parameter defining the differentiability condition of $C^*$.} Hence we proved that we find $\delta_0$ s.t. $\forall \pi$ s.t. $\E_{\pi}[\nu]=\mu_0$ and supp$(\pi)\subset B_{\delta_0}(\mu_0)$:
\begin{align*}
	C^*(\pi)\ge \int (\nu-\mu_0)^TB(\mu_0)(\nu-\mu_0)\d \pi(\nu) -\epsilon\left( \int\|\nu-\mu_0\|^2 \d \pi(\nu) \right)
\end{align*}
On the other hand:
\begin{align*}
	C^*(\pi)\le C(\pi)\le \int (\nu-\mu_0)^TB(\mu_0)(\nu-\mu_0)\d \pi(\nu) +\epsilon\left( \int\|\nu-\mu_0\|^2 \d \pi(\nu) \right)
\end{align*}
Therefore, 
\begin{align*}
	\left|C^*(\pi)- \int (\nu-\mu_0)^TB(\mu_0)(\nu-\mu_0)\d \pi(\nu)\right|\le \epsilon\left( \int\|\nu-\mu_0\|^2 \d \pi(\nu) \right)
\end{align*}
We verified the twice differentiability of $C^*(\pi)$ and \cref{eqn:thm2:3}. By definition, if $C^*(\pi)=\E_{\pi}[H(\nu)]-H(\E_{\pi}[\nu])$, then $2B(\mu)\equiv \mathbb{H}H(\mu)$ is the Hessian matrix of $H(\mu)$. In other words, $B(\mu)$ also locally characterizes $C^*(\mu)$. \par

Therefore:
\begin{align*}
	C(\pi)\ge C^*(\pi)=\E_{\pi}[H(\nu)]-H(\E_{\pi}[\nu])
\end{align*}\par

\textbf{Sufficiency}: Second, we prove that $C$ favoring incremental evidences implies $\phi(C)$ being uniformly posterior separable. Suppose $B(\mu)$ locally characterizes $C$ and $C$ favors incremental evidences. Define:
\begin{align*}
	\underline{C}(\pi)=\E_{\pi}[H(\nu)]-H(\E_{\pi}[\nu])\le C(\pi)
\end{align*}
where convex function $H(\mu)$ has Hessian matrix $2B(\mu)$. $\underline{C}$ satisfies \cref{axiom:0,axiom:1,axiom:2}, therefore by \cref{thm:1}, $\phi(\underline{C})=\underline{C}$. Therefore it is sufficient to show that $\phi(C)\le \underline{C}$.\footnote{Although not formally shown, it is straight forward that $\phi$ is an increasing map.} \par

We first show that when $\text{supp}(\pi)\subset \Delta(X)^o$ and is \emph{binary}, $\phi(C)(\pi)\le \underline{C}(\pi)$. We prove by finding $\pi'$ arbitrarily close to $\pi$ under $L-P$ metric and $\langle \mu_t \rangle\rep \pi'$ with cost arbitrarily close to $\underline{C}$. Since $|\text{supp}(\pi)|=2$, denote the two posterior beliefs by $\nu_1,\nu_2$. Pick $M\in\mathbb{N}$, $\forall i\in\mathbb{N}$, $i\le M$ define $\lambda_i=\frac{i}{M}$. Consider the subspace $\left\{ \mu_i=\lambda_i\nu_1+(1-\lambda_i)\nu_2 \right\}$. Let $\mu_{m_0}$ be the closest point to $\E_{\pi}[\nu]$. Define information structure $\widehat{\pi}$ to be with prior $\mu_{m_0}$ and posteriors $\mu_0,\mu_M$. Then $\lim_{M\to \infty}d(\pi,\widehat{\pi})_{lp}=0$  By continuity of $\phi(C)$ and $\underline{C}$, $\forall \epsilon>0$, $\exists M$ large enough that $|\phi(C)(\pi)-\phi(C)(\widehat{\pi})|\le \epsilon$ and $|\underline{C}(\pi)-\underline{C}(\widehat{\pi})|\le \epsilon$. \par

Now consider the following process $\langle \mu_t \rangle$, defined as follows: $\mu_0=\mu_{m_0}$, $\pi_t(\mu_{t+1}|\mu_{i})=\frac{1}{2}\delta_{\mu_{i+1}}+\frac{1}{2}\delta_{\mu_{i-1}}$ when $i\in[1,M-1]$. $\pi_t(\mu_{t+1}|\mu_0)=\delta_{\mu_0}$ and $\pi_t(\mu_{t+1}|\mu_M)=\delta_{\mu_M}$. In other words, $\langle \mu_t \rangle$ is a standard random walk in $\left\{ \mu_i \right\}$ stopped at absorbing boundary $\left\{ \mu_0,\mu_M \right\}$. Let $T\in\mathbb{N}$ be the length of the process $\langle \mu_t \rangle$. Then it is easy to verify that $\text{prob}(\mu_T\in\left\{ \mu_0,\mu_M \right\})\to 1$ when $T\to \infty$.\footnote{Let $P_T$ be such probability, then i) $P_T$ is increasing, ii) $P_{T+M}\ge P_T+(1-P_T)\frac{1}{2^M}$. } Therefore, $\exists T$ large enough, s.t. if we let $\mu_{T}\sim \pi'$, then $|\phi(C)(\pi')-\phi(C)(\widehat{\pi})|\le \epsilon$ and $|\underline{C}(\pi')-\underline{C}(\widehat{\pi})|\le \epsilon$. By definition, finite process $\langle \mu_t \rangle\rep \pi'$. Notice that $\langle \mu_t \rangle$ does not have the information disposal periods. The cost of $\langle \mu_t \rangle$ is:
\begin{align}
	\sum C(\pi_t(\mu_{t+1}|\mu_t))\le & \E\left[ \sum (H(\mu_{t+1})-H(\mu_t)) \right]\label{eqn:suf:1}\\
																		&+ \E\left[ \sum \left|(H(\mu_{t+1})-H(\mu_t))-\frac{1}{M^2}(\nu_2-\nu_1)^TB(\mu_t)(\nu_2-\nu_1)\right| \right]\label{eqn:suf:2}\\
																		&+\E\left[ \sum\left|\frac{1}{M^2}(\nu_2-\nu_1)^TB(\mu_t)(\nu_2-\nu_1)-C(\pi_t(\mu_{t+1}|\mu_t))\right| \right]\label{eqn:suf:3}
\end{align}
The first term (\ref{eqn:suf:1}) is exactly $\E_{\pi'}[H(\nu)]-H(\E_{\pi'}[\nu])=\underline{C}(\pi')$. Now consider the second term (\ref{eqn:suf:2}). $\forall \mu_{i}$, let $f(\alpha)=H(\mu_i+\frac{\alpha}{M}(\nu_2-\nu_1))$. Then:
\begin{align*}
	&\frac{1}{2}\inf_{\alpha\in[0,1]} f''(\alpha)\le f(1)-f(0)-f'(0)\le \frac{1}{2}\sup_{\alpha\in[0,1]}f''(\alpha)\\
	\iff& \inf_{\alpha\in[0,1]}\frac{1}{2}\frac{1}{M^2}(\nu_2-\nu_1)^T\mathbb{H}H(\mu_i+\frac{\alpha}{M}(\nu_1-\nu_1))(\nu_2-\nu_1)\\
			&\le H(\mu_{i+1})-H(\mu_i)-\frac{1}{M}\mathbb{J}H(\mu_i)(\nu_2-\nu_1)\\
			&\le\sup_{\alpha\in[0,1]}\frac{1}{2}\frac{1}{M^2}(\nu_2-\nu_1)^T\mathbb{H}H(\mu_i+\frac{\alpha}{M}(\nu_1-\nu_1))(\nu_2-\nu_1)\\
		\implies& \left| H(\mu_{i+1})-H(\mu_i)-\frac{1}{M}\mathbb{J}H(\mu_i)(\nu_2-\nu_1) -\frac{1}{M^2}(\nu_2-\nu_1)^TB(\mu_i)(\nu_2-\nu_1) \right|\\
						&\le \frac{1}{M^2}\|\nu_2-\nu_1\|^2\cdot \sup_{\mu'\in [\mu_{i-1},\mu_{i+1}]} \|B(\mu')-B(\mu)\|
\end{align*}
Since by \cref{lem:diff} $B(\mu)$ is continuous on $\alpha(X)^o$, thus is uniformly continuous on $ [\mu_0,\mu_m] $. Therefore, $M$ can be picked large enough that $\sup_{\mu'\in [\mu_{i-1},\mu_{i+1}]} \|B(\mu')-B(\mu)\|\le \epsilon$. This implies
\begin{align*}
	(\ref{eqn:suf:2})\le \epsilon\cdot \E\left[ \sum \|\mu_{t+1}-\mu_t\|^2 \right]=\epsilon \cdot\E_{\pi'}[\|\nu-\mu_{m_0}\|^2]
\end{align*}

Now consider (\ref{eqn:suf:3}). Since $[\mu_0,\mu_{M}]$ is compact, there exists uniform $\delta$ satisfying \cref{lem:diff}. Therefore, when $M$ is picked larger than $\frac{1}{\delta}$, $\mu_{t+1}$ is always within $B_{\delta}(\mu_t)$ and:
\begin{align*}
	(\ref{eqn:suf:3})\le &\epsilon\cdot \E\left[ \sum \|\mu_{t+1}-\mu_t\|^2 \right]=\epsilon \cdot\E_{\pi'}[\|\nu-\mu_{m_0}\|^2]
\end{align*}
To sum up:
\begin{align*}
	\sum C(\pi_t(\mu_{t+1}|\mu_t))\le& (\ref{eqn:suf:1})+(\ref{eqn:suf:2})+(\ref{eqn:suf:3})\\
	\le & \underline{C}(\pi')+2\epsilon \cdot\E_{\pi'}[\|\nu-\mu_{m_0}\|^2]\\
	\implies \phi(C)(\pi')\le& \underline{C}(\pi')+2\epsilon \cdot\E_{\pi'}[\|\nu-\mu_{m_0}\|^2]\\
	\implies \phi(C)(\pi)\le& \underline{C}(\pi')+2\epsilon \cdot\E_{\pi'}[\|\nu-\mu_{m_0}\|^2]+2\epsilon\\
	\le& \underline{C}(\pi)+2\epsilon \cdot\E_{\pi'}[\|\nu-\mu_{m_0}\|^2]+4\epsilon
\end{align*}
Since $\epsilon$ can be arbitrarily small, $\phi(C)(\pi)\le \underline{C}(\pi)$. Therefore, $\phi(C)(\pi)=\underline{C}(\pi)$. Finally, continuity of $\underline{C}(\pi)$ and $\phi(C)(\pi)$ on $I(X)$ implies that the equality extends to any binary support $\pi$ in $\mathcal{I}(X)$. Since the argument also applies to any $\mathcal{I}(X')$, we can claim that $\phi(C)\equiv \underline{C}$ for information structures with binary support. \par

Now we prove the statement for general $\pi\in \mathcal{I}(X)$. $\forall \epsilon>0$, let $\delta$ be the continuity parameter of $\phi(C)$ and $\underline{C}$ at $\pi$. First consider any finite (Borel) partition $\cup_{i=1}^{M} D_i$ of $\Delta(X)$ where the diameter of any $D_i$ is bounded by $\delta$. Wlog, we consider the case $\pi(D_i)>0$ and $M\ge 3$. Let $\mu_i=\E_{\pi}[\nu|\nu\in D_i]$. Define $\pi'(\mu)=\sum_i\pi(D_i)\cdot \delta_{\mu_i}(\mu)$. Then $d(\pi,\pi')_{lp}\le \delta$ and hence $|\phi(C)(\pi)-\phi(C)(\pi')|\le \epsilon$. Let $\mu_0=\E_{\pi}[\nu]$. Now we consider a decomposition of $\pi'$:
\begin{align*}
	\begin{dcases}
		\pi_1=\pi(D_1)\delta_{\mu_1}+(1-\pi(D_1))\delta_{\E_{\pi}[\nu|\nu\not\in D_1]}\\
		\pi_i(\cdot|\mu)=\frac{\pi(D_i)}{\sum_{j\ge i}\pi(D_j)}\delta_{\mu_i}(\cdot)+\frac{\sum_{j>i}\pi(D_j)}{\sum_{j\ge i}\pi(D_j)}\delta_{\E_{\pi}[\nu|\nu\in \cup_{j>i}D_i]}(\cdot)&\text{when }\mu=\E_{\pi}[\nu|\nu\in \cup_{j> i}D_j]\\
		\pi_i(\cdot|\mu)=\mu&\text{otherwise}
	\end{dcases}
\end{align*}
By definition $\pi'(\nu)=\E[\prod_{i=1}^{M-1} \pi_i]$. Therefore by recursively applying \cref{axiom:2}:
\begin{align*}
	\phi(C)(\pi')\le& \E\left[ \sum \phi(C)(\pi_i) \right]\\
	=&\pi(D_1)H(\mu_1)+\sum_{j>1}\pi(D_j)H(\E_{\pi}[\nu|\nu\in\cup_{j>1}D_j])-H(\mu_0)\\
	 &+\left( \sum_{j>1}\pi(D_j) \right)\left( \frac{\pi(D_2)}{\sum_{j>1}\pi(D_j)}H(\mu_2)+\frac{\sum_{j>2}\pi(D_j)}{\sum_{j>1}\pi(D_j)}H(\E_{\pi}[\nu|\nu\in \cup_{j>2}D_j])-H(\E_{\pi}[\nu|\nu\in\cup_{j>1}D_j]) \right)\\
	 &+\cdots\\
	 &+\left( \sum_{j>1}\pi(D_j) \right)\prod_{i=1}^{M-2} \frac{\sum_{j>i}\pi(D_j)}{\sum_{j\ge i}\pi(D_j)}\Bigg( \frac{\pi(D_{M-1})}{\pi(D_{M-1})+\pi(D_{M})}H(\mu_{M-1})+\frac{\pi(D_M)}{\pi(D_{M-1})+\pi(D_M)}H(\mu_M)\\
	 &-H(\E_{\pi}[\nu|\nu\in D_{M-1}\cup D_M])\Bigg)\\
	=&\sum \pi(D_i)H(\mu_i)-H(\mu_0)\\
	=&\underline{C}(\pi')
\end{align*}
The first equality utilizes the result with binary support information structures. The second equality is from cancelling out terms. Since $\epsilon$ can be chosen arbitrarily, $\phi(C)(\pi)\le\underline{C}(\pi)$. 
\end{proof}

\subsection{Proof of \cref{prop:dynamic:solution}}

\begin{proof}
	The second and third properties are straight forward. One only need to observe that the weak inequality becomes strict in the proof of \cref{prop:dynamic} once the uniqueness is violated. \par
	We state the detailed proof for the first property. Suppose $C$ satisfies strict-monotonicity. Let $m=|A|$. For this proof we take a direct approach to model information structure and re-parametrize $C(\cdot)$ using $\widehat{C}_m$. Consider two conditional distributions of actions $P_1\neq P_2$. Let $P=\lambda P_1+(1-\lambda)P_2$. We want to show that strict monotonicity implies that the cost of $P$ being strictly lower than the convex combination of costs of $P-1$ and $P_2$.\par
	Define $P'$ on $A\times\left\{ 1,2 \right\}=\left\{ a_1,a_2,a'_1,a'_2,\dots \right\}$ with twice number of signals than $A$. Let $\lambda_1=\lambda,\lambda_2=1-\lambda$, $\forall a,x$, define
	\begin{align*}
		P'(a_i|x)=\lambda_iP_i(a|x)
	\end{align*}
	Then $P'\ge_{BW}P$:
	\begin{align}
		\label{eqn:mtx}
		\begin{bmatrix}
			1\ 1&0\ 0&\cdots&0\ 0\\
			0\ 0&1\ 1&\cdots&0\ 0\\
			\vdots&\vdots&\ddots&\vdots\\
			0\ 0&0\ 0&\cdots&1\ 1
		\end{bmatrix}
		\cdot P'=P
	\end{align}\par
	and $\widehat{C}_m(P,\mu)\le\widehat{C}_{2m}(P',\mu)\le\lambda\widehat{C}_m(P_1,\mu)+(1-\lambda)\widehat{C}_m(P_2,\mu)$. Suppose for the sake of contradiction that equality $\widehat{C}_m(P,\mu)=\widehat{C}_{2m}(P',\mu)$ holds, then strict-monotonicity implies that $P\ge_{BW}P'$:
	\begin{align*}
		M\cdot P=P'
	\end{align*}
	Where $M$ is a stochastic matrix. Consider the following operation one the rows of $P'$: If $P_1\sim P_2$, then proof is done. Otherwise, first remove replication of $P'$ (when two rows of $P'$ are multiplications of each other, add them up) and get $\widetilde{P}'$. Since $P_1\not\sim P_2$, we can assume $\widetilde{P}'{}^1=P'_1{}^1,\widetilde{P}'{}^2=P'_2{}^1$. Define $\widehat{P}^1=P'{}^1+P'{}^2$ and $\widehat{P}^i=\widetilde{P}'{}^{i+1}$. By definition $\widetilde{P}'$ Blackwell dominates $\widehat{P}$. On the other hand, $\widehat{P}$ Blackwell dominates $P$, so dominates $P'$, and $\widetilde{P}'$. By \cref{lem:BWequiv}, $\widetilde{P}'$ and $\widehat{P}$ are identical up to permutation. Then $P'{}^1$ must equal to some $\widehat{P}^i$.
	\begin{itemize}
		\item \emph{Case 1}. If $i=1$, then $P'{}^1+P'{}^2$ is a multiplication of $\widetilde{P}'_1{}^1$, which is a multiplication of $\widetilde{P}'{}^1$. This means $P'{}^1$ and $P'{}^2$ are replication, contradiction.
		\item \emph{Case 2}. If $i>1$, then $\widetilde{P}'{}^1$ is a multiplication of $\widehat{P}^i$, which is a multiplication of $\widetilde{P}'{}^{i+1}$. Contradicting definition of $\widetilde{P}'$ (no replication).
	\end{itemize}
	Therefore, $P'{}^1$ and $P'{}^2$ are replications. Now permute $P'$ and apply the same analysis on all $P'{}^{2i-1},P'{}^{2i}$. We can conclude that any row of $P_1$ is a replication of that of $P_2$. To sum up, a necessary condition for $\widehat{C}_m(P,\mu)=\lambda \widehat{C}_m(P_1,\mu)+(1-\lambda)\widehat{C}_m(P_2,\mu)$ is that each row in $P_1$ and $P_2$ induces same posterior belief $\nu$.\par
	Now consider the set of solutions to \cref{eqn:dynamic:static}. Strict monotonicity implies that the number of signal realizations must be no higher than $m$ and wlog each $\pi$ can be represented as $P\in\mathbb{R}^{m\times n}$. Suppose by contradiction there exists $P_1$ and $P_2$ and $a$ such that they induces different posterior for action $a$. Consider any $P=\lambda p_1+(1-\lambda) P_2$. By previous proof, $\widehat{C}_m(P,\mu)<\lambda\widehat{C}_m(P_1,\mu)+(1-\lambda)\widehat{C}_m(P_2,\mu)$. Namely we find a strict improvement over $P_1$ and $P_2$. Contradiction.\par
To sum up, solutions to \cref{eqn:dynamic:static} always have the same support. 

\end{proof}
\begin{lem}[Blackwell equivalence]
	\label{lem:BWequiv}
	Let $P$ and $P'$ be two stochastic matrices. $P$ has no replication of rows. Suppose there exists stochastic matrices $M_{PP'}$ and $M_{P'P}$ s.t.:
	\begin{align*}
		P'&=M_{P'P}\cdot P\\
		P&=M_{PP'}\cdot P'
	\end{align*}
	Then $M_{PP'}$ and $M_{P'P}$ are permutation matrices.
\end{lem}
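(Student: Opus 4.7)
The plan is to show that $M_{PP'} M_{P'P} = I$ and then invoke the classical fact that a square stochastic matrix whose inverse is also stochastic must be a permutation matrix. Substituting the two garbling identities yields $P = M_{PP'}(M_{P'P} P) = QP$, where $Q := M_{PP'}M_{P'P}$ is a square stochastic matrix of size equal to the number of rows of $P$. Equivalently, each row $p_i$ of $P$ satisfies the convex-combination identity $p_i = \sum_j Q_{ij} p_j$.

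To argue $Q = I$, I would iterate to obtain $Q^N P = P$ for all $N$ and pass to the Ces\`aro limit $\bar Q = \lim_{N\to\infty}\frac{1}{N}\sum_{k=0}^{N-1}Q^k$, a stochastic idempotent matrix with $\bar Q P = P$ whose rows encode hitting distributions on the recurrent classes of the Markov chain associated with $Q$. Each column of $P$ is $Q$-harmonic and therefore constant on every recurrent communicating class; the assumption that rows of $P$ are pairwise distinct then forces every recurrent class to be a singleton absorbing state. For a hypothetical transient state $i$, the identity $p_i = \sum_j \bar Q_{ij} p_j$ realizes $p_i$ as a convex combination of absorbing rows of $P$; running the analogous argument with the roles of $P$ and $P'$ exchanged (via $Q' := M_{P'P}M_{PP'}$) and using the equality $\mathrm{conv}(\mathrm{rows}\,P) = \mathrm{conv}(\mathrm{rows}\,P')$ produced by mutual garbling rules this out, so $Q = I$. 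With $M_{PP'}M_{P'P} = I$ in hand, both matrices are square stochastic with stochastic inverses; non-negativity of each together with $A\mathbf{1} = A^{-1}\mathbf{1} = \mathbf{1}$ pins each row of $A$ to a single $1$-entry, so $M_{PP'}$ and $M_{P'P}$ are permutation matrices.

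The main obstacle will be ruling out transient states of $Q$: distinctness of rows of $P$ does not by itself preclude a row from being an interior convex combination of the others, so the proof must exploit both garbling directions simultaneously rather than a one-sided extremality argument. The symmetry between $Q$ and $Q'$ and the resulting coincidence of convex hulls on both sides is where the full strength of the mutual-garbling hypothesis is used; matching the number of rows of $P$ with the number of extreme points of the common convex hull is the delicate accounting that ultimately forces $Q = I$.
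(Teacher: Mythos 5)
Your reduction to $Q=M_{PP'}M_{P'P}$, the Ces\`aro limit, and the observation that each column of $P$ is $Q$-harmonic and hence constant on recurrent classes (so no replication forces every recurrent class to be a singleton) are fine, but they only repackage the easy half of the statement. The decisive step --- excluding transient states of $Q$, i.e.\ forcing $Q=I$ --- is precisely what you do not prove, and it cannot be extracted from the hypotheses in the way you describe. Distinctness of the rows of $P$ does not prevent a row from lying in the convex hull of the other rows, and whenever it does, non-permutation garblings satisfying both identities exist: take $P$ with rows $p_1=(1,0)$, $p_2=(0,1)$, $p_3=(\tfrac12,\tfrac12)$, set $P'=P$, $M_{PP'}=I$, and let $M_{P'P}$ have rows $e_1^T$, $e_2^T$, $(\tfrac12,\tfrac12,0)$. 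All hypotheses hold, state $3$ is transient for $Q$, and $\mathrm{conv}(\mathrm{rows}\,P)=\mathrm{conv}(\mathrm{rows}\,P')$ with $Q'$ having exactly the same structure as $Q$, so the ``role exchange plus convex-hull accounting'' you appeal to cannot rule anything out; the part you yourself flag as the delicate accounting is the entire content of the lemma and is left unproved (and is unreachable in the form $Q=I$ you aim for). Two further hypotheses you invoke are not available: the symmetric argument for $Q'$ needs the rows of $P'$ to be pairwise distinct, which is not assumed, and your final step (``square stochastic with stochastic inverse is a permutation'') needs $P$ and $P'$ to have the same number of rows, which is never established --- e.g.\ splitting one signal of $P$ into two copies gives $M_{PP'}M_{P'P}=I$ with rectangular, non-permutation factors.

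The paper's argument is structured quite differently and, tellingly, what it actually derives (and what is used in the application) is the row-matching conclusion that $P$ coincides with $P'$ up to a permutation of rows. It works directly on the identity $M_{PP',i}\,M_{P'P}=e_i^T$ for any row $P_i$ that is not a positive combination of the remaining rows: non-negativity then pins down a unique $j$ with $M_{PP',ij}=M_{P'P,ji}=1$ and $P'_j=P_i$; such rows are peeled off, and an iterative substitution argument shows that a matrix with at least two pairwise non-replicated rows cannot have every row be a positive combination of the others. That substitution argument is exactly the ingredient your plan lacks: it is what excludes (in the paper's setting) the ``interior'' rows that generate your transient states, whereas your approach takes their absence for granted.
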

\begin{proof}
	Let $P_i=\left( p_{i1},p_{i2},\ldots \right)$ be $i$th row of $P$. Suppose $P_i$ can not be represented as positive combination of $P_{-i}$'s. Then by construction $P_i=M_{PP'i}\cdot M_{P'P}\cdot P$, we have:
	\begin{align*}
		M_{PP'i}\cdot M_{P'P}=( \underbrace{0,\cdots,0}_{i-1},1,0,\cdots,0 )
	\end{align*}
	Then by non-negativity of stochastic matrices, suppose $M_{PP'ij}>0$, then $M_{P'Pj}$ are all $0$ except $M_{P'Pji}$. Then for all such rows $j$, we have $M_{PP'j}$ be a vector with only $i$th column being non-zero. However this suggests they are replicated rows. So the only possibility is that $j$ s.t. $M_{PP'ij}>0$ is unique. And
	\begin{align*}
		M_{PP'ij}\times M_{P'Pji}=1
	\end{align*}
	Since stochastic matrices have elements no larger than $1$, it must be $M_{PP'ij}=M_{P'Pji}=1$. This is equivalently saying $P'_j=P_i$. Since permutation of rows of $P'$ doesn't affect our statement, let's assume $P'_i=P_i$ afterwards for simplicity.\par
	So far we showed that if $P_i$ is not a positive combinations of $P_{-i}$'s, then $P'_i=P_i$. We do the following transformation: $\widetilde{P},\widetilde{P}'$ are $P,P'$ removing $i$th row. $\widetilde{M}_{PP'},\widetilde{M}_{P'P}$ are $M_{PP'},M_{P'P}$ removing $i$th row and column. It's easy to verify that we still have:
	\begin{align*}
		\widetilde{P}'&=\widetilde{M}_{P'P}\cdot\widetilde{P}\\
		\widetilde{P}&=\widetilde{M}_{PP'}\cdot\widetilde{P}'
	\end{align*}
	and $\widetilde{M}_{PP'},\widetilde{M}_{P'P}$ still being stochastic matrices since previous argument shows $M_{PP'ii}$ and $M_{P'Pii}$ being the only non-zero element in their rows.  Since they are both $1$, they must also be only non-zero element in their columns. So removing them doesn't affect the matrices being stochastic matrices.\par
   	Now we can repeat this process iteratively until any row $\widetilde{P}_i$ will be a positive combination of $\widetilde{P}_{-i}$. If $\widetilde{P}$ has one unique row, then the proof is done. We essentially showed that $P=P'$ (up to permutation of rows). Therefore we only need to exclude the possibility of $\widetilde{P}$ having more than one rows.\par
	Suppose $\widetilde{P}$ has $n$ rows. Then $\widetilde{P}_1$ is a positive combination of $\widetilde{P}_{-i}$'s:
	\begin{align*}
		\widetilde{P}_1=\sum_{i=2}^{n} a^1_i\widetilde{P}_i
	\end{align*}
	and $\widetilde{P}_2$ is a positive combination of $\widetilde{P}_{-i}$'s:
	\begin{align*}
		\widetilde{P}_2=&\sum_{i\neq 2}^{n}a^2_i\widetilde{P}_i\\
		=&a^2_1\widetilde{P}_1+\sum_{i> 2}^{n}a^2_i\widetilde{P}_i\\
		=&a^2_1a^1_2\widetilde{P}_2+\sum_{i>2}^n\left( a^2_i+a^2_1a^1_i \right)\widetilde{P}_i
	\end{align*}
	Since all rows in $\widetilde{P}$ are non-negative (and strictly positive in some elements). This is possible only in two cases:
	\begin{itemize}
		\item \emph{Case 1}. $a^2_1a^1_2=1$ and $\sum_{i>2}\left( a^2_i+a^2_1a_i^1 \right)=0$. This implies $\widetilde{P}_1=a^{1}_2\widetilde{P}_2$. Contradicting non-replication.
		\item \emph{Case 2}. $a^2_1a^1_2<1$. Then $\widetilde{P}_2$ is a positive combination of $\widetilde{P}_{i>2}$. Of course $\widetilde{P}_1$ is also a positive combination of $\widetilde{P}_{i>2}$.
	\end{itemize}
	Now by induction suppose $\widetilde{P}_1,\ldots,\widetilde{P}_i$ are positive combinations of $\widetilde{P}_{j>i}$. Then:
	\begin{align*}
		\widetilde{P}_{i+1}=&\sum_{j=1}^i a^{i+1}_j \widetilde{P}_j+\sum_{j=i+1}^na^{i+2}_j\widetilde{P}_j\\
		&=\sum_{k=i}^{n}\left(\sum_{j=1}^i a^{i+1}_j a^j_k\right)\widetilde{P}_k+\sum_{j=i+2}^n\widetilde{P}_j\\
		&=\sum_{j=1}^ia^{i+1}_ja^j_{i+1}\widetilde{P}_{i+1}+\sum_{k=i+2}^{n}\left( \sum_{j=1}^ia_j^{i+1}a_k^j+a^{i+1}_j \right)\widetilde{P}_j
	\end{align*}
	Similar to previous analysis, non-replication implies $\sum_{j=1}^ia_j^{i+1}<1$ and $\widetilde{P}_{i+1}$ is a positive combination of $\widetilde{P}_{j>i+1}$. Then by replacing $\widetilde{P}_{i+1}$ in combination of all $\widetilde{P}_{j\le i}$, we can conclude that $\widetilde{P}_1,\ldots,\widetilde{P}_{i+1}$ are all positive combinations of $\widetilde{P}_{j>i+1}$. Finally, by induction we have all $\widetilde{P}_{i<n}$ being positive combination of $\widetilde{P}_n$. However, this contradicts non-replication. To sum up, we proved by contradiction that $\widetilde{P}$ has one unique row. Therefore, $P$ must be identical to $P'$ up to permutations.
\end{proof}

\end{document}